\newcolumntype{L}{>{$}l<{$}} 
\newtcbox{\mymath}[1][]{%
    nobeforeafter, math upper, tcbox raise base,
    enhanced, colframe=blue!30!black,
    colback=white, boxrule=1pt,
    #1}
\newacro{QSP}{quantum state preparation}
\newacro{SOCP}{second order cone program}
\newacro{QSVT}{quantum singular value transform}
\newacro{SVD}{singular value decomposition}
\newacro{QRAM}{quantum random access memory}
\newcommand{\shiftright}[2]{\makebox[#1][r]{\makebox[0pt][l]{#2}}}
\newcommand{\numq}[2]{\raisebox{#1}{/${}^{#2}$}}
\newcommand{\rb}[2]{\raisebox{#1}{#2}}
\newcommand{\mltg}[2]{\gate[#1]{#2}}
\newtheorem{theorem}{Theorem}[section]
\newtheorem{lemma}[theorem]{Lemma}
\newtheorem{corollary}{Corollary}[theorem]
\newcommand\starfootnotetext[1]{%
  \begingroup
  \renewcommand\thefootnote{*}\footnotetext{#1}%
  \addtocounter{footnote}{-1}%
  \endgroup
}
\begin{document}
\title{Spacetime-Efficient Low-Depth Quantum State Preparation with Applications} 
\author{Kaiwen Gui${}^{\,*,}$}
\affiliation{Amazon Web Services, WA, USA}
\affiliation{Pritzker School of Molecular Engineering, University of Chicago, IL, USA}
\affiliation{Department of Computer Science, University of Chicago, IL, USA}
\author{Alexander M.~Dalzell${}^{\,*,}$}
\affiliation{AWS Center for Quantum Computing, Pasadena, CA, USA}
\author{Alessandro Achille}
\affiliation{AWS AI Labs, Pasadena, CA, USA}
\author{Martin Suchara}
\affiliation{Amazon Web Services, WA, USA}
\author{Frederic T. Chong}
\affiliation{Department of Computer Science, University of Chicago, IL, USA}


\begin{abstract}
We propose a novel deterministic method for preparing arbitrary quantum states. When our protocol is compiled into CNOT and arbitrary single-qubit gates, it prepares an $N$-dimensional state in depth $O(\log(N))$ and \textit{spacetime allocation} (a metric that accounts for the fact that oftentimes some ancilla qubits need not be active for the entire circuit) $O(N)$, which are both optimal. When compiled into the $\{\mathrm{H,S,T,CNOT}\}$ gate set, we show that it requires asymptotically fewer quantum resources than previous methods. Specifically, it prepares an arbitrary state up to error $\epsilon$ with optimal depth of $O(\log(N) + \log (1/\epsilon))$ and spacetime allocation $O(N\log(\log(N)/\epsilon))$, improving over $O(\log(N)\log(\log (N)/\epsilon))$ and $O(N\log(N/\epsilon))$, respectively. We illustrate how the reduced spacetime allocation of our protocol enables rapid preparation of many disjoint states with only constant-factor ancilla overhead---$O(N)$ ancilla qubits are reused efficiently to prepare a product state of $w$ $N$-dimensional states in depth $O(w + \log(N))$ rather than $O(w\log(N))$, achieving effectively constant depth per state.  We highlight several applications where this ability would be useful, including quantum machine learning, Hamiltonian simulation, and solving linear systems of equations. We provide quantum circuit descriptions of our protocol, detailed pseudocode, and gate-level implementation examples using Braket.
\end{abstract}

\maketitle
\starfootnotetext{These two authors contributed equally; \href{mailto:kgui@uchicago.edu}{kgui@uchicago.edu}, \href{mailto:dalzel@amazon.com}{dalzel@amazon.com}}
\section{Introduction}\label{sec:intro}
\setstcolor{magenta}
Quantum state preparation (QSP) is a crucial subroutine in many proposed quantum algorithms that claim speedup over their classical counterparts in applications such as quantum machine learning \cite{biamonte2017quantum, lloyd2014quantum, kerenidis2017quantum, rebentrost2018quantum, kerenidis2019q, kerenidis2021quantum, rebentrost2014quantum, schuld2021machine}, simulating quantum systems \cite{berry2015simulating, berry2015hamiltonian, low2017optimal, low2019hamiltonian}, solving linear systems of equations \cite{harrow2009quantum, ambainis2012variable, wossnig2018quantum}, and synthesizing unitary operations \cite{low2018trading, sun2023asymptotically, yuan2023optimal}.

\begin{figure}[!t]
    \centering
    \includegraphics[width=0.48\textwidth]{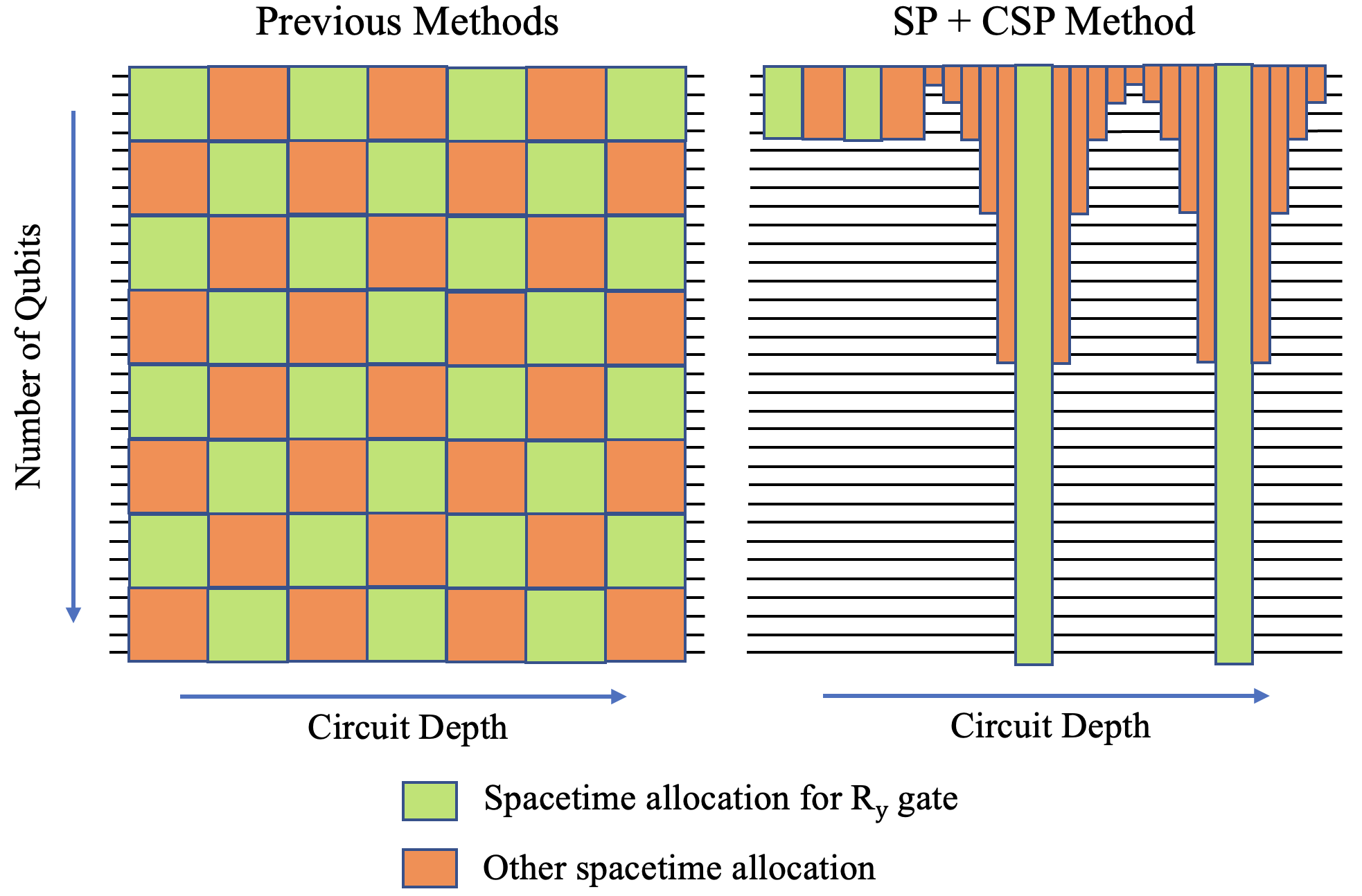}
    \caption{Illustration of spacetime allocation in our state preparation method (SP+CSP) vs. previous methods. Green regions correspond to spacetime allocated for arbitrary single-qubit rotations. Orange regions correspond to spacetime allocated to qubits that are active (i.e., not in $\ket{0}$) without single-qubit rotations; they might experience, e.g., CNOT or Toffoli gates or might be idling while entangled with other qubits. In our method, the spacetime allocation is asymptotically smaller than the product of qubit count and depth.}
    \label{fig:SA_comps}
\end{figure}

The state preparation problem is to create an arbitrary quantum state of the form:
\begin{equation} \label{Eq:quantum_state}
    \ket{\psi} = \frac{1}{\lVert \textbf{x}\rVert}\sum_{i=0}^{2^n-1} x_i \ket{i}
\end{equation}
where $n$ denotes the number of qubits, $\lVert \cdot \rVert$ denotes the standard Euclidean norm, and $\mathbf{x}$ is a $2^n$-dimensional vector with components $x_i \in \mathbb{C}$. In many applications, it is sufficient to let $x_i \in \mathbb{R}^+ \cup \{0\}$, and henceforth, we assume this for simplicity; the additional phase information can be easily incorporated with minimal overhead (see App.~\ref{app:complex-amplitudes}). We also denote $N = 2^n$ as the total number of parameters encoded, the same as the dimension of the vector space.

\begingroup
\begin{table*}[t]
\setlength{\tabcolsep}{2pt} 
\renewcommand{\arraystretch}{1.4} 
\centering
\begin{adjustbox}{max width=\textwidth}
 \begin{tabular}{c|c|c||c|c} 
 Gate Set &  \multicolumn{2}{c}{\{U(2), CNOT\}} & \multicolumn{2}{c}{$\{\mathrm{H, S, T, CNOT}\}$}\\
 \hline\hline 
& Depth & Spacetime Allocation & Depth &  Spacetime Allocation \\
 \hline
 \hline
 Sun et al.~\cite{sun2023asymptotically} & $O(\log(N)\log(\log(N)))$ & $\Theta(N)$ & $O(\log(N)\log(\log(N))\log(N/\epsilon))$ & $O(N\log(N/\epsilon))$ \\
 \hline
 Sun et al.~\cite{sun2023asymptotically} & $\Theta(\log(N))$ & $O(N\log(N))$ & $O(\log(N)\log(N/\epsilon))$ & $O(N\log(N)\log(N/\epsilon))$ \\
 \hline
 Zhang et al. \cite{zhang2022quantum} & $\Theta(\log(N))$ & $O(N\log(N))$ & $O(\log(N)\log(\log(N)/\epsilon))$ & $O(N\log(N)\log(\log(N)/\epsilon))$ \\
 \hline

Yuan et al.~\cite{yuan2023optimal} & $\Theta(\log(N))$ & $\Theta(N)$ & $\Omega(\log(N)\log(\log(N)/\epsilon))^{\dagger}$ & $\Omega(N\log(\log(N)/\epsilon))^{\dagger}$ \\
 \hline
 Clader et al.~\cite{clader2023quantum} & $O(\log^2(N))$ & $O(N\log^2(N))$ & $O(\log^2(N) + \log(1/\epsilon))$ & $O(N(\log^2(N) + \log(1/\epsilon)))$ \\
 \hline
  \textbf{This Work (SP)} & $\bm{\Theta(\log(N))}$ & $\bm{O(N\log(N))}$ & $\bm{\Theta(\log(N) + \log(1/\epsilon))}$ & $\bm{O(N\log(N/\epsilon))}$ \\
 \hline
 \textbf{This Work (SP+CSP)} & $\bm{\Theta(\log(N))}$ & $\bm{\Theta(N)}$ & $\bm{\Theta(\log(N) + \log(1/\epsilon))}$ & $\bm{O(N\log(\log(N)/\epsilon))}$ \\
 \hline
 \hline
 \end{tabular}
 \end{adjustbox}
\caption{Comparison of our results to previous state-of-the-art low-depth QSP methods.  $N = 2^n$ is the total number of basis/parameters, and $\epsilon$ is the error precision parameter. Note that Sun et al.~\cite{sun2023asymptotically}  proposed additional variations of the QSP protocol with larger circuit depth and fewer ancilla qubits that all have spacetime allocation of $\Theta(N)$ and $O(N\log(N/\epsilon))$. The dagger $^{\dagger}$ in the row for Ref.~\cite{yuan2023optimal} indicates lower bounds from our calculation because the paper did not analyze the Clifford + T costs. Note that T depth is optimized in Clader et al.~\cite{clader2023quantum}. We also note that our SP+CSP protocol allows some ancilla registers to be dirty, similar to that in \cite{low2018trading}. See the corresponding theoretical lower bound for spacetime allocation using \{H, S, T, CNOT\} gate set in Sec. \ref{sec:LB}.}
 \label{table:QRE_comps}
\end{table*}
\endgroup

Recent advancement by Sun et al.~\cite{sun2023asymptotically} gave an optimal construction that creates $\ket{\psi}$ with $\Theta(2^n/n)$\footnote{Throughout the paper, we use $O(f(n,1/\epsilon))$ notation to indicate that the cost is upper bounded by a constant number times $f(n,1/\epsilon)$ as $n,1/\epsilon \rightarrow \infty$, $\Omega(f(n,1/\epsilon))$ to denote lower bounds, and $\Theta(f(n,1/\epsilon))$ when there are upper and lower bounds that match up to constant factors.} circuit depth using arbitrary single-qubit gates and two-qubit CNOT gates (henceforth called the $\{\mathrm{U}(2), \mathrm{CNOT}\}$ gate set) and no ancilla qubits. On the other hand, if ancilla qubits are available, one can dramatically reduce the circuit depth. It is preferable to use low-depth protocols when the state-preparation procedure must be completed quickly or repeated many times sequentially. Oftentimes, the quantum algorithm that follows the preparation of $\ket{\psi}$---for instance, making a machine learning inference---runs in depth $\mathrm{poly}(n)$, exponentially faster than the ancilla-free QSP implementations, so low-depth state preparation is vital to make the overall runtime reasonable. 

Toward that end, recent deterministic methods \cite{sun2023asymptotically, zhang2022quantum, rosenthal2021query} have achieved optimal $\Theta(n)$ quantum circuit depth in the $\{\mathrm{U(2),CNOT}\}$ gate set. However, this exponential depth reduction comes with an exponential overhead in space. In other words, one would need an exponential number of ancilla qubits. 
Recent work \cite{yuan2023optimal} is able to achieve $\Theta(n)$ depth with only $\Theta(2^n/n)$ ancilla qubits, which is optimal.

When considering the total required quantum resource tradeoffs, metrics such as circuit depth, circuit size, and qubit count are typically considered. Assuming the physical architecture allows one to perform gates in parallel, the circuit depth is a proxy for the overall runtime of the computation, and the qubit count represents the overall amount of space that must be allocated to the computation. We now propose another metric---\textbf{spacetime allocation}---the total time that each individual qubit must be active (i.e., not in the $\ket{0}$ state), summed over all qubits. The spacetime allocation is bounded below by the circuit size and bounded above by the product of the qubit count and the circuit depth, but it carries a distinct operational meaning. In a model where one wishes to perform many distinct ancilla-intensive jobs (such as rapidly preparing many independent $n$-qubit states) with a fixed number of ancilla qubits, the availability of fresh ancillae in the state $\ket{0}$ becomes the algorithmic bottleneck. Assuming ancillae can be reallocated from one job to another as soon as they are returned to the $\ket{0}$ state, the overall runtime to complete the batch of jobs is determined by the spacetime allocation of the jobs rather than their depth or size. State-preparation algorithms that are optimal in terms of depth or size are not necessarily also optimal in terms of spacetime allocation. We will show that when compiled into the $\{\mathrm{U(2),CNOT}\}$ gate set, our state preparation protocol is simultaneously optimal in depth, size, and spacetime allocation up to constant factors. This feat has already been achieved by the state preparation method of Ref.~\cite{yuan2023optimal}, which has $\Theta(n)$ depth and $\Theta(2^n/n)$ qubit count, along with $\Theta(2^n)$ size and spacetime allocation.

However, in practice, it may not be possible to perform arbitrary single-qubit gates to exact precision, and thus the $\{\mathrm{U}(2), \mathrm{CNOT}\}$ gate set may not be applicable.  In this case, we cannot hope to prepare the state $\ket{\psi}$ exactly; rather, given an error parameter $\epsilon$, we seek to prepare a state $\ket{\tilde{\psi}}$ such that
\begin{equation}
    \left\lVert \ket{\psi} - \ket{\tilde{\psi}} \right\rVert \leq \epsilon\,.
\end{equation}
For example, this is the case in most proposals for fault-tolerant quantum computation based on error-correcting codes, where logical single-qubit gates are approximately performed using a sequence of logical gates drawn from a discrete gate set (see, e.g.,~\cite{ross2016optimal}). 
A common choice of discrete gate set is $\{\mathrm{H, S, T, CNOT}\}$ (defined in the next section); we quote the scaling of the resource cost of our protocol for approximate QSP when compiled into this gate set. 
We use the terminology \textbf{approximate spacetime allocation} and \textbf{exact spacetime allocation}, denoted as SA$_{\mathrm{approx}}$ and SA$_{\mathrm{exact}}$, to differentiate the cost in the two models. Similarly, we denote the circuit depth in the two models by $\mathrm{D}_{\mathrm{approx}}$ and $\mathrm{D}_{\mathrm{exact}}$.

Crucially, in the $\{\mathrm{H,S,T, CNOT}\}$ gate set, for constant target error $\epsilon$, the method of Ref.~\cite{yuan2023optimal} does not achieve $\Theta(n)$ depth. While they do not explicitly report the depth in this gate set, this can be seen by the fact that their circuit has $O(2^n)$ arbitrary single-qubit gates spread out over $O(n)$ distinct layers; to convert to the discrete gate set while incurring constant overall error, each single-qubit gate must be approximately decomposed into a sequence of depth at least $\Omega(\log(n))$,
yielding total depth at least $\Omega(n\log(n))$. Our state preparation method will have depth $\Theta(n)$ even in the discrete gate set, when $\epsilon$ is taken as a constant.

The value of minimizing with respect to spacetime allocation is also apparent in the context of realistic hardware considerations. For example, in gate-based near-term devices without error correction, reducing the spacetime allocation (even as circuit size remains the same) amounts to reducing the time that qubits are left idling. By deallocating and reinitializing ancillas in $\ket{0}$, this kind of idling time can be minimized, and the impacts of noise can be made minimal. Moreover, idling of logical qubits is also costly in fault-tolerant architectures, which require continuous error correction even when no gates are being performed. Each round of error correction requires a set of parity check circuits and measurements, followed by a nontrivial classical decoding calculation, all of which contribute to the overall energy expenditure of the computation.

Another feature of our constructions is that they are garbage-free. This contrasts with some previous work (e.g., \cite{babbush2018encoding, araujo2021divide}), which perform a relaxed version of the QSP task, where, instead of preparing the $n$-qubit state $\ket{\psi}$ from Eq.~\eqref{Eq:quantum_state}, one aims to prepare an $(n+a)$-qubit state $\ket{\Psi}$ given by
\begin{equation}
    \ket{\Psi} = \frac{1}{\lVert \textbf{x} \rVert}\sum_{i=0}^{2^n-1} (x_i \ket{i} \otimes \ket{\text{garbage}_i})\,.
\end{equation}
Typically, $a$ would scale at least linearly in $n$; these $a$ qubits are left entangled with the $n$ data qubits. 

Allowing garbage makes the QSP task easier, and in some applications, garbage is tolerable (since the prepared state only acts as control qubits). However, as long as the $a$ ancilla qubits are entangled with the data, they cannot be used as fresh $\ket{0}$ ancillae for other tasks, and they continue to contribute toward the spacetime allocation of the QSP protocol. Moreover, it is important to emphasize that this is a fundamentally different task. After tracing out the garbage, the $n$-qubit state that results is a mixed state, which lacks the coherence of the pure state in Eq.~\eqref{Eq:quantum_state}. This is especially true when one wants to directly manipulate the prepared states (e.g., in quantum linear system solvers and quantum machine learning applications) rather than only using them as control bits.

Lastly, we point out an interesting feature of our state preparation protocol: a constant fraction of the $O(N)$ ancilla qubits needed to achieve logarithmic depth can be \emph{dirty}, that is, they can start in an arbitrary state, and they will be returned to the state they began in at the end of the circuit. Dirty qubits are appealing because they can be prepared without spending resources to fault tolerantly prepare high-fidelity $\ket{0}$ states. Additionally, dirty qubits can be qubits from another computation experiencing a long period of idling, provided they are returned to their original state before the end of the idling period. This situation is especially salient when using the $\{\mathrm{H,S,T,CNOT}\}$ gate set, since our state preparation circuits involve a layer where some qubits are idling and others experience a single-qubit gate sequence approximating a single-qubit rotation---the idling qubits can be used as dirty ancillae for other state preparation instances. The possibility of state preparation ancillae being dirty has been previously explored in Ref.~\cite{low2018trading}, which studied tradeoffs between the number of T gates and the number of dirty ancillae. We believe with additional innovations, it may be possible for the number of clean qubits to be further reduced to be asymptotically better than $O(N)$. 

The contributions of this work are the following:
\begin{enumerate} 
    \item We propose a novel, deterministic, garbage-free quantum state preparation method, which we call SP+CSP (Section \ref{sec:SP+CSP}), and we show that in the $\{\mathrm{U(2),CNOT}\}$ gate set it simultaneously achieves optimal depth and spacetime allocation.  We also show that in the $\{\mathrm{H,S,T,CNOT}\}$ gate set, it achieves depth and spacetime allocation that are both asymptotically superior to previous methods (see Table \ref{table:QRE_comps}). A matching lower bound shows that the depth-scaling is optimal in this gate set, whereas the best lower bound on spacetime allocation leaves open the possibility of improvements by logarithmic factors.
    \item We show how the optimal spacetime allocation of our protocol allows it to prepare multiple copies of a state more rapidly than other methods 
    (Section \ref{sec:multiple_copy}).
    \item We discuss several applications of the optimal spacetime allocation quantum state preparation protocol (Section \ref{sec:applications}), including
        \begin{enumerate}
            \item Quantum Machine Learning
            \item Hamiltonian simulation
            \item Solving linear systems with HHL-style quantum algorithms
        \end{enumerate}
    \item We provide a circuit-level implementation for the SP+CSP method (Sec.~\ref{sec:SP+CSP} \& App.~\ref{sec:appendix}) with detailed pseudocode.
    \item We provide two gate-level implementation examples using Braket (Sec.~\ref{sec:code}).
\end{enumerate}

We summarize our result compared to other state-of-the-art results in Table \ref{table:QRE_comps}, with a pictorial depiction in Fig. \ref{fig:SA_comps}.

To achieve these results, we build from the work of Clader et al.~\cite{clader2023quantum}, who gave a QSP method with depth $O(n)$ and qubit count $O(2^n)$ under the assumption that the FANOUT-CNOT operation---that is, the product of up to $O(2^n)$ CNOT gates sharing the same control qubit---could be performed in a single time step.  Our protocol (we call it the \textbf{SP} protocol) eliminates the need for FANOUT-CNOT at the expense of only constant-factor overhead in the number of ancilla qubits and the depth. The high-level idea is to use a tree-like data copying circuit consisting only of CNOT gates to copy the control bit into many ancillas, so it can then be used to control many operations in parallel (see App.~\ref{sec:data_copy_circ}). However, this observation is not alone sufficient---to achieve $O(n)$ overall depth, we require a delicate method that alternates between layers of control-bit copying and layers of controlled operations (see App.~\ref{sec:SPF_circ}). Like that of Clader et al., this circuit has the feature that only a constant number of layers involve single-qubit rotations (which incur depth $O(\log(n/\epsilon))$ when approximately decomposed into a finite gate set), allowing the overall depth to scale as $O(n+\log(1/\epsilon))$. 
In comparison, previous methods (e.g., \cite{sun2023asymptotically, zhang2022quantum}) assemble these single-qubit rotations across at least $n$ layers, which would then make the total depth at least $\Omega(n\log (n/\epsilon))$.

The SP protocol (and protocols designed in previous works such as \cite{sun2023asymptotically} and \cite{zhang2022quantum}) does not achieve optimal spacetime allocation since it fundamentally requires $O(2^n)$ ancilla qubits to be entangled with the $O(n)$ data qubits for a constant fraction of the $O(n)$ circuit depth, leading to $O(n2^n)$ spacetime allocation (similar to what is illustrated on the left side of Fig.~\ref{fig:SA_comps}). To circumvent this, we pursue an additional idea: prepare roughly half of the qubits using the SP method, and then perform \emph{controlled state preparation} (\textbf{CSP}) from those qubits into the rest of the qubits.  Both SP and CSP require only $O(1)$ layers of single-qubit rotations, preserving the $O(n+\log(1/\epsilon))$ depth for the approximate compilation. Moreover, the full $O(2^n)$ ancilla qubits are only needed very briefly during the CSP procedure, and most can be freed up after only $O(1)$ depth (illustrated as the right side of Fig.~\ref{fig:SA_comps}). Ultimately, we show that $O(2^n)$ spacetime allocation can be achieved. We also give a detailed circuit implementation that shows the difference in spacetime allocation requirements in App.~\ref{sec:app_LOADF}. 


\section{Background}
\label{sec:quantum_computing_background}
This section will introduce the required quantum gates and provide more details about the spacetime allocation metric, as well as relevant lower bounds and a summary of the relevant prior work. 

\subsection{Quantum Gates}
In this work, we will use the following quantum gates: X gate, S gate, T gate, Hadamard (H) gate, $\mathrm{R}_y$ gate, CNOT gate, SWAP gate, controlled-$\mathrm{R}_y$ gate, Fredkin gate, and Toffoli gate.
\begingroup
\renewcommand{\arraystretch}{1.5} 
\begin{table}[h]
    \centering
    \begin{tabular}{ c  c }
    \toprule \\[0.1cm]
    \textbf{Gate} & \textbf{Unitary}\\[0.1cm]
    X & $\begin{pmatrix} 0 & 1 \\[2pt] 1 & 0 \end{pmatrix}$ \\[10pt]
    \vspace{0.2cm}
    Hadamard (H) & $\begin{pmatrix} \frac{1}{\sqrt{2}} & \frac{1}{\sqrt{2}} \\[2pt]  \frac{1}{\sqrt{2}} & -\frac{1}{\sqrt{2}} \end{pmatrix}$ \\
    \vspace{0.2cm}
    Y-Rotation ($\mathrm{R}_y(\theta)$) & $\begin{pmatrix} \cos{\frac{\theta}{2}} & -\sin{\frac{\theta}{2}} \\[2pt]  \sin{\frac{\theta}{2}} & \cos{\frac{\theta}{2}} \end{pmatrix}$ \\
    \vspace{0.2cm}
    CNOT & $\begin{pmatrix} 1 & 0 & 0 & 0 \\[2pt]  0 & 1 & 0 & 0 \\[2pt]  0 & 0 & 0 & 1 \\[2pt]  0 & 0 & 1 & 0 \end{pmatrix}$ \\
    \vspace{0.2cm}
S & $\begin{pmatrix} 1 & 0 \\[2pt]  0 & i \end{pmatrix}$ \\
    T & $\begin{pmatrix} 1 & 0 \\[2pt]  0 & e^{\frac{i\pi}{4}} \end{pmatrix}$ \\[0.4cm]
    \hline
    \end{tabular}

\caption{Elementary Quantum Gates. Note that there is some redundancy as $\mathrm{S} = \mathrm{T}^2$ and $\mathrm{X} = \mathrm{H}\mathrm{S}^2\mathrm{H}$. The $\mathrm{R}_y$ gate can be approximated to arbitrary precision as a sequence of $\mathrm{H}$, $\mathrm{S}$, and $\mathrm{T}$ gates (see, e.g.,~\cite{ross2016optimal})}
\label{tab:gate_matrices}
\end{table}
\endgroup

Table~\ref{tab:gate_matrices} shows the elementary quantum gates required for the $\{\mathrm{U}(2), \mathrm{CNOT}\}$ gate set and the $\{\mathrm{H, S, T, CNOT}\}$ gate set. We also summarize all other required quantum gates that can be constructed in Fig.~\ref{fig:swap_decomposition}, \ref{fig:toffoli_decomposition}, \ref{fig:cswap_decomposition}, and \ref{fig:CRy_decomposition}.
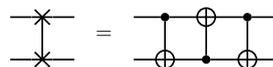
\begin{figure}[h!]
\centering
\scalebox{0.8}{
\begin{quantikz}[row sep={2em,between origins}, column sep=1em, align equals at=1.5]
    &\swap{1} &\qw  & \rb{-0.7cm}{=} & &\ctrl{1} &\targ{} &\ctrl{1} &\qw \\
    &\targX{} &\qw  & & &\targ{} &\ctrl{-1} &\targ{} &\qw
\end{quantikz}
}
\caption{SWAP Gate Decomposition Using CNOT}\label{fig:swap_decomposition}
\end{figure}


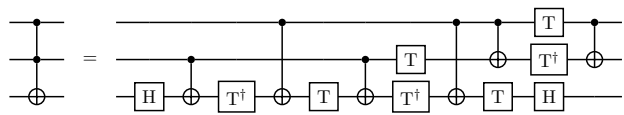
\begin{figure}[h!]
\centering
\scalebox{0.7}{
\begin{quantikz}[row sep={2em,between origins}, column sep=1em]
 &\ctrl{1} &\qw & & & \qw &\qw &\qw &\ctrl{2} &\qw &\qw &\qw &\ctrl{2} &\ctrl{1} &\gate{\mathrm{T}} &\ctrl{1} &\qw \\
 &\ctrl{1}  &\qw &= & & \qw  &\ctrl{1} &\qw &\qw &\qw &\ctrl{1} &\gate{\mathrm{T}} &\qw &\targ{} &\gate{\mathrm{T}^{\dagger}} &\targ{}& \qw \\
 &\targ{} &\qw & & &\gate{\mathrm{H}} &\targ{} &\gate{\mathrm{T}^{\dagger}} &\targ{} &\gate{\mathrm{T}} &\targ{} &\gate{\mathrm{T}^{\dagger}} &\targ{} &\gate{\mathrm{T}} &\gate{\mathrm{H}} &\qw & \qw \\
\end{quantikz}
}
\caption{Toffoli (CCNOT) Gate Decomposition Using H, T, CNOT \cite{shende2008cnot}}\label{fig:toffoli_decomposition}
\end{figure}

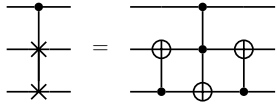
\begin{figure}[h!]
\centering
\scalebox{0.8}{
\begin{quantikz}[row sep={2em,between origins}, column sep=1em]
 &\ctrl{2} &\qw & & &\qw &\ctrl{1} &\qw &\qw \\
 &\targX{} &\qw &= & &\targ{} &\ctrl{1}& \targ{} &\qw \\
 &\swap{-1} &\qw & & &\ctrl{-1} &\targ{} &\ctrl{-1} &\qw \\
\end{quantikz}
}
\caption{Fredkin (CSWAP) Gate Decomposition Using CNOT, Toffoli \cite{smolin1996five}}\label{fig:cswap_decomposition}
\end{figure}


\begin{figure}[h!]
     \centering
     \begin{subfigure}[b]{0.5\textwidth}
     \scalebox{0.75}{
         \centering
\begin{quantikz}[row sep={2em,between origins}, column sep=1em,align equals at=1.5]
         &\ctrl{1} &\qw &\rb{-0.8cm}{=}&  & \ctrl{1} & \qw &\ctrl{1} &\qw &\qw\\
         &\gate{\mathrm{R}_y(\theta)} &\qw & &  &\targ{} &\gate{\mathrm{R}_y(-\theta/2)} &\targ{} & \gate{\mathrm{R}_y(\theta/2)} &\qw \\
\end{quantikz}
}
     \end{subfigure}
     \\
     \begin{subfigure}[b]{0.5\textwidth}
     \scalebox{0.8}{
         \centering
         \begin{quantikz}[row sep={2em,between origins}, column sep=1em]
         &\ctrl{1} &\qw & & & \ctrl{1} & \qw &\ctrl{1} &\qw &\qw\\
         &\ctrl{1} &\qw & = & & \ctrl{1} & \qw &\ctrl{1} &\qw &\qw\\
         &\gate{\mathrm{R}_y(\theta)} &\qw & & &\targ{} &\gate{\mathrm{R}_y(-\theta/2)} &\targ{} & \gate{\mathrm{R}_y(\theta/2)} &\qw \\
\end{quantikz}
}
     \end{subfigure}
        \caption{Controlled Rotation Gates (C$\mathrm{R}_y$ and CC$\mathrm{R}_y$) Decomposition Using $\mathrm{R}_y$, CNOT, Toffoli}
        \label{fig:CRy_decomposition}
\end{figure}
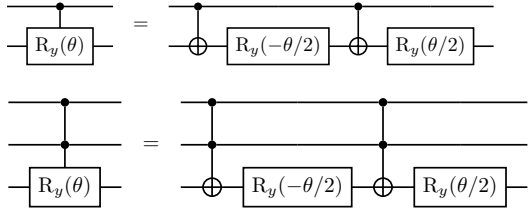

\subsection{Spacetime Allocation --- QPU ``Core Hours"}
In classical computing, ``core hours" \cite{walker2009real} refers to the number of CPUs used to run a certain computing task multiplied by the duration of the job in hours. We now define the a quantum analogue of the classical ``core hours" as the quantum spacetime allocation cost, equivalently defined in either of the following ways
\begin{itemize}
    \item Sum of the individual duration (depth) that each logical qubit is active (i.e., not in the $\ket{0}$ state)
    \item Sum of the number of active qubits in each layer
\end{itemize}
Or quantitatively: 
\begin{equation}
    \mathrm{SA} := \sum_{i=0}^{q-1} d_i = \sum_{t=0}^{d-1} q_t
\end{equation}
where $d$ is the total depth, $q$ is the total number of qubits, $d_i$ is the active time (depth) for the $i^{\mathrm{th}}$ qubit, and $q_t$ is the number of active qubits at layer $t$.

In the case that all qubits are active for the entirety of the computation, the spacetime allocation is simply the product of the circuit depth and the total number of qubits. However, if most of the ancilla qubits are needed only briefly and can be reset to the $\ket{0}$ state before the end of the computation, the spacetime allocation can be significantly less \cite{ding2020square}. In this work, we will show that freeing up the ancilla qubits at early times can bring \textbf{asymptotic} spacetime allocation advantage for quantum state preparation.

\subsection{Lower Bounds}\label{sec:LB}
Previous work has shown QSP lower bounds for circuit depth, size, number of qubits, and the corresponding tradeoffs therein \cite{sun2023asymptotically,plesch2012quantum} when using only arbitrary single-qubit gates and CNOT gates. Since each gate in the circuit acts on at least one qubit, at least one qubit must be active per gate in each layer. Thus, the number of gates provides a lower bound for the spacetime allocation. 

\begin{lemma} \label{le:single_q_SA}
    If a quantum circuit has circuit size $\mathrm{C}$, its spacetime allocation must be at least $\mathrm{C}$.   
\end{lemma}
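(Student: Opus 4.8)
The plan is to unpack the definition of spacetime allocation and compare it term by term with a counting of the gates in the circuit. Recall that $\mathrm{SA} = \sum_{t=0}^{d-1} q_t$, where $q_t$ is the number of active qubits in layer $t$, and that the circuit size $\mathrm{C}$ is the total number of gates, $\mathrm{C} = \sum_{t=0}^{d-1} g_t$, where $g_t$ is the number of gates in layer $t$. So it suffices to show $q_t \geq g_t$ for every layer $t$.

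First I would fix a layer $t$ and observe that, by the definition of a circuit layer, the gates appearing in layer $t$ act on pairwise disjoint sets of qubits (this is what it means for them to be executable in parallel in a single time step). Each gate acts on at least one qubit, so if there are $g_t$ gates in the layer, they collectively touch at least $g_t$ distinct qubits. The key remaining point is that any qubit on which a nontrivial gate acts in layer $t$ must be active (not in the $\ket 0$ state) at that layer — I would justify this by noting that a qubit that is in $\ket 0$ and has a gate applied to it is, by the accounting convention of the spacetime allocation metric, counted as active during that layer (indeed, applying a gate is precisely the kind of event the metric is designed to charge for; a qubit sitting untouched in $\ket 0$ is the only thing that is free). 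Hence those $\geq g_t$ distinct qubits are all active in layer $t$, giving $q_t \geq g_t$.

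Summing the inequality $q_t \ge g_t$ over all layers $t = 0, \dots, d-1$ then yields $\mathrm{SA} = \sum_t q_t \ge \sum_t g_t = \mathrm{C}$, which is the claim.

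The main obstacle — really a subtlety rather than a genuine difficulty — is pinning down the edge case of a gate acting on a qubit currently in $\ket 0$: one must be careful that the definition of ``active'' is the one that makes this counting work, i.e.\ that ``active at layer $t$'' includes ``having a gate applied at layer $t$'' and not merely ``being in a nonzero state at the start of layer $t$.'' Under the paper's stated definition (``not in the $\ket 0$ state,'' read as applying throughout the layer including after the gate acts, or equivalently ``participates in a gate or is in a nonzero state''), this is immediate; I would simply state this reading explicitly so the one-line argument goes through cleanly. No nontrivial calculation is required.
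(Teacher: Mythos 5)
Your proof is correct and follows essentially the same route as the paper's one-line justification: each gate in a layer acts on at least one qubit, gates within a layer act on disjoint qubits, so the number of active qubits per layer is at least the number of gates per layer, and summing over layers gives $\mathrm{SA} \geq \mathrm{C}$. Your explicit handling of the edge case of a gate applied to a qubit in $\ket{0}$ is a reasonable clarification the paper leaves implicit, but it does not change the argument.
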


Consequently, if we wish to lower bound the spacetime allocation, it suffices to bound the circuit size. An arbitrary quantum state is described by $O(2^n)$ real parameters. Each single-qubit gate can introduce at most $O(1)$ real parameters, so there must be $\Omega(2^n)$ single-qubit gates. Additionally, any consecutive single-qubit gates on the same qubit that do not have a CNOT in between can be combined into a single gate, so the number of CNOTs must be in the same order as the number of single-qubit gates. This line of reasoning gives rise to a lower bound on circuit size \cite{plesch2012quantum}. 
\begin{lemma}[Section II of \cite{plesch2012quantum}] \label{le:NT_gate_LB}
    A quantum circuit consisting of gates drawn from $\{\mathrm{U(2), CNOT}\}$ that prepares an arbitrary $n$-qubit quantum state must have at least  $\Omega(2^n)$ $\mathrm{CNOT}$ gates and at least $\Omega(2^n)$ $\mathrm{U}(2)$ gates.
\end{lemma}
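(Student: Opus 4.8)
The plan is a parameter-counting (dimension) argument, essentially that of \cite{plesch2012quantum}. First I would fix the target: the set of $n$-qubit pure states, regarded as rays in $\mathbb{C}^{2^n}$ (equivalently, points of $\mathbb{CP}^{2^n-1}$), is a smooth manifold of real dimension $2^{n+1}-2$; even restricting to the real non-negative amplitudes used elsewhere in the paper leaves a manifold of dimension $2^n-1$, so in either case the target has dimension $\Theta(2^n)$. The next ingredient is that, once the placement and ordering of the gates (the ``circuit skeleton'') is fixed, the state obtained by applying the circuit to the fixed input $\ket{0}^{\otimes n}$ is a smooth function of the continuous parameters only: each $\mathrm{U}(2)$ gate carries at most $4$ real parameters, whereas each $\mathrm{CNOT}$ carries none. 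Thus a skeleton with $g_1$ single-qubit gates determines a smooth map $f:\R^{4g_1}\to\mathbb{CP}^{2^n-1}$.

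I would then bound the set of all states preparable by \emph{some} circuit with at most $g$ gates. For fixed $n$ and $g$ there are only finitely many skeletons (the number of ways to assign each of the $\le g$ gates to one or two of the $n$ wires, together with an ordering, is bounded by a function of $n$ and $g$). Hence the preparable set is a finite union of images of smooth maps, each from a Euclidean space of dimension $O(g)$ into the $\Theta(2^n)$-dimensional state manifold. Since a smooth (indeed any locally Lipschitz) map cannot increase Hausdorff dimension --- or, equivalently, by Sard's theorem after trivially extending $f$ to a map out of $\R^{2^{n+1}-2}$ whose differential has rank $\le 4g_1$ everywhere --- each such image has measure zero in $\mathbb{CP}^{2^n-1}$ whenever its source dimension is strictly less than $2^{n+1}-2$, and a finite union of measure-zero sets is measure zero. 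Therefore a circuit that prepares an \emph{arbitrary} state must have $4g_1\ge 2^{n+1}-2$, i.e., $g_1=\Omega(2^n)$; this is the claimed $\mathrm{U}(2)$ lower bound.

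For the $\mathrm{CNOT}$ lower bound I would use the normal-form reduction noted in the text just above the lemma: any maximal block of consecutive single-qubit gates acting on the same wire with no $\mathrm{CNOT}$ in between can be merged into a single $\mathrm{U}(2)$ gate without changing the prepared state, so we may assume the circuit is in this merged form. On a wire carrying $c_w$ $\mathrm{CNOT}$s there are at most $c_w+1$ such blocks, so after merging the total number of single-qubit gates is at most $\sum_w(c_w+1)=2C+n$, where $C$ is the number of $\mathrm{CNOT}$s (each touching exactly two wires, so $\sum_w c_w = 2C$) and $n$ is the number of wires. Combining with $g_1=\Omega(2^n)$ and the fact that $n=o(2^n)$ yields $C=\Omega(2^n)$, completing the proof.

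The only genuinely delicate point is the measure-zero step, and in particular getting the quantifiers right: the lemma allows a different circuit for every target state, so one must argue about the \emph{union} over all circuits of bounded size, not a single fixed circuit. This is exactly what the finiteness of the set of skeletons for a fixed gate budget provides, after which the classical fact that the image of a lower-dimensional parameter domain has measure zero in the target manifold finishes the job. Everything else --- counting parameters per gate and the $2C+n$ bookkeeping for the merged form --- is elementary.
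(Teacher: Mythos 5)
Your proposal is correct and follows essentially the same route as the paper, which sketches exactly this parameter-counting argument (each $\mathrm{U}(2)$ gate contributes $O(1)$ real parameters against the $\Theta(2^n)$-dimensional state manifold, and merging consecutive single-qubit gates bounds the single-qubit count by $2C+n$ to transfer the bound to CNOTs) and defers the details to Plesch and Brukner. The only addition is your careful treatment of the quantifiers via the finite union of skeletons and the measure-zero image of lower-dimensional parameter domains, which the paper leaves implicit but which is the standard way to make the counting argument rigorous.
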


\begin{corollary}
    A quantum circuit consisting of gates drawn from $\{\mathrm{U(2), CNOT}\}$ that prepares an arbitrary $n$-qubit quantum state must have at least  $\Omega(2^n)$ spacetime allocation.
\label{le:N_spacetime}
\end{corollary}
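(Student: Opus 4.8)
The plan is to chain the two lemmas that immediately precede the statement. First I would recall that by Lemma~\ref{le:NT_gate_LB}, any circuit over $\{\mathrm{U}(2),\mathrm{CNOT}\}$ that prepares an arbitrary $n$-qubit state contains at least $\Omega(2^n)$ CNOT gates (and likewise $\Omega(2^n)$ single-qubit gates), so its total circuit size $\mathrm{C}$ satisfies $\mathrm{C} = \Omega(2^n)$. Then I would invoke Lemma~\ref{le:single_q_SA}, which states that any circuit of size $\mathrm{C}$ has spacetime allocation at least $\mathrm{C}$ — the underlying reason being that every gate in a given layer forces at least one qubit to be active in that layer, so summing the per-layer active-qubit counts gives at least the total gate count. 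Composing the two bounds yields $\mathrm{SA} \ge \mathrm{C} = \Omega(2^n)$, which is exactly the corollary.

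There is essentially no obstacle here: the statement is a direct consequence of results already established in the excerpt, so the ``proof'' is just this two-step composition. The only point deserving a moment of care is the order of quantifiers — the conclusion must hold for \emph{every} circuit that prepares an arbitrary state, not merely for some optimal or structured one — but this is precisely the form in which both Lemma~\ref{le:NT_gate_LB} and Lemma~\ref{le:single_q_SA} are phrased, so no extra argument is required. As an alternative route, one could bypass Lemma~\ref{le:NT_gate_LB} and argue directly from parameter counting (an arbitrary state carries $\Theta(2^n)$ real degrees of freedom, each $\mathrm{U}(2)$ gate contributes $O(1)$ of them, and single-qubit gates not separated by a CNOT merge, so $\Omega(2^n)$ gates of each type are needed and hence $\mathrm{C} = \Omega(2^n)$), but citing the already-stated lemma is cleaner and I would present the proof that way.
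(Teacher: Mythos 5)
Your proposal is correct and matches the paper's intended argument exactly: the corollary is obtained by composing Lemma~\ref{le:NT_gate_LB} (circuit size is $\Omega(2^n)$) with Lemma~\ref{le:single_q_SA} (spacetime allocation is at least the circuit size), which is precisely the chain the paper sets up in the surrounding text. No further comment is needed.
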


Similar lower bounds are possible in the $\{\mathrm{H,S,T,CNOT}\}$ gate set. 

\begin{lemma}\label{le:FT_gate_LB}
    A quantum circuit consisting of gates drawn from $\{\mathrm{H, S, T, CNOT}\}$ that prepares an arbitrary $n$-qubit state to error $\epsilon$ must have at least
    $\Omega\left(\frac{2^n\log(1/\epsilon)}{n+\log(\log(1/\epsilon))}\right)$ size.
\end{lemma}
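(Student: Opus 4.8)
The plan is to use a standard counting (dimension) argument: bound the number of distinct states that a circuit of size $\mathrm{C}$ can possibly output, and compare it against the number of $\epsilon$-balls needed to cover the space of all $n$-qubit states.

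First I would reduce to circuits acting on few qubits. Starting from $\ket{0}^{\otimes q}$, any qubit that is never acted on by a gate remains in $\ket{0}$ and is irrelevant to the output, so after deleting such qubits we may assume every one of the $q$ qubits is touched by at least one of the $\mathrm{C}$ gates; since each gate touches at most two qubits, the qubit--gate incidences number at most $2\mathrm{C}$, giving $q \le 2\mathrm{C}$. Each gate of the circuit is then specified by its type (one of the four choices $\mathrm{H},\mathrm{S},\mathrm{T},\mathrm{CNOT}$) together with the qubit or ordered pair of qubits it acts on, at most $c_0 q^2$ possibilities for an absolute constant $c_0$. Hence the number of distinct circuits of size at most $\mathrm{C}$, and therefore the number of distinct $n$-qubit states they can output on the data register, is at most $(c_0 q^2)^{\mathrm{C}+1} \le (4 c_0 \mathrm{C}^2)^{\mathrm{C}+1}$.

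Next I would invoke a volume/packing estimate: the space of $n$-qubit pure states with the metric from the statement (the sphere $S^{2\cdot 2^n-1}$, or $\mathbb{CP}^{2^n-1}$ if global phase is quotiented out) is a compact manifold of real dimension $\Theta(2^n)$, so any $\epsilon$-net of it has size $\epsilon^{-\Omega(2^n)}$; equivalently, the logarithm of its size is $\Omega(2^n\log(1/\epsilon))$, at least once $\epsilon$ is below an absolute constant. Since a circuit that achieves error $\epsilon$ on every target state has its reachable outputs forming such an $\epsilon$-net, we obtain $(4 c_0 \mathrm{C}^2)^{\mathrm{C}+1} \ge \epsilon^{-\Omega(2^n)}$, and taking logarithms yields $\mathrm{C}\log\mathrm{C} = \Omega(2^n\log(1/\epsilon))$ once $\mathrm{C}$ exceeds an absolute constant (for smaller $\mathrm{C}$ no valid circuit exists when $\epsilon$ is small, so the claimed bound holds trivially).

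Finally I would turn $\mathrm{C}\log\mathrm{C} = \Omega(2^n\log(1/\epsilon))$ into the stated form by a short self-referential step: supposing for contradiction that $\mathrm{C} \le \delta\, 2^n\log(1/\epsilon)/(n+\log\log(1/\epsilon))$ for a sufficiently small constant $\delta$, one gets $\log\mathrm{C} = O(n+\log\log(1/\epsilon))$, hence $\mathrm{C}\log\mathrm{C} = O(\delta\, 2^n\log(1/\epsilon))$, contradicting the previous line once $\delta$ is chosen small enough. (Combined with Lemma~\ref{le:single_q_SA}, this also yields the analogous spacetime-allocation lower bound.) I expect the main obstacle to be stating the covering-number estimate rigorously and handling the roles of global phase and of ancilla registers cleanly; the qubit-count reduction and the final $\log\log$ bookkeeping are routine once that is settled.
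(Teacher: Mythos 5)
Your proposal is correct and follows essentially the same route as the paper: the paper obtains the key inequality $C\log C \ge \Omega(2^n\log(1/\epsilon))$ by citing Lemma~1 of Zhang et al.\ \cite{zhang2023circuit}, whereas you re-derive it from scratch via the standard circuit-counting-versus-covering-number argument (the same style of argument the paper itself uses, via Nielsen--Chuang Eq.~(4.85), for its depth lower bound in Lemma~\ref{lem:LB_depth_discrete}). The final conversion of $C\log C \ge \Omega(2^n\log(1/\epsilon))$ into $C \ge \Omega\bigl(2^n\log(1/\epsilon)/(n+\log\log(1/\epsilon))\bigr)$ is identical to the paper's.
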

\begin{proof}
    We follow the logic similar to Theorem 9 of \cite{zhang2023circuit}. Following Lemma 1 in \cite{zhang2023circuit}, the circuit size $C$ must satisfy $C \log(C) \geq \Omega(2^n\log(1/\epsilon))$. Letting $U = C \log(C)$ and taking the logarithm on both sides, we have $\log(U) = \log(C\log(C)) \geq \log(C)$, and hence $C = U/\log(C) \geq U / \log(U)$. Thus, $U \geq \Omega(2^n\log(1/\epsilon))$ implies $C \geq \Omega(2^n\log(1/\epsilon)/(n+\log(\log(1/\epsilon))))$.
\end{proof}


Note that Ref.~\cite{low2018trading} has shown that the number of $\mathrm{T}$ gates can be significantly smaller than this lower bound, an interesting fact since $\mathrm{T}$ gates are significantly more expensive than the other gates in many approaches to fault-tolerant quantum computation. 
\begin{corollary}\label{cor:SA_LB_HSTCNOT}
        A quantum circuit consisting of gates drawn from $\{\mathrm{H, S, T, CNOT}\}$ for preparing an arbitrary $n$-qubit state to error $\epsilon$ must have at least $\Omega\left(\frac{2^n\log(1/\epsilon)}{n+\log(\log(1/\epsilon))}\right)$ spacetime allocation.
\end{corollary}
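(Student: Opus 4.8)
The plan is to obtain this bound as an immediate consequence of the two lemmas that precede it in this subsection. First I would invoke Lemma~\ref{le:single_q_SA}, which tells us that the spacetime allocation $\mathrm{SA}$ of any quantum circuit is at least its circuit size $\mathrm{C}$; this holds regardless of the gate set, since every gate in every layer touches at least one active qubit, so $\mathrm{SA} \geq \mathrm{C}$. Then I would invoke Lemma~\ref{le:FT_gate_LB}, specialized to circuits over $\{\mathrm{H,S,T,CNOT}\}$ that prepare an arbitrary $n$-qubit state to error $\epsilon$, which gives $\mathrm{C} \geq \Omega\!\left(\frac{2^n\log(1/\epsilon)}{n+\log(\log(1/\epsilon))}\right)$. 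Chaining the two inequalities yields $\mathrm{SA} \geq \Omega\!\left(\frac{2^n\log(1/\epsilon)}{n+\log(\log(1/\epsilon))}\right)$, which is exactly the claimed statement.

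Since Lemma~\ref{le:FT_gate_LB} is already proved in the excerpt (via the counting argument of \cite{zhang2023circuit} together with the $C\log C \geq \Omega(2^n\log(1/\epsilon))$ bound and the elementary manipulation $C \geq U/\log U$), there is essentially nothing further to do: the corollary is a one-line deduction. The only point worth making explicit is that Lemma~\ref{le:single_q_SA} is gate-set agnostic, so it applies verbatim here even though Lemma~\ref{le:NT_gate_LB} and Corollary~\ref{le:N_spacetime} were phrased for $\{\mathrm{U(2),CNOT}\}$; thus no re-derivation of the size lower bound in the discrete setting is needed beyond what Lemma~\ref{le:FT_gate_LB} already supplies.

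If anything qualifies as ``the hard part,'' it is not in this proof but upstream: it is the size lower bound of Lemma~\ref{le:FT_gate_LB}, whose nontrivial input is the volume/counting argument showing $C\log C \geq \Omega(2^n\log(1/\epsilon))$ for approximate state preparation over a fixed finite gate set. Given that lemma as a black box, the corollary follows without obstacle, and I would present it simply as ``Combine Lemma~\ref{le:single_q_SA} with Lemma~\ref{le:FT_gate_LB}.'' I would also remark, as the surrounding text already hints, that this bound leaves a possible $\mathrm{polylog}$ gap against our protocol's $O(N\log(\log(N)/\epsilon))$ upper bound, so the corollary should be read as a lower bound that is tight only up to logarithmic factors rather than a matching characterization.
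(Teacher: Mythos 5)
Your proposal is correct and matches the paper's (implicit) reasoning exactly: the corollary is stated immediately after Lemma~\ref{le:FT_gate_LB} precisely because it follows by combining that size lower bound with the gate-set-agnostic observation of Lemma~\ref{le:single_q_SA} that spacetime allocation is at least circuit size. Your added remarks on where the real work lies and on the remaining polylogarithmic gap against the upper bound are consistent with the paper's own discussion.
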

It is unknown whether the lower bound in Lemma \ref{le:FT_gate_LB} and Corollary \ref{cor:SA_LB_HSTCNOT} is tight. 
\\\\
Lower bounds on depth can also be shown in both gate sets. 

\begin{lemma}[Theorem 3 of \cite{sun2023asymptotically}]\label{lem:LB_depth_U2CNOT}
A quantum circuit consisting of gates drawn from $\{U(2), \mathrm{CNOT}\}$ that prepares an arbitrary $n$-qubit state using $q$ ancilla qubits must have depth at least
   $\Omega(\max(n, \frac{2^n}{q+n}))$.
\end{lemma}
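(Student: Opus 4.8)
The plan is to prove the two quantities inside the maximum separately and then combine: the term $\Omega(2^n/(q+n))$ will come from a degrees-of-freedom count, and the term $\Omega(n)$ from a light-cone refinement of that count.

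For the $\Omega(2^n/(q+n))$ bound, write $w=n+q$. I would observe that a depth-$d$ circuit on $w$ qubits over $\{\mathrm{U}(2),\mathrm{CNOT}\}$ has at most $w$ gates per layer, hence at most $dw$ gates, each carrying at most $4$ real parameters (a $\mathrm{U}(2)$ gate; a $\mathrm{CNOT}$ carries none). For each fixed ``wiring'' of the circuit---which gate type sits on which qubit(s) in each layer, together with the control/target choice of each $\mathrm{CNOT}$---of which there are only finitely many, the map from the $\le 4dw$ gate parameters to the output state of the data register is smooth. Since every $n$-qubit pure state must be the output of some depth-$\le d$ circuit, the finite union over wirings of the images of these maps must contain the entire $n$-qubit pure-state manifold, which has real dimension $2^{n+1}-2=\Theta(2^n)$. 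A smooth map from a $k$-dimensional domain has image of Hausdorff dimension $\le k$, so a finite union of such images cannot contain a manifold of strictly larger dimension; hence $4dw\ge\Omega(2^n)$, i.e. $d=\Omega(2^n/(q+n))$. (This also follows from the circuit-size lower bound of $\Omega(2^n)$ gates, cf.\ Lemma~\ref{le:NT_gate_LB}, combined with $\mathrm{depth}\ge\mathrm{size}/w$.)

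For the $\Omega(n)$ bound, set $s=\lfloor n/2\rfloor$ and let $S$ be any $s$ of the $n$ data qubits. First I would note that as $\ket{\psi}$ ranges over all $n$-qubit pure states, $\rho_S=\mathrm{Tr}_{S^c}(\ket{\psi}\bra{\psi})$ ranges over \emph{all} $s$-qubit density matrices: writing $\sigma=\sum_k p_k\ket{\phi_k}\bra{\phi_k}$ with at most $2^s$ terms, the purification $\ket{\psi}=\sum_k\sqrt{p_k}\,\ket{\phi_k}_S\otimes\ket{e_k}_{S^c}$ exists since $2^{n-s}\ge 2^s$; the set of such $\rho_S$ has real dimension $4^s-1\ge 2^n/2-1$. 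On the other hand, in a depth-$d$ circuit $\rho_S$ depends only on the gates in the backward light cone of $S$, and since each $\mathrm{CNOT}$ enlarges the relevant qubit set by at most one, that light cone touches at most $2^d s$ qubits and contains at most $O(2^d s)=O(2^d n)$ gates---so it is controlled by $O(2^d n)$ real parameters, \emph{independently of $q$}. Re-running the dimension argument (finitely many wirings, smooth parameter-to-$\rho_S$ maps) forces $O(2^d n)\ge 4^s-1\ge 2^n/2-1$, i.e. $2^d=\Omega(2^n/n)$, i.e. $d\ge n-\log_2 n-O(1)=\Omega(n)$.

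Taking the larger of the two bounds gives $d=\Omega(\max(n,\,2^n/(q+n)))$. The step I expect to need the most care is the second one: verifying that the backward light cone of an $s$-element set in a depth-$d$ circuit really contains only $O(2^d s)$ gates---this is exactly what makes the parameter count insensitive to the ancilla number $q$ and hence produces the $\Omega(n)$ floor---and invoking cleanly (via Hausdorff dimension, or via Sard's theorem) the principle that a finite union of smooth images of lower-dimensional domains cannot cover a set of larger dimension. The remaining bookkeeping is routine.
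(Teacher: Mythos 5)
Your proof is correct. Note that the paper does not actually prove this lemma---it is imported verbatim as Theorem~3 of Sun et al.~\cite{sun2023asymptotically}---so there is no in-paper proof to compare against; but your two-part argument is essentially the standard route to this bound and, as far as one can tell, the same route as the cited reference. The $\Omega(2^n/(q+n))$ term via counting $O(dw)$ gates, each with $O(1)$ real parameters, against the $\Theta(2^n)$-dimensional state manifold is the continuous analogue of Lemma~\ref{le:NT_gate_LB} combined with $\mathrm{depth}\ge\mathrm{size}/w$, and your invocation of Sard/Hausdorff dimension to rule out a finite union of lower-dimensional smooth images covering the state manifold is the right way to make that rigorous. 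The $\Omega(n)$ term via the backward light cone of $s=\lfloor n/2\rfloor$ data qubits is the key step that makes the bound ancilla-independent: your count of at most $s(2^d-1)$ gates in the cone is right (at distance $t$ from the output the cone has at most $s2^t$ qubits and hence at most $s2^t$ gates in that layer), and comparing against the $(4^s-1)$-dimensional set of reduced states---all of which are attainable since $n-s\ge s$ guarantees a purification---closes the argument. It is worth observing that the paper's own proof of the discrete-gate-set analogue, Lemma~\ref{lem:LB_depth_discrete}, uses exactly this light-cone gate count ($n2^D$ gates) but must replace your dimension argument with an $\epsilon$-net counting argument ($e^{O(n2^D)}$ distinct circuits versus $e^{\Omega(2^n)}$ states to cover), since a finite gate set contributes no continuous parameters; your version is the correct continuous-parameter counterpart.
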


\begin{lemma}\label{lem:LB_depth_discrete}
A quantum circuit consisting of gates drawn from $\{\mathrm{H},\mathrm{S},\mathrm{T}, \mathrm{CNOT}\}$ that prepares an arbitrary $n$-qubit state using $O(2^n)$ ancilla qubits must have depth at least $\Omega (n + \log(1/\epsilon))$.
\end{lemma}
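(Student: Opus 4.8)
The plan is to lower-bound the depth $d$ by $\Omega(n)$ and, separately, by $\Omega(\log(1/\epsilon))$; since $\max(n,\log(1/\epsilon)) = \Theta(n+\log(1/\epsilon))$, taking the larger of the two contributions gives the claim.

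For the $\Omega(n)$ term the quickest route is Lemma~\ref{lem:LB_depth_U2CNOT}: any circuit over $\{\mathrm{H},\mathrm{S},\mathrm{T},\mathrm{CNOT}\}$ is also a circuit over $\{\mathrm{U}(2),\mathrm{CNOT}\}$, and with $q = O(2^n)$ ancilla qubits that lemma gives depth $\Omega(\max(n, 2^n/(q+n))) = \Omega(n)$. Since that lemma is stated for exact preparation, I would also record a self-contained light-cone argument that is robust to error $\epsilon$: the joint final state of the $n$ data qubits is a function only of the sub-circuit supported on their combined backward light cone, which after depth $d$ involves $O(n\,2^d)$ qubits and hence $O(d\,n\,2^d)$ gates; since that sub-circuit itself prepares a generic $n$-qubit state to error $\epsilon$, Lemma~\ref{le:FT_gate_LB} forces it to contain $\Omega(2^n/n)$ gates, so $d\,n\,2^d = \Omega(2^n/n)$, which (using $d\le 2^d$) rearranges to $d = \Omega(n)$.

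For the $\Omega(\log(1/\epsilon))$ term I would isolate a single-qubit rotation sub-task. Because the protocol prepares an arbitrary state, for every angle $\theta$ it in particular prepares $\ket{\psi_\theta} := (\mathrm{R}_y(\theta)\ket{0})\otimes\ket{0}^{\otimes(n-1)}$ to error $\epsilon$ with some depth-$d$ circuit; restricting to data qubit $1$ and tracing out the rest, its reduced state must lie within $O(\epsilon)$ of $\mathrm{R}_y(\theta)\ket{0}$. The target family $\{\mathrm{R}_y(\theta)\ket{0}\}_\theta$ is a one-parameter curve of $\Theta(1)$ length, hence needs $\Omega(1/\epsilon)$ balls of radius $\epsilon$ to cover. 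I would then bound the number of distinct single-qubit reduced states a depth-$d$ circuit over $\{\mathrm{H},\mathrm{S},\mathrm{T},\mathrm{CNOT}\}$ can produce on qubit $1$, using both the gate count available on qubit $1$'s light cone and the algebraic structure of the amplitudes (they lie in $\mathbb{Z}[e^{i\pi/4},1/\sqrt2]$ with a power-of-$\sqrt2$ denominator bounded by the number of $\mathrm{H}$ gates along the relevant path); forcing this reachable set to $\epsilon$-cover the curve yields the depth lower bound.

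The delicate point — and the one I expect to be the main obstacle — is making the $\Omega(\log(1/\epsilon))$ estimate genuinely a depth bound rather than a gate-count bound. A priori the $O(2^n)$ ancilla qubits could be used to spread the synthesis of a generic rotation across exponentially many qubits, and a naive light-cone-plus-counting argument only rules out depth below $\Omega(\log\log(1/\epsilon))$, since the light cone of a single qubit after depth $d$ already has width $2^d$. The crux is therefore to argue that deterministic, ancilla-assisted Clifford$+\mathrm{T}$ synthesis of a generic single-qubit rotation to precision $\epsilon$ still requires $\Omega(\log(1/\epsilon))$ depth; I would attack this by keeping the effective width under control through the $\sqrt2$-denominator bound above, so that the counting argument applies with a polynomially rather than exponentially bounded width.
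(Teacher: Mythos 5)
Your decomposition into an $\Omega(n)$ bound and an $\Omega(\log(1/\epsilon))$ bound, combined at the end, is exactly the paper's structure, and your $\Omega(n)$ half is sound. (The paper's version of that half is also a light-cone argument, but it counts the number of distinct light-cone circuits, $e^{O(n2^D)}$, against the $e^{\Omega(2^n)}$ covering number of the state space from Nielsen--Chuang Eq.~(4.85), rather than counting gates against the size lower bound of Lemma~\ref{le:FT_gate_LB}; both routes land at $D=\Omega(n)$. Note also that the gate count in the light cone is already $O(n2^d)$, not $O(dn2^d)$, since the cone grows geometrically --- your looser bound still suffices.)

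The genuine gap is the $\Omega(\log(1/\epsilon))$ half, and you have correctly diagnosed it but not closed it. The paper does not prove this part either --- it imports it wholesale from Theorem~9 of Ref.~\cite{zhang2023circuit} after setting $n_{\mathrm{anc}}=O(2^n)$ --- so the burden your blind proof takes on is real, and your proposed repair does not work as described. The $\sqrt{2}$-denominator of an amplitude produced by a Clifford$+\mathrm{T}$ circuit is governed by the \emph{total} number of $\mathrm{H}$ gates in the backward light cone, not the number along any single path: in the sum-over-paths expansion every $\mathrm{H}$ gate contributes a factor $1/\sqrt{2}$ to every path, so the amplitudes lie in $\mathbb{Z}[e^{i\pi/4}]/\sqrt{2}^{\,H}$ with $H=O(d\,2^d)$ for the light cone of a single qubit at depth $d$. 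The set of reachable reduced single-qubit states therefore still has cardinality $e^{O(d2^d)}$, and demanding that it $\epsilon$-cover the $\mathrm{R}_y(\theta)\ket{0}$ curve again yields only $d=\Omega(\log\log(1/\epsilon))$ --- precisely the barrier you set out to beat. Making the $\epsilon$-dependence a \emph{depth} bound in the presence of $O(2^n)$ ancillae requires an argument that is not a width-times-depth counting argument at all, which is why the paper defers to the external theorem; as written, your proposal leaves the crux of the lemma unproven.
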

\begin{proof}
First we examine the $n$ dependence for any constant $\epsilon < 1$. Following a similar argument as \cite[Theorem 3]{sun2023asymptotically}, at depth $D$, the number of gates that are in the ``lightcone'' of the $n$ data qubits is upper bounded by $n2^D$. The rest of the gates can be ignored. Since there are a constant number of gates in the gate set, the total number of unique output states is upper bounded by $e^{O(n2^{D})}$. Meanwhile, Eq.~(4.85) of Ref.~\cite{nielsen2002quantum} asserts that the number of circuits needed to cover the entire set of $n$-qubit states up to $\epsilon=O(1)$ precision is at least $e^{\Omega(2^{n})}$. Together, these imply that $D = \Omega(n)$. Separately, we can lower bound the $\epsilon$ dependence as $D = \Omega(\log(1/\epsilon))$ directly from Theorem 9 of \cite{zhang2023circuit}, once we set $n_{anc} = O(2^n)$. These two bounds hold independently for sufficiently large $n$ and sufficiently small $\epsilon$, thus the sum of the bounds must also hold (up to a constant factor that can be absorbed into big-$\Omega$), implying the overall stated bound of $\Omega(n + \log(1/\epsilon))$.
\end{proof}

The SP+CSP circuit construction illustrated in Sec.~\ref{sec:SP+CSP} will give an upper bound that matches the lower bound of Lemma \ref{lem:LB_depth_discrete}.

\subsection{Upper Bounds in Prior Work}
\subsubsection{\texorpdfstring{$\{\mathrm{U(2), CNOT}\}$}{TEXT} gate set}

To the best of our knowledge, the state-of-the-art quantum state preparation \cite{sun2023asymptotically, yuan2023optimal} methods achieve the following depth, where $a$ is the number of ancillae available:

\begin{equation}
\nonumber
\mathrm{D_{exact}} =\begin{cases}
    \Theta(\frac{2^n}{n}), & \text{if $a = 0 $ (need }n \text{ data qubits) } \\
    \Theta(\frac{2^n}{n+a}), & \text{if $a = O(2^n/n)$}\\
    \Theta(n), & \text{if $a = \Omega(\frac{2^n}{n})$}
    \end{cases}
\end{equation}

Here $\mathrm{D_{exact}}$ labels the depth of the exact quantum state preparation circuit using the $\{$U(2), CNOT$\}$ gate set, $a$ denotes the number of ancillae, and $n$ denotes the number of qubits of the desired arbitrary quantum state to be prepared.

We can compute the spacetime allocation upper bound by simply multiplying the depth with the number of qubits required; we find that the lower bound of $\Omega(2^n)$ is saturated for any choice of ancilla qubits $a < O(2^n/n)$, and in particular:

\begin{equation}
\nonumber
\mathrm{SA_{exact}} =\begin{cases}
    \Theta(2^n), & \text{if $\mathrm{D_{exact}} =\Theta(\frac{2^n}{n})$ } \\
    \Theta(2^n), & \text{if $\mathrm{D_{exact}} =\Theta(n)$}
    \end{cases}
\end{equation}

Interestingly, Sun et al.~\cite{sun2023asymptotically} also proved a depth lower bound of $\Omega(n)$ for circuits that use arbitrary single-qubit and two-qubit gates, regardless of the number of ancillary qubits from a graph theory perspective. In Sec.~\ref{sec:SP+CSP}, we show that our SP+CSP state preparation protocol can also achieve $\Theta(n)$ depth and $\Theta(2^n)$ spacetime allocation. This provides an alternative construction to that of Ref.~\cite{yuan2023optimal}, which is also optimal.

\subsubsection{\texorpdfstring{$\{\mathrm{H, S, T, CNOT}\}$}{TEXT} gate set}

When compiled into gates from $\{\mathrm{H, S, T, CNOT}\}$ previous methods have achieved \cite{sun2023asymptotically, zhang2022quantum} the following depth, where $a$ is the number of ancillae available:

\begin{equation}
\nonumber
\mathrm{D_{approx}} =\begin{cases}
    O(\frac{2^n\log(2^n/\epsilon)}{n}), & \text{if $a = 0 $ } \\
    O(n\log(n)\log(2^n/\epsilon)), & \text{if $a = \Theta(\frac{2^n}{n\log(n)})$ }\\
    O(n\log(n/\epsilon)), & \text{if $a = \Omega(2^n)$}
    \end{cases}
\end{equation}

We can also compute the spacetime allocation cost by multiplying the circuit depth by the number of qubits involved:

\begin{equation}
\nonumber
\mathrm{SA_{approx}} =\begin{cases}
    O(2^n\log(2^n/\epsilon)), & \text{if $a = 0 $ } \\
    O(2^n\log(2^n/\epsilon)), & \text{if $a = \Theta(\frac{2^n}{n\log(n)})$ }\\
    O(n2^n\log(n/\epsilon)), & \text{if $a = \Omega(2^n)$}
    \end{cases}
\end{equation}


We also note that the method proposed by Yuan et al. \cite{yuan2023optimal} will have at least $\Omega(n\log(n/\epsilon))$ depth and $\Omega(2^n\log(n/\epsilon))$ spacetime allocation. This was not stated explicitly in the paper but can be deduced from the fact that single-qubit rotations appear in $O(n)$ different layers and each requires depth at least $\Omega(\log(n/\epsilon))$ in the discrete gate set.

In addition, Clader et al.~\cite{clader2023quantum} gave a method with depth $O(\log(2^n/\epsilon))$ and spacetime allocation $O(2^n\log(2^n/\epsilon))$, but under the assumption of unit time FANOUT-CNOT. Decomposing FANOUT-CNOT with $2^n$ targets into two-qubit CNOT gates incurs a multiplicative overhead of $O(n)$ in the depth of the protocol, which also induces a similar overhead to the spacetime allocation (see Table \ref{table:QRE_comps}).

In this work, we will show that our method achieves the depth scaling of Ref.~\cite{clader2023quantum} in the $\{\mathrm{H, S, T, CNOT}\}$ gate set (i.e.~not requiring FANOUT-CNOT) while achieving superior spacetime allocation of $O(2^n\log (n/\epsilon))$.
\section{SP+CSP State Preparation}\label{sec:SP+CSP}
In this section, we will walk through the details of our SP+CSP method that achieves $\Theta(N)$ spacetime allocation while keeping the $\Theta(\log(N))$ depth using the $\{$U(2), CNOT$\}$ gate set.

Recall that we wish to create the $n$-qubit state $\ket{\psi}$ in Eq.~\eqref{Eq:quantum_state}, which has known coefficients $x_i/\lVert \textbf{x} \rVert$ in the computational basis. We propose to use the following method that first uses a \textit{state preparation} step (\textbf{SP}) on a subset of the data qubits, followed by a \textit{controlled state preparation} step (\textbf{CSP}). The rationale for doing this rather than a direct SP is that the SP+CSP protocol can harness both the advantages of SP and CSP steps while avoiding their disadvantages if set up correctly. We explain the details of the complexity advantages and disadvantages later in this section at Sec.~\ref{sec:overall_depth_and_SA}.

The general idea of the SP+CSP protocol is to partition the $N = 2^n$ basis states into $M=2^m$ non-overlapping sets of size $\frac{N}{M}$. 
Computational basis states for which the first $m$ bits agree (when written in binary) are placed into the same set. Denote the $M$ sets by $J_i$ with $i=0,\ldots, M-1$. For each $i$, we let $y_i = \sqrt{\sum_{j \in J_i} |x_j|^2}$.

We first prepare the state of $m$ qubits ($m < n$):
\begin{equation}\label{eq:phi_def}
    U_{\mathrm{SP}}\ket{0^m} = \ket{\phi} = \frac{1}{\lVert \mathbf{y} \rVert}\sum_{i=0}^{2^m-1} y_i \ket{i}
\end{equation}
The total state is now $\ket{\phi} \otimes \ket{0^{n-m}}$.

Next, we perform the controlled state preparation operation $U_{\mathrm{CSP}}$, defined by
\begin{equation}
U_{\mathrm{CSP}}(\ket{i} \otimes \ket{0^{n-m}}) = y_i^{-1}\sum_{j \in J_i} x_j \ket{j}
\end{equation}
for a particular $\ket{i}$ state being controlled on, so that 
\begin{equation}
    U_{\mathrm{CSP}}(\ket{\phi} \otimes \ket{0^{n-m}}) = \ket{\psi} = \frac{1}{\lVert \mathbf{x}\rVert}\sum_{j=0}^{2^n - 1}x_j \ket{j}
\end{equation}
The circuit diagram is shown in Fig.~\ref{fig:structure}. Implementation details of $U_{\mathrm{SP}}$ and $U_{\mathrm{CSP}}$ are shown in Fig.~\ref{fig:SP} and Fig.~\ref{fig:CSP}. 
\begin{figure}[h!]
\centering
\begin{quantikz}[row sep={2em,between origins}, column sep=1em]
\lstick{$\ket{0^m}$}     & \gate{U_{\mathrm{SP}}} & \ctrlslash{1} & \qw \rstick[wires=2]{$\ket{\psi}$} \\
\lstick{$\ket{0^{n-m}}$} & \qw      & \gate{U_{\mathrm{CSP}}}      & \qw \\
\end{quantikz}
\caption{Circuit for the SP+CSP protocol that implements state preparation of an arbitrary $n$-qubit state $\ket{\psi}$ by first creating an $m$-qubit state $\ket{\phi}$ and then performing controlled state preparation into the final $n-m$ qubits. The square control $\text{\Squarepipe}$ indicates a different operation is performed on the final $n-m$ qubits for each setting of the first $m$ qubits.}
\label{fig:structure}
\end{figure}
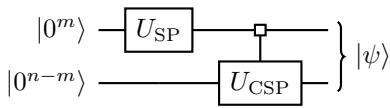

We will now describe how to perform the $U_{\mathrm{SP}}$ and $U_{\mathrm{CSP}}$ circuits.

\subsection{SP Circuit Structure}

To perform $U_{\mathrm{SP}}$, we give a method that has depth $O(m)$ and uses $O(2^m)$ ancilla qubits. The spacetime allocation is thus at most $O(m2^m)$. If we require a discrete gate set and have approximation error $\epsilon$, this method achieves $O(m+\log(1/\epsilon))$ depth and spacetime allocation $O(m2^m + 2^m\log(m/\epsilon))$.

The idea behind the SP protocol follows previous literature and begins by defining $M-1$ angles that can each be efficiently computed classically from the list of amplitudes $\{y_i\}_{i=0}^{M-1}$. For convenience and consistency with previous literature \cite{schuld2021machine}, we use a 2-index angle definition to define the rotation angles:
\begin{equation}\label{eq:theta_r}
    \theta_{s,p} = 2\cos^{-1} \left(\frac{\sqrt{\sum_{l=0}^{2^{m-s-1}-1}|y_{p\cdot 2^{m-s}+l}|^2}}{\sqrt{\sum_{l=0}^{2^{m-s}-1}|y_{p\cdot 2^{m-s}+l}|^2}}\right)\,,
\end{equation}
where $s=0,\ldots,m-1$, and $p = 0,\ldots, 2^{s}-1$, for a total of $1+2+4+\ldots+2^{m-1} = M-1$ angles. Naively, computing each angle can require querying as many as $O(2^m)$ entries of the vector $\mathbf{y}$, resulting in $O(2^{2m})$ total classical runtime. However, we can reuse some of the computations by storing the quantities $S_{s,p} = \sum_{l=0}^{2^{m-s}}|y_{p \cdot 2^{m-s}+l}|^2$ in a binary-tree data structure with $m$ levels \cite{kerenidis2017quantum}. The tree has $2^{m-1}$ leaves; we store the quantities $S_{m-1,p}$ in the leaves for $p=0,\ldots, 2^{m-1}-1$. The tree is then constructed recursively by the rule that a parent stores the sum of its children. Since $S_{s,p} = S_{s+1,2p} + S_{s+1,2p+1}$, we can verify that the value of the $p^{\mathrm{th}}$ node at level $s$ will store the value $S_{s,p}$. The angle $\theta_{s,p}$ can then be determined only from the values stored at this node along with its two children. The total work to construct the entire tree is $O(2^m)$, and the tree has the added benefit that if an individual entry of $\mathbf{y}$ is modified or if the entries arrive in an online fashion, updating the tree data structure can be classically done in time $O(m)$ by following a single path from a leaf back to the root. Additionally, although we do not utilize this property, in the case that $\mathbf{y}$ is sparse and has only $d$ nonzero entries,  the tree structure will also be sparse and have at most $dm$ nonzero entries.

This 2-index angle definition will also give us more convenience when labeling the ancilla registers holding the computed angles shown in Eq.~\eqref{eq:ket_theta_sp} and the corresponding data qubits. In particular, $s$ corresponds to the index of the ancilla register $A_s$ (referenced in, e.g., Algorithm \ref{algo:SPF}) as well as the index of the data qubit being processed. Meanwhile, $p$ corresponds to the index of the qubit within the particular ancilla register $A_s$.

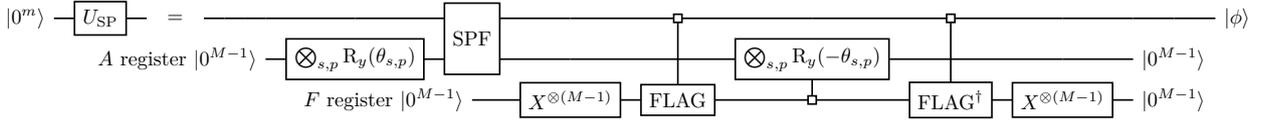
\begin{figure*}[t]
\makebox[\textwidth][c]{
\scalebox{0.77}{
\begin{quantikz}[row sep={2em,between origins}, column sep=1em]
\lstick{$\ket{0^m}$} & \gate{U_\mathrm{SP}} & \qw &=& &\qw & \qw &\qw &\qw                                         &\mltg{2}{\text{SPF}} & \qw & \ctrlslash{2}      & \qw            & \ctrlslash{2}              & \qw   & \qw & \qw &\qw & \qw  & \rstick{$\ket{\phi}$} \qw \\
  &  & & & & && \lstick{$A$ register $\ket{0^{M-1}}$} &\gate{\bigotimes_{s,p} \mathrm{R}_y(\theta_{s,p})} &    & \qw & \qw                & \gate{\bigotimes_{s,p} \mathrm{R}_y(-\theta_{s,p})}       & \qw                        & \qw      & \rstick{$\ket{0^{M-1}}$}    \qw \\
 & &  & &&&                                     &&&\lstick{$F$ register $\ket{0^{M-1}}$}                & \gate{X^{\otimes(M-1)}} & \gate{\text{FLAG}} & \ctrlslash{-1} & \gate{\text{FLAG}^\dagger} & \gate{X^{\otimes(M-1)}} & \rstick{$\ket{0^{M-1}}$}    \qw \\
\end{quantikz}
}
}
\caption{Circuit that implements $U_{\mathrm{SP}}$, which prepares an arbitrary $m$-qubit state from the state $\ket{0^m}$ with the assistance of $O(2^m)$ ancillae that begin and end in $\ket{0}$. We let $M=2^m$ and clarify that the controlled rotation gate denotes $M-1$ controlled rotations occurring in parallel, by different angles $\theta_{s,p}$ with $s=0,\ldots,m-1$ and $p=0,\ldots,2^s-1$. Note that the implementation of SPF and FLAG, given in the appendix, involves $O(M)$ additional unshown ancilla qubits. }\label{fig:SP}
\end{figure*}

Araujo et al.~\cite{araujo2021divide} proposed a low-depth strategy for implementing $U_{\mathrm{SP}}$ using $O(M)$ ancilla qubits. The general idea is to ``pre-rotate'' many angles by preparing the states
\begin{equation}\label{eq:ket_theta_sp}
\ket{\theta_{s,p}} = \mathrm{R}_y(\theta_{s,p})\ket{0} = \cos\left(\frac{\theta_{s,p}}{2}\right)\ket{0} + \sin\left(\frac{\theta_{s,p}}{2}\right)\ket{1}
\end{equation}
in parallel for each $s=0,\ldots,m-1$, $p = 0,\ldots, 2^s-1$, and then efficiently inject a subset of these states into the data qubits. Which states are injected is controlled by the data qubits themselves, leaving the data qubits entangled with a large garbage register. Clader et al.~\cite{clader2023quantum} showed how to uncompute the garbage using a flag mechanism. It was shown that the overall $\mathrm{T}$ depth of the injection steps (e.g., SPF and FLAG) was $O(m)$, but the Clifford depth was not studied, and a naive calculation would suggest the Clifford depth is at least $\Omega(m^2)$ when we insist on using only two-qubit Clifford gates, due to the existence of the FANOUT-CNOT gate. In this work, we give an adapted version of Ref.~\cite{clader2023quantum} by utilizing a bit-copying mechanism to guarantee that the Clifford depth is also $O(m)$. 

The high-level circuit that accomplishes the $U_{\mathrm{SP}}$ routine is shown in Fig.~\ref{fig:SP}. The circuit describes the gate sequence that prepares an arbitrary $m$-qubit state $\ket{\phi}$ with the assistance of $2M-2 = 2^{m+1}-2$ ancilla qubits that begin and end\footnote{\label{footnote:end_not_0}Note that the ancillas in the $A$ register of Fig.~\ref{fig:SP} are only guaranteed to end in $\ket{0}$ when the input to $U_{\mathrm{SP}}$ is $\ket{0^m}$. On other input states, this register may end up entangled with the data qubits. See App.~\ref{sec:input_not_0} for a discussion on why this feature does not ruin the optimality of our protocol in applications where the input is not always $\ket{0^m}$.} in $\ket{0}$. Note that implementations of the SPF circuit and FLAG circuit, discussed in App.~\ref{sec:appendix}, require an additional $O(M)$ ancillae not shown in the figure to implement the bit copying mechanism mentioned above.

We now follow Ref.~\cite{clader2023quantum} and define the action of circuits SPF and FLAG that appear in Fig.~\ref{fig:SP}. First, define the product states
\begin{equation}\label{eq:ket_Theta}
\begin{split}
    \ket{\Theta_s} &= \bigotimes_{p=0}^{2^{s}-1} \ket{\theta_{s,p}} \qquad \text{for } s = 0,1,\ldots,m-1 \\
    \ket{\Theta} &= \bigotimes_{s=0}^{m-1}\ket{\Theta_s}\,.
\end{split}
\end{equation}
Next, for each $j = 0, \ldots, M-1$, $s=0,\ldots,m-1$, $p=0,\ldots,2^s-1$, define
\begin{equation}
    f_{(s,p)|j} = \begin{cases}
        1 & \text{if } p = j \bmod 2^s \\
        0 & \text{otherwise}
    \end{cases}\,.
\end{equation}
Thus, if we fix a value of $j$, there are $m$ pairs $(s,p)$ for which $f_{(s,p)|j}=1$: in fact, for each value of $s$, there is exactly one $p \in [0, 2^{s}-1]$ for which $f_{(s,p)|j} = 1$. The values of $(s,p)$ for which $f_{(s,p)|j} =1$ correspond to the angle states $\ket{\theta_{s,p}}$ that are injected into the data by the SPF circuit for the data qubit setting $\ket{j}$. 

The $U_{\mathrm{SP}}$ procedure begins by preparing $\ket{\Theta}$ into an ancilla ``angle'' register (register $A$ in Fig.~\ref{fig:SP}) of size $M-1$ using $M-1$ parallel $\mathrm{R}_y$ gates. The SPF circuit is then applied, which injects some of the angle states into the data and produces the action
\begin{align}
\mathrm{SPF}\left(\ket{0^m}\ket{\Theta}\right) = \frac{1}{\lVert \mathbf{y} \rVert}\sum_{j=0}^{M-1} y_j\ket{j} \ket{g_j}\,.
\end{align}
where
\begin{equation}
\ket{g_j} = \bigotimes_{s=0}^{m-1}\bigotimes_{p=0}^{2^s-1} \ket{\theta_{s,p}(1- f_{(s,p)|j})}\,.
\end{equation}
Here, the notation $\ket{\theta(1-f)}$ is used to denote that the state is $\ket{\theta}$ when $f=0$ and $\ket{0}$ when $f=1$. In other words, the registers holding $\ket{\theta_{s,p}}$ for which $f_{(s,p)|j} = 1$ are replaced with $\ket{0}$. We see that the output of SPF has the correct amplitudes as the target state $\ket{\phi}$, but leaves the data entangled with an $(M-1)$-qubit garbage register.

The FLAG operation computes the bit $f_{(s,p)|j}$ into a fresh ancilla ``flag'' register\footnote{The need for a flag mechanism following the SPF circuit is the origin of the letter F in the name of the SPF circuit (SP stand for ``state preparation'') and also in the name of the LOADF circuit (presented in the next section).} (register $F$ in Fig.~\ref{fig:SP}) for each pair $(s,p)$, i.e.,
\begin{equation}
\text{FLAG}(\ket{j} \ket{1^{M-1}}) = \ket{j}\bigotimes_{s=0}^{m-1}\bigotimes_{p=0}^{2^s-1}\ket{1-f_{(s,p)|j}}
\end{equation}
while leaving the garbage states on the angle qubits untouched. Thus, by applying a controlled-$\mathrm{R}_y(-\theta_{s,p})$ gate controlled by the flag bit $(1-f_{(s,p)|j})$ onto the angle qubit in the state $\ket{\theta_{s,p}(1-f_{(s,p)|j})}$ after the FLAG, we reset all of the angle qubits back to $\ket{0}$. This is done in parallel for each pair $(s,p)$. Finally, the adjoint of the FLAG operation (FLAG$^{\dagger}$) can be applied to uncompute the flag bits and bring the flag register back to $\ket{0^{M-1}}$. The output of this process is the state $\ket{\phi}$ without any garbage. 

The construction for SPF and FLAG described in Ref.~\cite{clader2023quantum} was only optimized for the $\mathrm{T}$-depth and $\mathrm{T}$-count. The Clifford count and depth were not explicitly studied. Moreover, it was assumed that a FANOUT-CNOT gate with an arbitrary number of targets could be performed in a single time step. In the appendix, we detail our constructions for the SPF (Sec.~\ref{sec:SPF_circ}) and FLAG (Sec.~\ref{sec:FLAG_circ}) subroutines. There, we illustrate how both subroutines can be performed in $O(m)$ depth and $O(m2^m)$ spacetime allocation using $\{\mathrm{U(2)}, \mathrm{CNOT}\}$ gate set and near-optimal costs for $\{\mathrm{H, S, T, CNOT}\}$ gate set.

We document the pseudocode of the SP circuit implementation below in Algorithm \ref{algo:sp_circuit}. 

\begin{algorithm}[H]
    \caption{SP Circuit (see Fig.~\ref{fig:SP})}\label{algo:sp_circuit}
    \begin{algorithmic}[1]
    \Procedure{SP}{$m$, $D$, $A$, $F$}

    \Comment{$D$: data register of size $m$}
    
    \Comment{$A$: angle register of size $M - 1$}
    
    \Comment{$F$: flag register of size $M - 1$}
        \For{$s$ \textbf{in} range($m$) \textbf{and} $p$ \textbf{in} range($2^s$)}
            \State Classically compute $\theta_{s,p}$ from $\mathbf{y}$
        \Comment{Eq.~\eqref{eq:theta_r}}
        \EndFor
        \For{$s$ \textbf{in} range($m$) \textbf{and} $p$ \textbf{in} range($2^s$)}
            \State $\mathrm{R}_y(\theta_{s,p}, A_{s,p})$ onto fresh ``angle'' ancilla
        \EndFor
        \State SPF($D$, $A$, $m$) subroutine \Comment{$O(m)$}
        \For{$s$ \textbf{in} range($m$) \textbf{and} $p$ \textbf{in} range($2^s$)}
            \State $\mathrm{X}(F_{s,p})$ onto fresh ``flag'' ancilla
        \EndFor
        \State FLAG$(D, F, m)$ subroutine \Comment{$O(m)$}
        \For{$s$ \textbf{in} range($m$) \textbf{and} $p$ \textbf{in} range($2^s$)}
        \State $\mathrm{CR}_y(-\theta_{s,p}, F_{s,p}, A_{s,p})$
        \EndFor 
        \State FLAG$^\dagger(D, F, m)$ subroutine \Comment{$O(m)$}
       \For{$s$ \textbf{in} range($m$) \textbf{and} $p$ \textbf{in} range($2^s$)}
            \State $\mathrm{X}(F_{s,p})$ to reset ``flag'' ancilla to $\ket{0}$
        \EndFor
    \EndProcedure

    \end{algorithmic}
    \Comment{Total D$_{\mathrm{exact}}$: $O(m)$}

    \Comment{Total D$_{\mathrm{approx}}$: $O(m + \log(m/\epsilon))$}
    
    \Comment{Total SA$_{\mathrm{exact}}$: $O(m2^m)$}

    \Comment{Total SA$_{\mathrm{approx}}$: $O(m2^m + 2^m\log(m/\epsilon))$}
\end{algorithm}



\subsection{CSP Circuit Structure}

\begin{figure*}[t]
\makebox[\textwidth][c]{
\scalebox{0.95}{
\begin{quantikz}[row sep={2.2em,between origins}, column sep=1em, align equals at=1.5]
\lstick{$\ket{\phi}$} &\ctrlslash{1}  & \qw &\rb{-1.0cm}{=} &  &\qw &\qw &\qw       &\ctrlslash{2}                                            &
 \qw                  & \qw                & \ctrlslash{2}                                                    & \qw                        & \qw      &     \qw & \qw &\qw&\qw &\qw& \qw\rstick[wires=2]{$\ket{\psi}$} \\
\lstick{$\ket{0^{n-m}}$} &\gate{U_{\mathrm{CSP}}} & \qw && &\qw &\qw &\qw      &\qw                                                        &
 \mltg{2}{\text{SPF}} & \ctrlslash{2}      & \qw                                                               & \ctrlslash{2}              & \qw      & \qw & \qw &\qw&\qw &\qw& \qw  \\
&&& &&&\lstick{$B$ register $\ket{0^{\frac{N}{M}-1}}$} &\qw      &\gate{\substack{\text{LOADF}\\\\\{\theta^{(k)}\}}} &
    & \qw                & \gate{\substack{\text{LOADF}^\dagger\\\\\{\theta^{(k)}\}}} & \qw                        & \qw      & \rstick{$\ket{0^{\frac{N}{M}-1}}$}      \qw \\
&&& &&&\lstick{$F$ register $\ket{0^{\frac{N}{M}-1}}$} &\gate{X^{\otimes(\frac{N}{M}-1)}} &\ctrlslash{-1}                                            &
 \qw                  & \gate{\text{FLAG}} & \ctrlslash{-1}                                                    & \gate{\text{FLAG}^\dagger} & \gate{X^{\otimes(\frac{N}{M}-1)}} & \rstick{$\ket{0^{\frac{N}{M}-1}}$}      \qw \\
\end{quantikz}
}
}
\caption{Circuit that implements the $U_{\mathrm{CSP}}$ operation, which prepares an arbitrary $n-m$ qubit state for each setting of an $m$-qubit control register, with the assistance of $O(2^{n-m})$ ancillas that begin and end in $\ket{0}$. We let $M=2^m$ and $N=2^n$. Note that implementations of LOADF, SPF, and FLAG, given in the appendix, involve $O(N)$ additional unshown ancillae. We also note that in the actual implementation, one can reduce the first LOADF operation so that it is controlled only by the B register (not on the F register, which is guaranteed to be in the $\ket{1^{N/M-1}}$ state at this stage in the circuit) in order to save a constant depth. We choose to define the LOADF operator in this way only to avoid another definition of the later LOADF$^{\dagger}$.}
\label{fig:CSP}
\end{figure*}
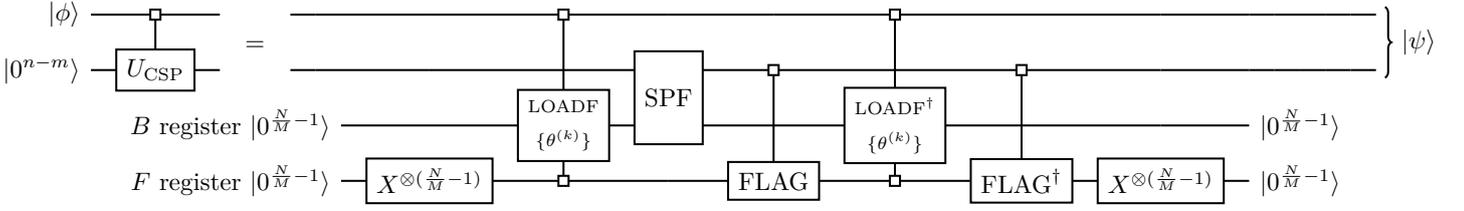

The controlled state preparation circuit prepares a different $n-m$ qubit state for each of $M$ possible settings $\ket{k} \in \{\ket{0},\ldots,\ket{M-1}\}$ of an $m$-qubit control register. Thus, for each $k$, there are $2^{n-m} -1 = \frac{N}{M} -1$ angles, which we denote by $\theta^{(k)}_{s,p}$, for $s = 0,\ldots, n-m-1$ and $p = 0,\ldots, 2^{s}-1$. These angles can be computed from the amplitudes of $\mathbf{x}$ by the equation (c.f.~Eq.~\eqref{eq:theta_r})
\begin{equation}\label{eq:theta_ksp}
    \theta^{(k)}_{s,p} = 2\cos^{-1} \left(\frac{\sqrt{\sum_{l=0}^{2^{n-m-s-1}-1}|x_{k2^{n-m}+p\cdot 2^{n-m-s}+l}|^2}}{\sqrt{\sum_{l=0}^{2^{n-m-s}-1}|x_{k2^{n-m}+p\cdot 2^{n-m-s}+l}|^2}}\right)\,.
\end{equation}
The total number of angles is then $N-M$. The states $\ket{\theta_{s,p}^{(k)}}$ are then defined as in Eq.~\eqref{eq:ket_theta_sp}, and the product states $\ket{\Theta^{(k)}_s}$ and $\ket{\Theta^{(k)}}$ as in Eq.~\eqref{eq:ket_Theta}. We document the pseudocode of the CSP circuit implementation in Algorithm \ref{algo:csp_circuit}.

\begin{algorithm}[H]
    \caption{CSP Circuit (see Fig.~\ref{fig:CSP})}\label{algo:csp_circuit}
    \begin{algorithmic}[1]
    \Procedure{CSP}{$m$, $n$, $D$, $B$, $F$}

    \Comment{$D$: data register of size $m - n$}
    
    \Comment{$B$: buffer register of size $\frac{N}{M} - 1$}
    
    \Comment{$F$: flag register of size $\frac{N}{M} - 1$}
    
        \For{$k$ \textbf{in} range($2^m$)}
            \For{$s$ \textbf{in} range($n - m$) \textbf{and} $p$ \textbf{in} range($2^s$)}
            \State Classically compute $\theta_{s,p}^{(k)}$ from $\mathbf{x}$ \Comment{Use Eq.~\eqref{eq:theta_ksp}}
            \EndFor
        \EndFor
        \For{$s$ \textbf{in} range($n - m$) \textbf{and} $p$ \textbf{in} range($2^s$)}
            \State $\mathrm{X}(F_{s,p})$ onto fresh ``flag'' ancilla
        \EndFor
        \State LOADF$(D, B, F)[\theta^{(k)}_{s,p}]$ load angles into buffer ancillae \Comment{$O(n)$}
        \State SPF$(D, B, n-m)$ to prepare $n-m$ qubit state with garbage \Comment{$O(n-m)$}
        \State FLAG$(D, F, n-m)$ \Comment{$O(n-m)$}
        \State $\mathrm{LOADF}^{\dagger}$$(D, B, F)[\theta^{(k)}_{s,p}]$ to unload angles in buffer ancillae \Comment{$O(n)$}
        \State $\mathrm{FLAG}^{\dagger}(D, F, n-m)$ \Comment{$O(n-m)$}
        \For{$s$ \textbf{in} range($n - m$) \textbf{and} $p$ \textbf{in} range($2^s$)}
            \State $\mathrm{X}(F_{s,p})$ to reset ``flag'' ancilla
        \EndFor
    \EndProcedure
    \end{algorithmic}
    \Comment{Total D$_{\mathrm{exact}}$: $O(n)$}

    \Comment{Total D$_{\mathrm{approx}}$: $O(n + \log(n/\epsilon))$}
    
    \Comment{Total SA$_{\mathrm{exact}}$: $O(2^n)$}

    \Comment{Total SA$_{\mathrm{approx}}$: $O((n-m)2^{n-m} + 2^n\log(n/\epsilon))$}
\end{algorithm}

The controlled state preparation circuit, shown in Fig.~\ref{fig:CSP}, is a generalized version of the controlled state preparation circuit proposed in Clader et al.~\cite{clader2023quantum}. The general idea of the CSP circuit is to first, controlled on the control register being $\ket{k}$, load in the correct state $\ket{\Theta^{(k)}}$. Once this has been done, the same SPF circuit as was used for the $U_{\mathrm{SP}}$ protocol is applied to inject the correct angles into the $n-m$ data qubits. A FLAG mechanism similar to that of the $U_{\mathrm{SP}}$ protocol is then employed to disentangle the data from the angle register and reset\footnote{\label{footnote:letter_f_definition}Note that the angle register ($B$ register from Fig.~\ref{fig:CSP}) is only guaranteed to end in $\ket{0^{M-1}}$ when the input to $U_{\mathrm{CSP}}$ is $\ket{0^{n-m}}$. See Footnote \ref{footnote:end_not_0} and App.~\ref{sec:input_not_0}.} the angle register to $\ket{0^{M-1}}$. 


The loading and unloading is accomplished with a circuit we call LOADF, similar to (but not the same as) the LOADF in Ref.~\cite{clader2023quantum}, which is defined by the action
\begin{equation}\label{eq:LOADF}
\begin{split}
    &\mathrm{LOADF}\left(\ket{k}\ket{0^{N/M-1}}\left[\bigotimes_{s=0}^{m-n-1}\bigotimes_{p=0}^{2^s-1} \ket{f_{s,p}}\right]\right) \\
    ={}& \ket{k}\left[\bigotimes_{s=0}^{m-n-1}\bigotimes_{p=0}^{2^s-1} \ket{f_{s,p}\theta^{(k)}_{s,p}}\right]\left[\bigotimes_{s=0}^{m-n-1}\bigotimes_{p=0}^{2^s-1} \ket{f_{s,p}}\right]
\end{split}
\end{equation}
Note that if all the flag bits $f_{s,p}$ are set to 1, the state $\ket{\Theta^{(k)}}$ is prepared into the second register. 
Our implementation of LOADF is shown in Fig.~\ref{fig:LOADF} in the appendix, and involves an additional $O(2^n)$ ancilla qubits.


\subsection{Overall depth and spacetime allocation}\label{sec:overall_depth_and_SA}

\begingroup
\begin{table*}[t]
\setlength{\tabcolsep}{2pt} 
\renewcommand{\arraystretch}{1.4} 
\makebox[\textwidth][c]{
\centering
\begin{adjustbox}{max width=1.0\textwidth}
\newcolumntype{C}[1]{ >{\centering\arraybackslash} m{#1} }
  \begin{tabular}{C{100pt}|C{40pt}|C{90pt}|C{45pt}|C{150pt}|C{200pt}} 
 & $\mathrm{D_{exact}}$ & $\mathrm{D_{approx}}$ & \# Qubits & $\mathrm{SA_{exact}}$ & $\mathrm{SA_{approx}}$ \\ [0.05ex]
 \hline\hline
 Paralleled Ry & $O(1)$ & $O(\log(m/\epsilon))$ & $O(2^m)$ & $O(2^m)$ & $O(2^m\log(m/\epsilon))$ \\ [0.05ex]
 \hline
 SPF ($U_{\mathrm{SP}}$) & $O(m)$ & $O(m)$ & $O(2^m)$ & $O(m2^m)$ & $O(m2^m)$ \\ [0.05ex]
 \hline
 Paralleled X ($U_{\mathrm{SP}}$) & $O(1)$ & $O(1)$ & $O(2^m)$ & $O(2^m)$ & $O(2^m)$ \\ [0.05ex]
 \hline
 FLAG ($U_{\mathrm{SP}}$) & $O(m)$ & $O(m)$ & $O(2^m)$ & $O(m2^m)$ & $O(m2^m)$ \\ [0.05ex]
 \hline
 Paralleled Control-Ry & $O(1)$ & $O(\log(m/\epsilon))$ & $O(2^m)$ & $O(2^m)$ & $O(2^m\log(m/\epsilon))$ \\ [0.05ex]
 \hline
\textbf{Total for }\bm{$U_{\mathrm{SP}}$} & $O(m)$ & $O(\log(2^m/\epsilon))$ & $O(2^m)$ & $O(m2^m)$ & $O(2^m\log(2^m/\epsilon))$ \\ [0.05ex]
\hline\hline
 Paralleled X ($U_{\mathrm{CSP}}$) & $O(1)$ & $O(1)$ & $O(2^{n-m})$ & $O(2^{n-m})$ & $O(2^{n-m})$ \\ [0.05ex]
 \hline
 LOADF & $O(n)$ & $O(n+\log(1/\epsilon))$ & $O(2^n)$ & $O(2^n)$ & $O(2^n\log(n/\epsilon))$ \\ [0.05ex]
 \hline
 SPF ($U_{\mathrm{CSP}}$) & $O(n - m)$ & $O(n - m)$ & $O(2^{n-m})$ & $O((n-m)2^{n-m})$ & $O((n-m)2^{n-m})$ \\ [0.05ex]
 \hline
 FLAG ($U_{\mathrm{CSP}}$) & $O(n - m)$ & $O(n - m)$ & $O(2^{n-m})$ & $O((n-m)2^{n-m})$ & $O((n-m)2^{n-m})$ \\ [0.05ex]
\hline
\textbf{Total for }\bm{$U_{\mathrm{CSP}}$} & $O(n)$ & $O(n+\log(1/\epsilon))$ & $O(2^n)$ & $O(2^n + (n-m)2^{n-m})$ & $O(2^n\log(n/\epsilon) + (n-m)2^{n-m})$ \\ [0.05ex]
\hline\hline
 \textbf{Total} & $O(n)$ & $O(n+\log(n/\epsilon))$ & $O(2^n)$ & $O(m2^m + 2^n + (n-m)2^{n-m})$ & $O(m2^m + 2^n\log(n/\epsilon) + (n-m)2^{n-m})$\\ [0.05ex]
 \hline
  \textbf{If Eq.~(\ref{eq:mn_relation}) Satisfied} & \bm{$O(n)$} & \bm{$O(n+\log(n/\epsilon))$} & \bm{$O(2^n)$} & \bm{$O(2^n)$} & \bm{$O(2^n\log(n/\epsilon))$}\\ [0.05ex]
 \hline
\end{tabular}
 \end{adjustbox}
 }
\caption{Individual Complexity Summary for Each Step of State Preparation Protocol. ($M = 2^m$,  $N = 2^n$)}
\label{table:complexity_summary}
\end{table*}
\endgroup

In this subsection, we compute the overall depth and ancilla allocation of our protocol. To verify the stated complexities of subroutines, we refer the reader to the sections where those subroutines are implemented. We state complexities in the $\{\mathrm{U(2)},\mathrm{CNOT}\}$ gate set, and then quote the associated complexity in the $\{\mathrm{H,S,T}, \mathrm{CNOT}\}$ gate set in \textit{(parentheses)} when it is different. We summarize all the individual complexity contributions and the resulted total complexities in Table \ref{table:complexity_summary}. 

Note that to achieve overall error $\epsilon$ in state preparation or controlled state preparation, it suffices to prepare each angle state $\ket{\theta_{s,p}}$ to precision $\epsilon/n$. To see this, note that the protocol is equivalent to a sequence of $n$ multi-controlled rotations on the data qubits. If every angle is accurate up to error $\epsilon/n$, each of these $n$ unitaries is enacted up to spectral norm error $\epsilon/n$, yielding total error at most $\epsilon$. More thorough justifications of this appear in \cite[Section V B]{clader2023quantum} and \cite[Section VIIIA of Supplementary Material]{zhang2022quantum}.

We begin with $U_{\mathrm{SP}}$, implemented as in Fig.~\ref{fig:SP}. The parallel rotation gates and parallel controlled rotation gates are accomplished in depth $O(1)$ ($\mathrm{D_{approx}}$ $= O(\log(m/\epsilon))$), while the SPF and FLAG circuits have depth $O(m)$, yielding a total depth $O(m)$ (total\footnote{Recall that $O(m+\log(m/\epsilon)) = O(m+\log(1/\epsilon))$, since $\log(m/\epsilon) = \log(m) + \log(1/\epsilon)$, and $O(m + \log(m)) = O(m)$.} $\mathrm{D_{approx}}$ $= O(m+\log(1/\epsilon))=O(\log(2^m/\epsilon))$). The required total qubits are the $m + 2M-2$ that appear in Fig.~\ref{fig:SP}, plus an additional $O(M)$ needed to perform the SPF and FLAG operations. The product of space and depth gives an immediate upper bound of spacetime allocation $O(m2^m)$ ($\mathrm{SA_{approx}} = $ $O(2^m\log(2^m/\epsilon))$, and indeed for $U_{\mathrm{SP}}$ there is a matching lower bound (up to constant factor). 

We now consider $U_{\mathrm{CSP}}$. Here, we again note that to perform the operation to $\epsilon$ precision, it suffices to perform each single-qubit rotation gate to error $\epsilon/n$.  The LOADF operation has depth $O(n)$ ($\mathrm{D_{approx}} = $ $O(n+\log(1/\epsilon))$. Meanwhile, in this context, the SPF and FLAG operations each have depth $O(n-m)$, yielding total depth $O(n)$ (total depth $O(n+\log(1/\epsilon)$). The required total qubits are the $n + 2N/M$ that appear in Fig.~\ref{fig:CSP}, plus the additional $O(N)$ needed to perform the LOADF, SPF, and FLAG circuits. The product of space and depth gives an immediate upper bound of spacetime allocation $O(n2^n)$ ($\mathrm{SA_{approx}} = $ $O(2^n\log(2^n/\epsilon))$). However, in this case, the actual spacetime allocation is better than this upper bound. The LOADF subroutine achieves spacetime allocation $O(2^n)$ ($\mathrm{SA_{approx}} = $ $O(2^n\log(n/\epsilon))$) even though it acts on $O(2^n)$ qubits  depth $O(n)$ ($\mathrm{D_{approx}} = $ $O(n+\log(1/\epsilon))$). Meanwhile, FLAG and SPF have spacetime allocation $O((n-m)2^{n-m})$. In total, the procedure has spacetime allocation $O(2^n+(n-m)2^{n-m})$ ($\mathrm{SA_{approx}} = $ $O(2^n\log(n/\epsilon)+ (n-m)2^{n-m})$). 

Adding both parts together, the SP+CSP protocol achieves depth $O(n)$ ($\mathrm{D_{approx}} = $ $O(n+\log(1/\epsilon))$ and has spacetime allocation $O(m2^m + 2^n + (n-m)2^{n-m})$ ($\mathrm{SA_{approx}} = $ $O(m2^m + 2^n\log(n/\epsilon) + (n-m)2^{n-m})$).

If $m$ is chosen such that
\begin{equation}\label{eq:mn_relation}
    \lceil \log_2(n) \rceil \le m \le \lfloor n - \log_2(n) \rfloor,
\end{equation} we see that the stated result of spacetime allocation $O(2^n)$ ($\mathrm{SA_{approx}} = $ $O(2^n \log(n/\epsilon))$) is true.  

In retrospect, we can pinpoint the rationale for pursuing the SP+CSP approach to state preparation. The issue with simply performing the $U_{\mathrm{SP}}$ protocol with $n=m$ is that it would require $2^n-1$ ancilla qubits to store the $2^n-1$ angles, and each of these ancillae must remain allocated for the entire $O(n)$ depth of the circuit. The SP+CSP protocol gets around this by first preparing the $m$-qubit state $\ket{\phi}$, requiring only $2^m-1$ angles, and for each of the $2^m$ basis states, preparing a different $n-m$ qubit state in the remaining qubits. The latter operation requires $2^{n-m}-1$ angles (stored in the buffer ancilla register B), so we avoid the need for $O(2^n)$ angle states to be allocated for the entire $O(n)$ depth. The trick comes from the ability to load the correct set of $2^{n-m}-1$ angles among all $2^n-2^m$ angles with only $O(2^n)$ spacetime allocation, which is accomplished by our LOADF implementation presented in the appendix.

In the appendix, we also comment on parts of the LOADF subroutine where clean ancillae in the $\ket{0}$ state can be replaced with dirty ancillae in an arbitrary state. In total, a constant fraction of the ancillae in our construction can be dirty, but a constant fraction must remain clean. Future work aims to investigate if a fraction of ancillae that asymptotically approaches $1$ can be made dirty. If that is the case, it may be meaningful to differentiate between the portion of spacetime allocation that corresponds to clean ancillae and the fraction that corresponds to dirty ancillae. 

\section{Preparing Many Copies of Independent Quantum States}
\label{sec:multiple_copy}

This section will show how we manage to utilize the novel encoding method we proposed in Sec. \ref{sec:SP+CSP} to more efficiently prepare many copies of arbitrary quantum states (same or different ones) using the $\{\mathrm{U(2), CNOT}\}$ gate set. Specifically, we consider the task of preparing a product state of $w$ separate $N$-dimensional states as quickly as possible. If we had $O(wN)$ ancilla qubits, we could perform state preparation on each of the $w$ copies in parallel for total depth $O(\log(N))$. However, in applications, $N$ is likely to be large, and we may not have so many ancillae. Suppose we have only $O(N)$ ancilla qubits---enough to prepare one state in $O(\log(N))$ depth, but not $w$ states. Naively, we could prepare the $w$ states in series, incurring depth $O(w\log(N))$. However, since the spacetime allocation required for a single state preparation is $O(N)$, the spacetime allocation for preparing $w$ states is $O(wN)$, suggesting that if the $O(N)$ ancillae are used reused optimally, we might achieve $O(w)$ total depth, rather than $O(w\log(N))$. Indeed, we now illustrate that using our SP+CSP protocol one can accomplish the task in depth $O(w + \log(N))$, which is constant $O(1)$ depth per copy when $w \geq \Omega(\log(N))$. 

There are two essential properties in the single-copy arbitrary state preparation method we described in section \ref{sec:SP+CSP}:
\begin{enumerate}
    \item All ancilla qubits are disentangled from the data qubits and returned to the $\ket{0}$ state. The previously used ancillae are ready to act as either new data qubits or new ancilla qubits.
    \item The amount of spacetime occupied by active ancilla qubits is only $O(N)$, even though the depth is $O(\log(N))$ and there are $O(N)$ ancilla qubits.
\end{enumerate}
Thus, once the ancillae are returned to $\ket{0}$, they can begin to assist with the next state preparation, even if the previous state preparation has not been completed. We can concatenate and stack many of the state preparation circuits one after another.

Quantitatively speaking, we can prepare many unentangled copies of the same quantum state
\begin{align}
    \ket{\psi}^{\otimes w} &= \left[\prod_{s=0}^{w-1} U_{s}\right] \ket{0^n}^{\otimes w} \nonumber\\
    &= \left(\frac{1}{\lVert \textbf{x}\rVert}\sum_{i=0}^{2^n-1} x_i \ket{i}\right)^{\otimes w}
\end{align}
Here $U_s$'s are the same quantum state preparation unitary but acting on different $\ket{0^n}$ states. The above equation does not include ancillas that assist the implementation of the $U_s$ unitaries: even though the $U_s$'s act on different $\ket{0^n}$ data registers, they can share many of the same ancilla qubits.

We can also prepare many copies of different unentangled quantum states
\begin{align}\label{eq:diff_joint_state}
    \bigotimes_{d = 0}^{w-1} \ket{\psi_j} &= \left[\prod_{d=0}^{w-1} U_{d} \right]\ket{0^n}^{\otimes w} \nonumber\\
    &= \bigotimes_{d = 0}^{w-1} \left(\frac{1}{\lVert \textbf{x}^d\rVert} \sum_{i=0}^{2^n-1} x_{i}^d \ket{i}\right)
\end{align}
Here $U_d$'s are \textit{different} quantum state preparation unitaries acting on different $\ket{0^n}$ states and preparing different $N$-dimensional states described by the vector of coefficients $\mathbf{x}^d$. Even though each $U_d$ is mathematically different, the exact quantum gate scheduling is the same. The only difference is the single-qubit rotation angles. 

\subsection{SP+CSP Multi-Copy}\label{sec:sp+csp_multi_copy}
In order to prepare the joint state in Eq. \eqref{eq:diff_joint_state}, we can now let $U_d = U_{d(\mathrm{CSP})}U_{d(\mathrm{SP})}$. We first execute all the $U_{d(\mathrm{SP})}$ circuits in parallel and let $m = n - \lceil \log_2(n) \rceil$. That is, we first prepare the state
\begin{equation}\label{eq:parallelized_sp}
    \left[\prod_{d=0}^{w-1} U_{d(\mathrm{SP})} \right] \ket{0^n}^{\otimes w} = \bigotimes_{d = 0}^{w-1} (\ket{\phi_d} \otimes \ket{0^{n-m}})\,,
\end{equation}
where $\ket{\phi_d}$ is defined from $\mathbf{x}^d$ as in Eq.~\eqref{eq:phi_def}. 
Then, we jointly execute all the $U_{d(\mathrm{CSP})}$ circuits. 
\begin{equation}\label{eq:parallelized_csp}
    \prod_{d=0}^{w-1} U_{d(\mathrm{CSP})} \left(\bigotimes_{d = 0}^{w-1} (\ket{\phi_d} \otimes \ket{0^{n-m}})\right) = \bigotimes_{d = 0}^{w-1} \ket{\psi_d}
\end{equation}
We do not have sufficient ancillae to perform these circuits in parallel, so we perform them with some ``indentation'' $k$ between each other. That is, we begin the $d=0$ protocol at layer 0, the $d=1$ protocol at layer $k$, the $d=2$ protocol at layer $2k$, etc.

We illustrate the above operation as a quantum circuit in Fig.~\ref{fig:sp+csp_multiple}.
\begin{figure}[h!]
    \centering
    \includegraphics[width=0.3\textwidth]{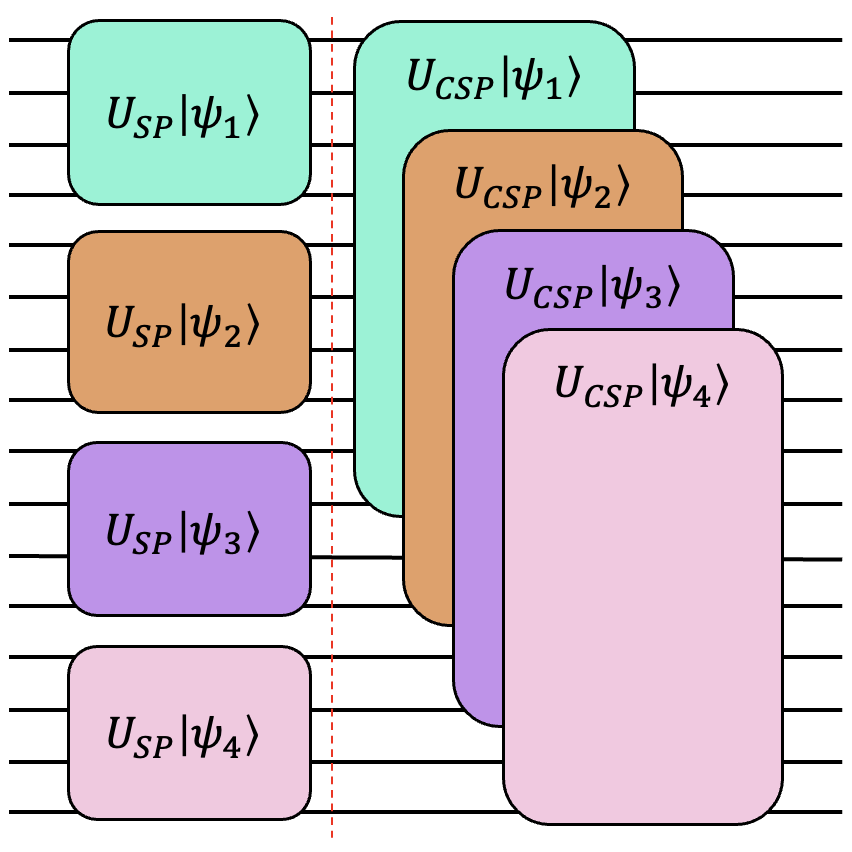}
    \caption{Illustration of stacking different SP+CSP circuit together ($w$ = 4)}
    \label{fig:sp+csp_multiple}
\end{figure}
If we choose $m = n - \lceil \log_2(n) \rceil$, the first operation described in Eq.~\eqref{eq:parallelized_sp} can be achieved in $O(\log(N))$ depth and $O(N)$ spacetime allocation. Since each of the $U_{d(\mathrm{SP})}$ is using $\frac{N}{\log(N)}$ qubits, up to $O(\log(N))$ joint states can be parallelized in this way without introducing additional ancilla qubits.

Now let's walk through the second operation described in Eq.~\eqref{eq:parallelized_csp}. The most ancilla-consuming steps of the CSP circuit are the LOADF operations, and in particular, the layer of doubly-controlled rotation (CCR$_y$) operations that appear in Fig. \ref{fig:LOADF} of the appendix, which requires $O(N)$ ancillae in the $A$ register. Note that these ancillae can be quickly freed up after the parallelized CCR$_{y}$ operations by the $\overline{\mathrm{CopySwap}}$ operation under $O(N)$ spacetime allocation, as shown in Fig.~\ref{fig:LOADF}. Thus we can indent the $U_{\mathrm{CSP}}$ part of the later $\mathrm{U}_d$'s by some constant number of layers $k$.

Besides the $A$ register that uses $O(N)$ ancillae, we also have the $C$, $B$, and $F$ registers (as shown in Fig.~\ref{fig:LOADF}) in the $U_{\mathrm{CSP}}$ operation. By choosing $m = n - \lceil \log_2(n)\rceil $, we would have negligible ancilla counts for the $C$ register ($m\cdot \frac{N}{M} = m\log(N)$ ancillae), the $B$ register ($M = \frac{N}{\log(N)}$ ancillae), and the $F$ register ($\frac{N}{M} = n$ ancillae). Therefore, they will not contribute much to the overall spacetime allocation cost.

We offer a toy model gate-level illustration in Fig.~\ref{fig:csp_multiple_details_comps}. We can see that the circuit depth will be $2w + \log(N)$ instead of $w\log(N)$. The number of ancillae used is upper bounded by $8N$.

\begin{figure}[h!]
    \centering
    \includegraphics[width=0.48\textwidth]{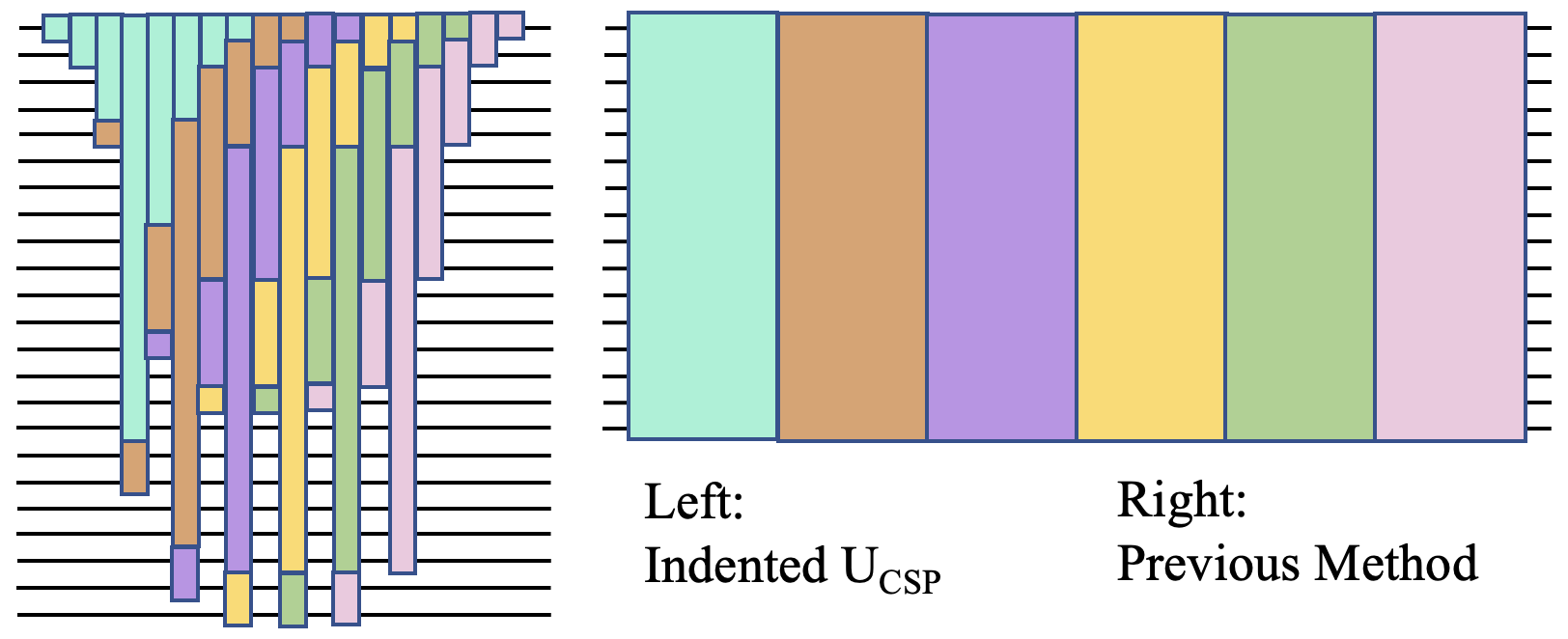}
    \caption{Illustration of stacking \textit{a portion of} the $U_{\mathrm{CSP}}$ circuit (LOADF) with indentation $k = 2$.}
    \label{fig:csp_multiple_details_comps}
\end{figure}

When the ratio between $w$ and $\log(N)$ grows larger, which is often the case in near-term applications such as amplitude encoding for quantum machine learning (as illustrated in Sec.~\ref{sec:qml}), we would effectively have state preparation circuit depth of $O(w)$, same as the depth using the protocol proposed in \cite{yuan2023optimal}.

When using the $\{\mathrm{H, S, T, CNOT}\}$ gate set, one can develop specific compilation strategies that utilize this early-ancilla-free-up structure based on the particular $\epsilon$ values. Asymptotically speaking, since each state preparation occupies spacetime allocation $O(N\log(\log(N)/\epsilon))$, with $O(N)$ ancilla qubits it would be possible to create $w$ states in depth $O(w\log(\log(N)/\epsilon)+\log(N)+\log(1/\epsilon))$. 
\section{Applications}
\label{sec:applications}

In general, quantum measurements are probabilistic. Therefore, for many quantum algorithms, we typically need to run many shots with the same quantum circuit setup. That is, we will need to execute the entire circuit (state preparation, algorithm circuit, and measurement) many times to get enough \textbf{\textit{measurement precision}}. Therefore, for many quantum algorithms that require multiple shots/executions and arbitrary quantum state preparation, our novel state preparation method can reduce the average time-per-shot by as much as a factor linear to the number of qubits.

However, there exist more concrete examples where our novel encoding methods can provide advantages beyond the need to improve measurement precision, including but not limited to: quantum machine learning, Hamiltonian simulation, and solving linear systems with algorithms in the style of the Harrow--Hassidim--Lloyd (HHL) algorithm \cite{harrow2009quantum}.

\subsection{Quantum Machine Learning --- Batching}\label{sec:qml}

One proposed quantum advantage for machine learning tasks is utilizing quantum computers’ exponentially large Hilbert space to encode and process data in parallel, providing potential speedup over classical machine learning methods. One of the bottlenecks to realizing such an advantage is that encoding the classical data into the quantum machine in an exponentially compact form (\textit{amplitude encoding} \cite{schuld2021machine}) is costly.
In amplitude encoding, we wish to encode the feature vector $\mathbf{x} = [x_0, x_1, ..., x_{N-1}]$ into the amplitude of the quantum state such that
\begin{equation}
    \ket{\psi} = \frac{1}{\lVert \textbf{x}\rVert}\sum_{i=0}^{N-1} x_i \ket{i}
\end{equation}
Notice that $N = 2^n$, where $n$ is the number of required qubits. This indicates we can encode the feature vector in an exponentially compact way.

For instance, if we have a 2 $\times$ 2 image that has a pixel value array of $\mathbf{x}$ = [232, 31, 62, 137] (each pixel has an integer color value ranging from 0 to 255), we only need $\log_2 4 = 2$ qubits to encode the values on the amplitudes: 
\begin{equation}
    \ket{\psi} = 0.834\ket{00} + 0.111\ket{01} + 0.223\ket{10} + 0.492\ket{11}
\end{equation}

In typical machine learning tasks, a model is trained by feeding it many copies of training data and adjusting the parameters in the model to minimize a loss function evaluated based on that data. One can update the parameters after computing the loss sum of many data points (e.g., batch gradient descent \cite{ruder2016overview}). 

In the quantum computing setting, one could execute a QML iteration as follows (also depicted in Fig.~\ref{fig:1_shot}):
\begin{enumerate}
    \item encoding many data points into separate quantum states simultaneously;
    \item processing all states with, e.g., separate quantum neural network circuits simultaneously; and
    \item measuring the output of each circuit simultaneously (disjoint or joint measurements) and computing the total loss.
\end{enumerate}

\begin{figure}[h!]
\centering
\includegraphics[width=0.45\textwidth]{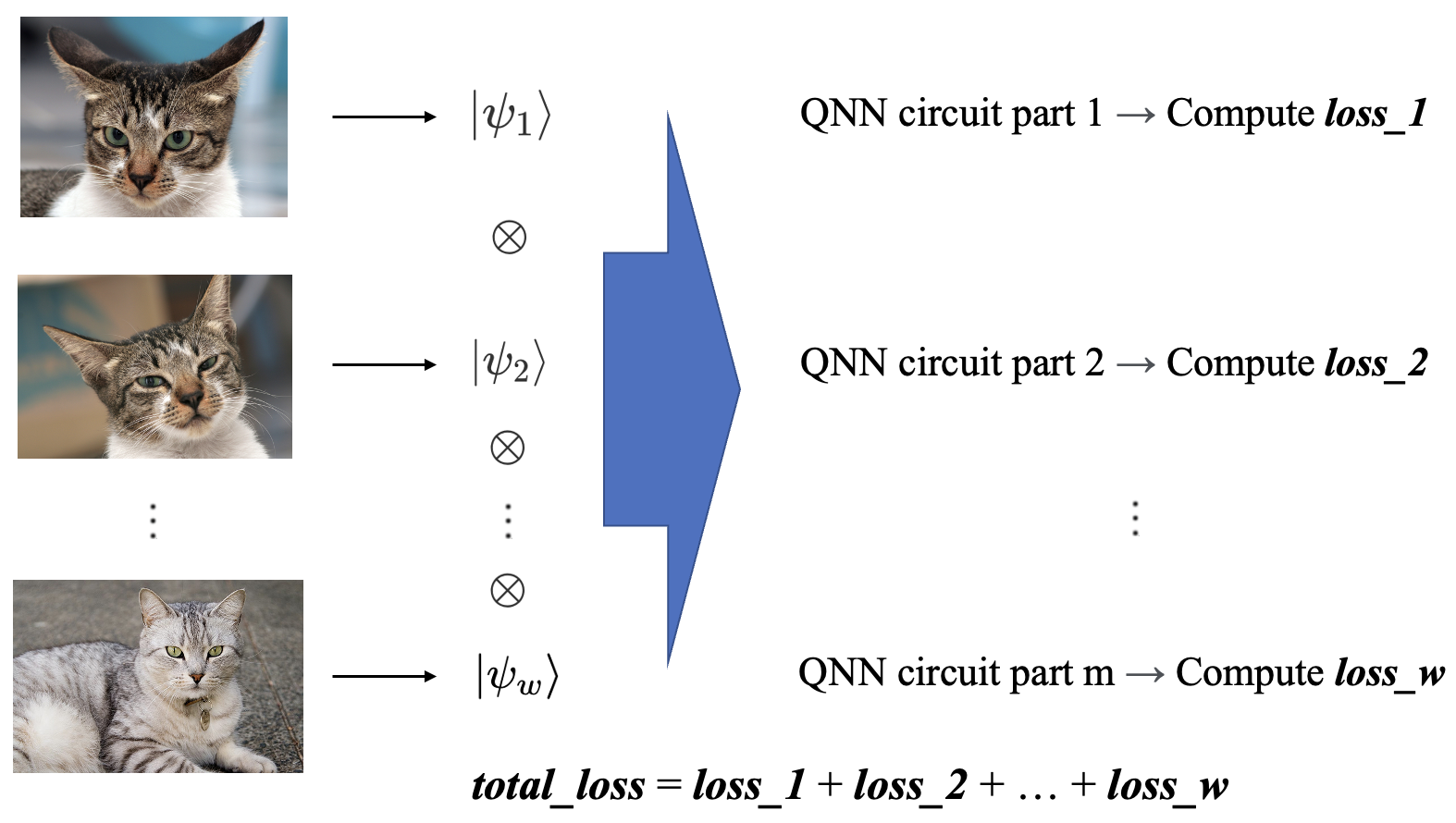}
\caption{simultaneous execution method example}
\label{fig:1_shot}
\end{figure}

Since the encoding process can take significant portions of the total required quantum resources, one must consider the best strategy for state preparations. 

Using our SP+CSP encoding method, we would be able to achieve $O(w+\log(N))$ circuit depth using only $O(N)$ qubits, using the indentation method described in Sec.~\ref{sec:multiple_copy}. In principle, one could encode many data points in effectively \textit{constant} depth, given $O(N)$ ancilla qubits. When running on near-term machines, this effectively matches the best available \textit{theoretical} performance using the protocol proposed by Yuan et al. \cite{yuan2023optimal} using the $\{$U(2), CNOT$\}$ gate set. The \textit{actual} best strategy using near-term machines will depend on the machine properties, such as additional available native gates (e.g., CSWAP) and connectivity allowance. For fault-tolerant level quantum machine learning algorithms (e.g., using the $\{\mathrm{H,S,T,CNOT}\}$ gate set), our SP+CSP protocol has clear advantages in terms of both depth and spacetime allocation.

\subsection{Hamiltonian Simulation --- Linear combination of unitaries}
Hamiltonian simulation is the task of synthesizing the time evolution unitary $U(t)$ given a length of time $t$ and a description of some Hamiltonian $H$. For example, Hamiltonian simulation allows one to measure the properties of a time-evolved state
\begin{equation}\label{eq:hs_def}
     \ket{\psi(t)}=U(t)\ket{\psi(0)}
\end{equation}
where $\ket{\psi(0)}$ is the initial state and 
\begin{equation}
    U(t) = e^{-itH}\,.
\end{equation}

One method to perform Hamiltonian simulation is to use linear combination of unitaries (LCU) \cite{berry2015simulating, childs2018toward}. The idea of this method is to approximate the Hamiltonian evolution operator with a Taylor expansion truncated to order $K$ and then expanded as a linear combination of (unitary) Pauli matrices
\begin{align}
    U(t) &\approx \sum_{k=0}^K \frac{(-itH)^k}{k!}\\
         &= \sum_{j=0}^{\Gamma-1} \beta_j V_j
\end{align}
where $V_j$ are Pauli matrices and the number of Pauli matrices in the linear combination is $\Gamma = \sum^K_{k=0} L^k$, where $L$ is the number of Pauli terms in the Hamiltonian $H$. Let $a = \lceil \log_2(\Gamma) \rceil$.

Implementing the linear combination of unitaries is then done by constructing a SELECT operator and a state-preparation operator $B$, defined as
\begin{align}
    \mathrm{SELECT} &= \sum_{j=0}^{\Gamma-1} \ket{j}\bra{j} \otimes V_j \\
    B \ket{0^{a}}&= \frac{\sum_{j=0}^{\Gamma-1}\sqrt{\beta_j} \ket{j}}{\sqrt{\sum_{j=0}^{\Gamma-1} \beta_j}}\,.
\end{align}
Combining these to form
\begin{equation}
    W =  (B^\dagger \otimes I) \mathrm{SELECT}(B \otimes I)\,,
\end{equation}
we see that
\begin{equation}
   \left(\bra{0^a} \otimes I\right) W \left(\ket{0^a} \otimes I\right) = \frac{\sum_{j=0}^{\Gamma-1} \beta_j V_j}{\sum_{j=0}^{\Gamma-1} \beta_j} \approx \frac{U(t)}{\sum_{j=0}^{\Gamma-1} \beta_j}
\end{equation}
The above equation is equivalent to the statement that $W$ is an approximate ``block-encoding'' of $U(t)$ \cite{chakraborty2018BlockMatrixPowers,gilyen2018QSingValTransf}, with block-encoding factor $\sum_{j=0}^{\Gamma-1} \beta_j$. Thus, $U(t)$ can be applied by application of $W$ and postselection onto the ancilla state $\ket{0^a}$, which occurs with probability $(\sum_{j=0}^{\Gamma-1}\beta_j)^{-2}$. The success probability can be boosted to 1 by dividing the evolution time $t$ into $r$ segments of length $t/r$ for a specific choice of $r$, and applying oblivious amplitude amplification \cite{berry2015hamiltonian,berry2015simulating}. 

In any case, the state-preparation operator $B$ is a key subroutine of the algorithm, which could be implemented in shallow depth with our state-preparation method. 
In a typical quantum chemistry simulation setting, the second-quantized Hamiltonian considered includes two-electron and sometimes three-electron integrals, which would make $\Gamma$ scale as $O(n^4)$ \cite{helgaker2013molecular} and sometimes $O(n^6)$ \cite{motta2020quantum, mcardle2020improving} (where $n$ indicates the number of orbitals). The number of terms in the LCU will be even larger (depending on the truncation parameter); however, our low-depth state preparation method would have depth scaling as $O(\log(n))$. 

Recent works \cite{bubeck2020entanglement, chen2022exponential, huang2022quantum} have shown the advantage of performing \textbf{\textit{joint measurements}} on multiple copies of quantum states (e.g., $\ket{\psi(t)}$ in Eq.~\eqref{eq:hs_def}) compared to separate measurements of the end states. 
Doing this would give us a better sample complexity scaling, which, in the end, will result in better total circuit complexity by saving the total number of circuit executions in order to reach certain measurement thresholds. 

\subsection{Quantum linear system solving --- Repeat until Success}

The classical linear system problem is defined as follows: given an $N \times N$ invertible matrix $A$, and a $N \times 1$ vector $\mathbf{b}$, find the solution vector $\mathbf{x}$  such that $A\mathbf{x}$ = $\mathbf{b}$. The \textit{quantum} linear system problem is to perform the following related task: prepare a state $\ket{\mathbf{x}} = \frac{1}{\lVert \mathbf{x} \rVert} \sum_{i} x_i \ket{i}$ whose amplitudes are proportional to the solution vector $\mathbf{x}$. While the state $\ket{\mathbf{x}}$ does not give access to all $N$ entries of the vector $\mathbf{x}$, multiple copies of $\ket{\mathbf{x}}$ can be used, for example, to estimate expectation values ${\bra{\mathbf{x}} M \ket{\mathbf{x}}}$ of observables $M$.
The first quantum algorithm for solving the quantum linear system problem was by Harrow, Hassidim, and Lloyd (HHL) \cite{harrow2009quantum}. Their solution begins by preparing the state $\ket{\mathbf{b}} = \lVert \mathbf{b} \rVert^{-1} \sum b_i\ket{i}$, where $b_i$ are the entries of the vector $\mathbf{b}$. Then, if $A$ is sparse and, assuming coherent query access to the entries of $A$ in $\mathrm{polylog}(N)$ time, the state $\ket{\mathbf{x}}$ can be prepared to high precision in time $\mathrm{poly}(\kappa,\log(N))$, where $\kappa$ is the \textit{condition number} of $A$, that is, the ratio of the largest to smallest singular values. When $\kappa = O(1)$, this is exponentially faster than classical methods that manipulate vectors of length-$N$, such as Gaussian elimination or conjugate gradient descent \cite{shewchuk1994introduction}). However, reading out useful information from $\ket{\mathbf{x}}$ can often negate this exponential speedup (leaving the possibility of a polynomial speedup) in specific applications, such as solving differential equations \cite{montanaro2016quantumFEM}. Even in these cases, there is still a possible exponential advantage in space complexity, due to the exponentially compact representation of the data involved as quantum states. Relatedly, this compactness is leveraged to yield an unconditional exponential quantum advantage for the linear system problem in the communication complexity setting \cite{montanaro2023quantum}.

It is important to note that even if $A$ is a sparse matrix, the vector $\mathbf{b}$ can often be dense. This is the case, for example, when solving differential equations via mapping to linear systems \cite{montanaro2016quantumFEM}, where $\mathbf{b}$ encodes arbitrary boundary conditions of the problem.  Thus, preparing the state $\ket{\mathbf{b}}$ as the first step of the quantum algorithm is precisely an application of our state-preparation protocol. 

There are a few reasons that our low-depth state preparation protocol is well suited for usage within algorithms for the quantum linear systems problem. When $\mathbf{b}$ is dense, preparing $\mathbf{b}$ will dominate the algorithm's runtime unless a low-depth state-preparation method is used. Furthermore, extracting information from the solution vector $\ket{\mathbf{x}}$ may require repeating the algorithm many times, and thus preparing many copies of $\ket{\mathbf{b}}$; as outlined in Sec.~\ref{sec:multiple_copy}, the fact that our method has optimal spacetime allocation allows many copies to be rapidly prepared with efficient ancilla usage. Furthermore, depending on which quantum approach is taken, the algorithm itself may require preparing many copies of $\ket{\mathbf{b}}$ to generate a single copy of $\ket{\mathbf{x}}$. For example, in modern approaches such as \cite{chakraborty2018BlockMatrixPowers}, one can access the matrix $A$, via a unitary ``block-encoding'' $U_A$. Using signal processing techniques, one can create a block-encoding $U_{A^{-1}}$ of $A^{-1}$. The unitary $U_{A^{-1}}$ uses $\tilde{O}(\kappa)$ calls to the unitary $U_A$, and performs the operation (up to small error)
\begin{equation}
    U_{A^{-1}} \ket{\mathbf{b}}\ket{0^\ell} \approx c\ket{\mathbf{x}}\ket{0^\ell} + \ket{\perp}
\end{equation}
where $\ket{\perp}$ is in the null space of the projector $I \otimes \ket{0^{\ell}}\bra{0^\ell}$, and $\lvert c \rvert = O(1/\kappa)$. A measurement of the ancilla register then produces the desired state $\ket{\mathbf{x}}$ with probability $|c|^2 = O(1/\kappa^2)$. Thus, to produce $\ket{\mathbf{x}}$, it suffices to create $O(\kappa^2)$ copies of $\ket{\mathbf{b}}$ and make $\tilde{O}(\kappa)$ queries to $U_A$ for each copy, for a total of $\tilde{O}(\kappa^3)$ queries. In principle, these copies could be processed in parallel, and, as described above, our method offers a minimal-depth approach for preparing $\ket{\mathbf{b}}^{\otimes(O(\kappa^2))}$ with only $O(N)$ ancillae. Using amplitude amplification, the number of queries to $U_A$ can be reduced to $\tilde{O}(\kappa^2)$, which would involve $O(\kappa)$ \textit{serial} reflections about the state $\ket{\mathbf{b}}$, an operation that can be performed optimally by our protocol (see App.~\ref{sec:input_not_0}). Finally, using variable-time amplitude amplification \cite{ambainis2012variable,chakraborty2018BlockMatrixPowers} or ideas from adiabatic quantum computing \cite{subasi2019adiabatic,costa2022optimal}, the $U_A$ query complexity can be further reduced to $O(\kappa)$. This process also requires $O(\kappa)$ serial applications of our state-preparation protocol.

\subsection{Block-encodings}
We have already seen above how block-encodings are useful as a framework for giving a quantum algorithm, such as Hamiltonian simulation or quantum linear systems solvers, access to some underlying data \cite{chakraborty2018BlockMatrixPowers, gilyen2018QSingValTransf,martyn2021grandUnification}. Quantum state preparation is also a useful ingredient for constructing block-encodings.

The definition of a block-encoding of a matrix $A$ is a unitary $U_A$ where $A$ appears in the upper left block, scaled by some constant $\alpha$ such that $\lVert A/\alpha \rVert \leq 1$:
\begin{equation}
    U_A = \left[\begin{matrix}
        A/\alpha & \cdot \\
        \cdot & \cdot 
    \end{matrix}\right]\,.
\end{equation}
For example, given a Hamiltonian $H$ representing some physical system, one may seek a block-encoding $U_H$ of $H$, which can then be used to build algorithms that produce the ground state or thermal state of $H$. It can also be used to perform Hamiltonian simulation via qubitization \cite{low2019hamiltonian,gilyen2018QSingValTransf}, a technique that can have superior performance to the LCU method discussed previously.

How the block-encoding $U_A$ is constructed depends on the contents of $A$ and the context of the algorithm. Many of the most common methods explicitly involve the need for QSP or controlled QSP (see Sec.~4.2 of \cite{gilyen2018QSingValTransf}).

For example, when $A$ is a Gram matrix---that is, a matrix for which entries $A_{ij} = \braket{\psi_i|\phi_j}$ for some sets $\{\ket{\psi_i}\}_i$, $\{\ket{\phi_j}\}_j$---one can provide a block encoding of $A$ as $U_A = U_L^\dagger U_R$, where $U_L$ and $U_R$ are unitaries that prepare the arbitrary states $\ket{\psi_i}$ and $\ket{\phi_j}$ \cite[Lemma 47]{gilyen2018QSingValTransf}). A similar approach can be used to construct block-encodings of arbitrary matrices \cite{chakraborty2018BlockMatrixPowers,clader2023quantum}.

Both $U_L$ and $U_R$ can be constructed using our controlled state preparation protocol with the optimized depth of $O(\log (N))$ and $O(N)$ ancilla qubits.


\section{Code Availability}\label{sec:code}
We present two gate-level implementation examples at \url{https://github.com/guikaiwen/QSP_Paper_Artifact}, including:
\begin{itemize}
    \item a 2 + 2 case ($m = 2$, $n = 4$) case without COPY
    \item a 1 + 2 case ($m = 1$, $n = 3$) case with COPY
\end{itemize}
\section*{Acknowledgments}
We thank Connor Hann, Yunong Shi, Erhan Bas, Pei Zeng, Ming Yuan, Siteng Kang, Kyle Brubaker, Martin Schuetz, Fei Chen, Liang Jiang, Eric Kessler, Senrui Chen, Sisi Zhou, and Gideon Lee for their helpful discussions. This work is funded in part by EPiQC, an NSF Expedition in Computing, under award CCF-1730449. FTC is Chief Scientist for Quantum Software at Infleqtion and an advisor to Quantum Circuits, Inc.\\

\bibliographystyle{quantum}
\bibliography{reference.bib}

\begin{thebibliography}{10}

\bibitem{biamonte2017quantum}
Jacob Biamonte, Peter Wittek, Nicola Pancotti, Patrick Rebentrost, Nathan Wiebe, and Seth Lloyd.
\newblock ``Quantum machine learning''.
\newblock \href{https://dx.doi.org/http://dx.doi.org/10.1038/nature23474}{Nature {\bf 549}, 195--202}~(2017).

\bibitem{lloyd2014quantum}
Seth Lloyd, Masoud Mohseni, and Patrick Rebentrost.
\newblock ``Quantum principal component analysis''.
\newblock \href{https://dx.doi.org/http://dx.doi.org/10.1038/nphys3029}{Nature Physics {\bf 10}, 631--633}~(2014).

\bibitem{kerenidis2017quantum}
Iordanis Kerenidis and Anupam Prakash.
\newblock ``Quantum recommendation systems''.
\newblock In 8th Innovations in Theoretical Computer Science Conference (ITCS 2017).
\newblock \href{https://dx.doi.org/http://dx.doi.org/10.4230/LIPIcs.ITCS.2017.49}{Volume~67 of Leibniz International Proceedings in Informatics (LIPIcs), pages 49:1--49:21}.
\newblock ~(2017).

\bibitem{rebentrost2018quantum}
Patrick Rebentrost, Adrian Steffens, Iman Marvian, and Seth Lloyd.
\newblock ``Quantum singular-value decomposition of nonsparse low-rank matrices''.
\newblock \href{https://dx.doi.org/http://dx.doi.org/10.1103/PhysRevA.97.012327}{Physical review A {\bf 97}, 012327}~(2018).

\bibitem{kerenidis2019q}
Iordanis Kerenidis, Jonas Landman, Alessandro Luongo, and Anupam Prakash.
\newblock ``q-means: A quantum algorithm for unsupervised machine learning''.
\newblock \href{https://proceedings.neurips.cc/paper/2019/hash/16026d60ff9b54410b3435b403afd226-Abstract.html}{Advances in neural information processing systems}~(2019).

\bibitem{kerenidis2021quantum}
Iordanis Kerenidis and Jonas Landman.
\newblock ``Quantum spectral clustering''.
\newblock \href{https://dx.doi.org/http://dx.doi.org/10.1103/PhysRevA.103.042415}{Physical Review A {\bf 103}, 042415}~(2021).

\bibitem{rebentrost2014quantum}
Patrick Rebentrost, Masoud Mohseni, and Seth Lloyd.
\newblock ``Quantum support vector machine for big data classification''.
\newblock \href{https://dx.doi.org/http://dx.doi.org/10.1103/PhysRevLett.113.130503}{Physical review letters {\bf 113}, 130503}~(2014).

\bibitem{schuld2021machine}
Maria Schuld and Francesco Petruccione.
\newblock ``Machine learning with quantum computers''.
\newblock \href{https://dx.doi.org/http://dx.doi.org/10.1007/978-3-030-83098-4}{Springer}. ~(2021).

\bibitem{berry2015simulating}
Dominic~W Berry, Andrew~M Childs, Richard Cleve, Robin Kothari, and Rolando~D Somma.
\newblock ``Simulating {H}amiltonian dynamics with a truncated {T}aylor series''.
\newblock \href{https://dx.doi.org/http://dx.doi.org/10.1103/PhysRevLett.114.090502}{Physical review letters {\bf 114}, 090502}~(2015).

\bibitem{berry2015hamiltonian}
Dominic~W Berry, Andrew~M Childs, and Robin Kothari.
\newblock ``Hamiltonian simulation with nearly optimal dependence on all parameters''.
\newblock In 2015 IEEE 56th annual symposium on foundations of computer science.
\newblock \href{https://dx.doi.org/http://dx.doi.org/10.1109/FOCS.2015.54}{Pages 792--809}.
\newblock IEEE~(2015).

\bibitem{low2017optimal}
Guang~Hao Low and Isaac~L Chuang.
\newblock ``Optimal {H}amiltonian simulation by quantum signal processing''.
\newblock \href{https://dx.doi.org/http://dx.doi.org/10.1103/PhysRevLett.118.010501}{Physical review letters {\bf 118}, 010501}~(2017).

\bibitem{low2019hamiltonian}
Guang~Hao Low and Isaac~L Chuang.
\newblock ``Hamiltonian simulation by qubitization''.
\newblock \href{https://dx.doi.org/http://dx.doi.org/10.22331/q-2019-07-12-163}{Quantum {\bf 3}, 163}~(2019).

\bibitem{harrow2009quantum}
Aram~W Harrow, Avinatan Hassidim, and Seth Lloyd.
\newblock ``Quantum algorithm for linear systems of equations''.
\newblock \href{https://dx.doi.org/http://dx.doi.org/10.1103/PhysRevLett.103.150502}{Physical review letters {\bf 103}, 150502}~(2009).

\bibitem{ambainis2012variable}
Andris Ambainis.
\newblock ``Variable time amplitude amplification and quantum algorithms for linear algebra problems''.
\newblock In STACS'12 (29th Symposium on Theoretical Aspects of Computer Science).
\newblock \href{https://dx.doi.org/http://dx.doi.org/10.4230/LIPIcs.STACS.2012.636}{Volume~14, pages 636--647}.
\newblock LIPIcs~(2012).

\bibitem{wossnig2018quantum}
Leonard Wossnig, Zhikuan Zhao, and Anupam Prakash.
\newblock ``Quantum linear system algorithm for dense matrices''.
\newblock \href{https://dx.doi.org/http://dx.doi.org/10.1103/PhysRevLett.120.050502}{Physical review letters {\bf 120}, 050502}~(2018).

\bibitem{low2018trading}
Guang~Hao Low, Vadym Kliuchnikov, and Luke Schaeffer.
\newblock ``Trading {T}-gates for dirty qubits in state preparation and unitary synthesis''.
\newblock \href{http://dx.doi.org/10.48550/arXiv.1812.00954}{arXiv.1812.00954}~(2018).

\bibitem{sun2023asymptotically}
Xiaoming Sun, Guojing Tian, Shuai Yang, Pei Yuan, and Shengyu Zhang.
\newblock ``Asymptotically optimal circuit depth for quantum state preparation and general unitary synthesis''.
\newblock \href{https://dx.doi.org/http://dx.doi.org/10.1109/TCAD.2023.3244885}{IEEE Transactions on Computer-Aided Design of Integrated Circuits and Systems}~(2023).

\bibitem{yuan2023optimal}
Pei Yuan and Shengyu Zhang.
\newblock ``Optimal (controlled) quantum state preparation and improved unitary synthesis by quantum circuits with any number of ancillary qubits''.
\newblock \href{https://dx.doi.org/http://dx.doi.org/10.22331/q-2023-03-20-956}{Quantum {\bf 7}, 956}~(2023).

\bibitem{zhang2022quantum}
Xiao-Ming Zhang, Tongyang Li, and Xiao Yuan.
\newblock ``Quantum state preparation with optimal circuit depth: Implementations and applications''.
\newblock \href{https://dx.doi.org/http://dx.doi.org/10.1103/PhysRevLett.129.230504}{Physical Review Letters {\bf 129}, 230504}~(2022).

\bibitem{clader2023quantum}
B~David Clader, Alexander~M Dalzell, Nikitas Stamatopoulos, Grant Salton, Mario Berta, and William~J Zeng.
\newblock ``Quantum resources required to block-encode a matrix of classical data''.
\newblock \href{https://dx.doi.org/http://dx.doi.org/10.1109/TQE.2022.3231194}{IEEE Transactions on Quantum Engineering}~(2023).

\bibitem{rosenthal2021query}
Gregory Rosenthal.
\newblock ``Query and depth upper bounds for quantum unitaries via grover search''.
\newblock \href{https://dx.doi.org/10.48550/arXiv.2111.07992}{arXiv.2111.07992}~(2021).

\bibitem{ross2016optimal}
Neil~J. Ross and Peter Selinger.
\newblock ``Optimal ancilla-free {C}lifford+{T} approximation of z-rotations''.
\newblock \href{https://dl.acm.org/doi/10.5555/3179330.3179331}{Quantum Info. Comput.}~(2016).

\bibitem{babbush2018encoding}
Ryan Babbush, Craig Gidney, Dominic~W Berry, Nathan Wiebe, Jarrod McClean, Alexandru Paler, Austin Fowler, and Hartmut Neven.
\newblock ``Encoding electronic spectra in quantum circuits with linear {T} complexity''.
\newblock \href{https://dx.doi.org/http://dx.doi.org/10.1103/PhysRevX.8.041015}{Physical Review X {\bf 8}, 041015}~(2018).

\bibitem{araujo2021divide}
Israel~F Araujo, Daniel~K Park, Francesco Petruccione, and Adenilton~J da~Silva.
\newblock ``A divide-and-conquer algorithm for quantum state preparation''.
\newblock \href{https://dx.doi.org/http://dx.doi.org/10.1038/s41598-021-85474-1}{Scientific reports {\bf 11}, 1--12}~(2021).

\bibitem{shende2008cnot}
Vivek~V. Shende and Igor~L. Markov.
\newblock ``On the {CNOT}-cost of {TOFFOLI} gates''.
\newblock \href{https://dl.acm.org/doi/10.5555/2011791.2011799}{Quantum Info. Comput.}~(2009).

\bibitem{smolin1996five}
John~A Smolin and David~P DiVincenzo.
\newblock ``Five two-bit quantum gates are sufficient to implement the quantum {F}redkin gate''.
\newblock \href{https://dx.doi.org/http://dx.doi.org/10.1103/PhysRevA.53.2855}{Physical Review A {\bf 53}, 2855}~(1996).

\bibitem{walker2009real}
Edward Walker.
\newblock ``The real cost of a {CPU} hour''.
\newblock \href{https://dx.doi.org/https://dx.doi.org/10.1109/MC.2009.135}{Computer {\bf 42}, 35--41}~(2009).

\bibitem{ding2020square}
Yongshan Ding, Xin-Chuan Wu, Adam Holmes, Ash Wiseth, Diana Franklin, Margaret Martonosi, and Frederic~T Chong.
\newblock ``Square: Strategic quantum ancilla reuse for modular quantum programs via cost-effective uncomputation''.
\newblock In 2020 ACM/IEEE 47th Annual International Symposium on Computer Architecture (ISCA).
\newblock \href{https://dx.doi.org/http://dx.doi.org/10.1109/ISCA45697.2020.00054}{Pages 570--583}.
\newblock IEEE~(2020).

\bibitem{plesch2012quantum}
Martin Plesch and \ifmmode \check{C}\else~\v{C}\fi{}aslav Brukner.
\newblock ``Quantum-state preparation with universal gate decompositions''.
\newblock \href{https://dx.doi.org/http://dx.doi.org/10.1103/PhysRevA.83.032302}{Phys. Rev. A {\bf 83}, 032302}~(2011).

\bibitem{zhang2023circuit}
Xiao-Ming Zhang and Xiao Yuan.
\newblock ``On circuit complexity of quantum access models for encoding classical data''.
\newblock \href{https://dx.doi.org/10.48550/arXiv.2311.11365}{arXiv.2311.11365}~(2023).

\bibitem{nielsen2002quantum}
Michael~A Nielsen and Isaac Chuang.
\newblock ``Quantum computation and quantum information''.
\newblock \href{https://dx.doi.org/http://dx.doi.org/10.1017/CBO9780511976667}{American Association of Physics Teachers}. ~(2002).

\bibitem{ruder2016overview}
Sebastian Ruder.
\newblock ``An overview of gradient descent optimization algorithms''.
\newblock \href{https://dx.doi.org/10.48550/arXiv.1609.04747}{arXiv.1609.04747}~(2016).

\bibitem{childs2018toward}
Andrew~M Childs, Dmitri Maslov, Yunseong Nam, Neil~J Ross, and Yuan Su.
\newblock ``Toward the first quantum simulation with quantum speedup''.
\newblock \href{https://dx.doi.org/http://dx.doi.org/10.1073/pnas.1801723115}{Proceedings of the National Academy of Sciences {\bf 115}, 9456--9461}~(2018).

\bibitem{chakraborty2018BlockMatrixPowers}
Shantanav Chakraborty, Andr{\'a}s Gily{\'e}n, and Stacey Jeffery.
\newblock ``The power of block-encoded matrix powers: {I}mproved regression techniques via faster {H}amiltonian simulation''.
\newblock In Proceedings of the 46th International Colloquium on Automata, Languages, and Programming ({ICALP}).
\newblock \href{https://dx.doi.org/http://dx.doi.org/10.4230/LIPIcs.ICALP.2019.33}{Pages 33:1--33:14}.
\newblock ~(2019).

\bibitem{gilyen2018QSingValTransf}
Andr{\'a}s Gily{\'e}n, Yuan Su, Guang~Hao Low, and Nathan Wiebe.
\newblock ``Quantum singular value transformation and beyond: {E}xponential improvements for quantum matrix arithmetics''.
\newblock In Proceedings of the 51st {ACM} Symposium on the Theory of Computing ({STOC}).
\newblock \href{https://dx.doi.org/http://dx.doi.org/10.1145/3313276.3316366}{Pages 193--204}.
\newblock ~(2019).

\bibitem{helgaker2013molecular}
Trygve Helgaker, Poul Jorgensen, and Jeppe Olsen.
\newblock ``Molecular electronic-structure theory''.
\newblock \href{https://dx.doi.org/http://dx.doi.org/10.1002/9781119019572}{John Wiley \& Sons}. ~(2013).

\bibitem{motta2020quantum}
Mario Motta, Tanvi~P Gujarati, Julia~E Rice, Ashutosh Kumar, Conner Masteran, Joseph~A Latone, Eunseok Lee, Edward~F Valeev, and Tyler~Y Takeshita.
\newblock ``Quantum simulation of electronic structure with a transcorrelated {H}amiltonian: improved accuracy with a smaller footprint on the quantum computer''.
\newblock \href{https://dx.doi.org/http://dx.doi.org/10.1039/D0CP04106H}{Physical Chemistry Chemical Physics {\bf 22}, 24270--24281}~(2020).

\bibitem{mcardle2020improving}
Sam McArdle and David~P Tew.
\newblock ``Improving the accuracy of quantum computational chemistry using the transcorrelated method''.
\newblock \href{https://dx.doi.org/10.48550/arXiv.2006.11181}{arXiv.2006.11181}~(2020).

\bibitem{bubeck2020entanglement}
Sebastien Bubeck, Sitan Chen, and Jerry Li.
\newblock ``Entanglement is necessary for optimal quantum property testing''.
\newblock In 2020 IEEE 61st Annual Symposium on Foundations of Computer Science (FOCS).
\newblock \href{https://dx.doi.org/http://dx.doi.org/10.1109/FOCS46700.2020.00070}{Pages 692--703}.
\newblock IEEE~(2020).

\bibitem{chen2022exponential}
Sitan Chen, Jordan Cotler, Hsin-Yuan Huang, and Jerry Li.
\newblock ``Exponential separations between learning with and without quantum memory''.
\newblock In 2021 IEEE 62nd Annual Symposium on Foundations of Computer Science (FOCS).
\newblock \href{https://dx.doi.org/http://dx.doi.org/10.1109/FOCS52979.2021.00063}{Pages 574--585}.
\newblock IEEE~(2022).

\bibitem{huang2022quantum}
Hsin-Yuan Huang, Michael Broughton, Jordan Cotler, Sitan Chen, Jerry Li, Masoud Mohseni, Hartmut Neven, Ryan Babbush, Richard Kueng, John Preskill, et~al.
\newblock ``Quantum advantage in learning from experiments''.
\newblock \href{https://dx.doi.org/http://dx.doi.org/10.1126/science.abn7293}{Science {\bf 376}, 1182--1186}~(2022).

\bibitem{shewchuk1994introduction}
Jonathan~Richard Shewchuk et~al.
\newblock ``An introduction to the conjugate gradient method without the agonizing pain''.
\newblock \href{https://dl.acm.org/doi/10.5555/865018}{1994 Technical Report}~(1994).

\bibitem{montanaro2016quantumFEM}
Ashley Montanaro and Sam Pallister.
\newblock ``Quantum algorithms and the finite element method''.
\newblock \href{https://dx.doi.org/http://dx.doi.org/10.1103/PhysRevA.93.032324}{Physical Review A {\bf 93}, 032324}~(2016).

\bibitem{montanaro2023quantum}
Ashley Montanaro and Changpeng Shao.
\newblock ``Quantum communication complexity of linear regression''.
\newblock \href{https://dx.doi.org/https://dx.doi.org/10.1145/3625225}{ACM Trans. Comput. Theory}~(2023).

\bibitem{subasi2019adiabatic}
Yi\u{g}it Suba\c{s}\i, Rolando~D. Somma, and Davide Orsucci.
\newblock ``Quantum algorithms for systems of linear equations inspired by adiabatic quantum computing''.
\newblock \href{https://dx.doi.org/http://dx.doi.org/10.1103/PhysRevLett.122.060504}{Phys. Rev. Lett. {\bf 122}, 060504}~(2019).

\bibitem{costa2022optimal}
Pedro~CS Costa, Dong An, Yuval~R Sanders, Yuan Su, Ryan Babbush, and Dominic~W Berry.
\newblock ``Optimal scaling quantum linear-systems solver via discrete adiabatic theorem''.
\newblock \href{https://dx.doi.org/http://dx.doi.org/10.1103/PRXQuantum.3.040303}{PRX Quantum {\bf 3}, 040303}~(2022).

\bibitem{martyn2021grandUnification}
John~M. Martyn, Zane~M. Rossi, Andrew~K. Tan, and Isaac~L. Chuang.
\newblock ``Grand unification of quantum algorithms''.
\newblock \href{https://dx.doi.org/http://dx.doi.org/10.1017/CBO9780511976667}{PRX Quantum {\bf 2}, 040203}~(2021).

\bibitem{quirk}
Craig Gidney.
\newblock ``Quirk: A drag-and-drop quantum circuit simulator''.
\newblock \url{https://algassert.com/quirk}~(2016).

\bibitem{dalzell2022socp}
Alexander~M Dalzell, B~David Clader, Grant Salton, Mario Berta, Cedric Yen-Yu Lin, David~A Bader, Nikitas Stamatopoulos, Martin~JA Schuetz, Fernando~GSL Brand{\~a}o, Helmut~G Katzgraber, et~al.
\newblock ``End-to-end resource analysis for quantum interior-point methods and portfolio optimization''.
\newblock \href{https://dx.doi.org/http://dx.doi.org/10.1103/PRXQuantum.4.040325}{PRX Quantum {\bf 4}, 040325}~(2023).

\end{thebibliography}

\newpage
\appendix

\onecolumngrid
\newpage

\section{Circuit implementations}\label{sec:appendix}

\subsection{Qubit Data Copying}\label{sec:data_copy_circ}

Our implementation of SPF and FLAG circuits will require additional ancillae that are used as part of coherent data-copying subroutines. That is, we perform the isometry, which was also utilized, for example, in Refs.~\cite{sun2023asymptotically, yuan2023optimal}
\begin{equation}\label{eq:copy_isometry}
    \alpha \ket{0} +\beta \ket{1} \mapsto \alpha \ket{0^{c}} + \beta \ket{1^{c}}
\end{equation}
using only CNOT gates and in depth $\lceil \log_2(c) \rceil $. We will need this subroutine because we often want to perform many CSWAP gates with disjoint targets but controlled by the same register. We can accomplish this in shallower depth by copying the control register and applying the CSWAPs in parallel. Clader et al.~\cite{clader2023quantum} avoided this issue by assuming the ability to do FANOUT-CNOT with an arbitrary number of targets, which is a Clifford gate, in a single time step.

The protocol for copying consists of $\lceil \log_2(c)\rceil $ sequential depth-1 operations, labeled by $\bigoplus_0,\ldots,\bigoplus_{\lceil \log_2(c)\rceil -1}$. We refer to these as \textit{copying layers}. In particular, $\bigoplus_j$ consists of a single layer of $2^j$ parallel CNOTs where the targets of the CNOTs are fresh ancillae. This is illustrated in Fig.~\ref{fig:State_Copying_Subroutine} and described in Algorithm~\ref{algo:state_copy}. Notably, while the total number of qubits is $c$ and the depth is $\lceil \log_2(c)\rceil $, most of the ancillae need not be allocated until close to the end of the protocol, and thus the spacetime allocation is $\Theta(c)$, rather than $O(c \log(c))$. 

\begin{figure}[h!]
\centering
\scalebox{0.8}{
\begin{quantikz}[row sep={2em,between origins}, column sep=1em, align equals at=1.5]
\lstick{$\alpha\ket{0} + \beta\ket{1}$} & \gate[8,nwires={2,3,4,5,6,7,8}]{\mathrm{COPY}_8} & \qw && &\gate[2,nwires={2}]{\bigoplus_0}&\gate[4,nwires={3,4}]{\bigoplus_1}&\gate[8,nwires={5,6,7,8}]{\bigoplus_2}&\qw&&&&& \ctrl{1} \slice[]{} & \qw  &[1em]\qw & \ctrl{2} & \qw \slice[]{} &[1em]  \qw &\qw & \ctrl{4} &\qw&\qw&\qw& \qw\rstick[wires=8]{\Large $\substack{\alpha\ket{0^8} \\+ \\\beta\ket{1^8}}$} \\
&&\qw&&&&&&\qw&&&& \lstick{$\ket{0}$}&\targ{}                                              & \qw &\qw & \qw &\ctrl{2} & \qw &\qw &\qw& \ctrl{4}  &\qw&\qw& \qw \\
 &&\qw&&                                             &&&&\qw&&&&& & &\lstick{$\ket{0}$} &\targ{} &\qw & \qw &\qw &\qw&\qw& \ctrl{4}&\qw& \qw \\
 &&\qw&&                                             &&&&\qw&&&&& & &\lstick{$\ket{0}$} &\qw &\targ{} & \qw &\qw &\qw&\qw&\qw& \ctrl{4} & \qw \\
&&\qw&\rb{0.5cm}{=}&                                          &&&&\qw&\rb{0.5cm}{=} &&& && & && & &\lstick{$\ket{0}$} &  \targ{} &\qw&\qw&\qw& \qw \\
%
 & &\qw &&& &                                             &&\qw&& &  &&&& && &  &\lstick{$\ket{0}$} &\qw& \targ{} &\qw &\qw & \qw  \\
%
   & &\qw &&& &                                             &&\qw&&& &&&&&&&&\lstick{$\ket{0}$}&\qw&\qw& \targ{} &\qw & \qw \\
  & &\qw &&& &                                            &&\qw&&&  &&&&&& &&\lstick{$\ket{0}$}&\qw&\qw&\qw& \targ{}  & \qw   
\end{quantikz}
}
\caption{Example circuit implementing state copy subroutine for $c=8$ qubits. An arbitrary single-qubit state $\alpha \ket{0}+\beta \ket{1}$ is copied into $c=2^t$ qubits (in the sense of mapping to $\alpha \ket{0^c}+\beta\ket{1^c}$) via a series of $t$ layers of CNOTs, denoted $\bigoplus_0,\ldots,\bigoplus_{t-1}$. The $c-1$ qubits are fresh ancilla qubits that begin in the state $\ket{0}$. The depth is $\log_2(c)$, and the spacetime allocation is only $c$, as seen in the right diagram, by waiting to allocate ancillae until the final moment that they are needed.}
\label{fig:State_Copying_Subroutine}
\end{figure}
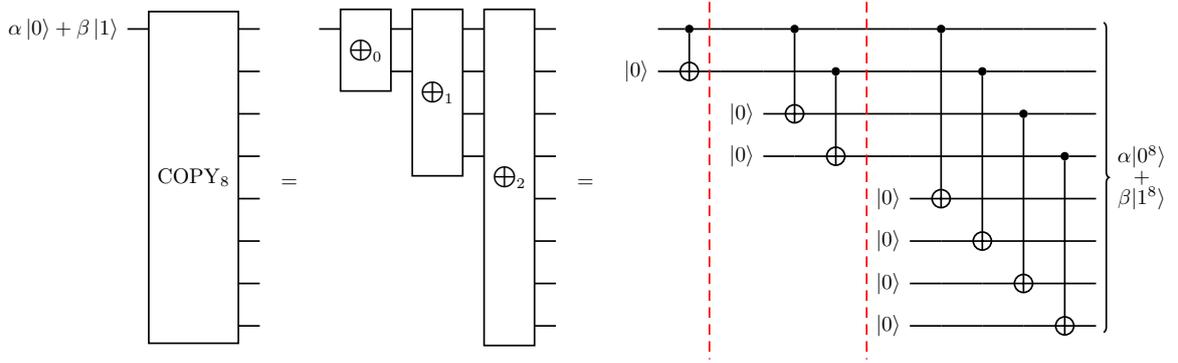

\begin{algorithm}[H]
    \caption{State-Copy Subroutine}\label{algo:state_copy}
    \begin{algorithmic}[1]
    \Procedure{Copy${}_c$}{$R$}
    
    \Comment{$R$ is a register of size $c$, where $c$ is assumed to be the power of 2 for simplicity}
    \For{i \textbf{in} range($\log_2(c)$)} \Comment{Each value of $i$ occupies $O(1)$ depth}
        \State $\bigoplus_i(R)$
    \EndFor
    \EndProcedure

    \Procedure{$\bigoplus_t$}{$R$}
        \For{j in range($2^{t}$)} \Comment{All values of $j$ performed in parallel}
            \State CNOT($R_{jc2^{-t}}$, $R_{jc2^{-t}+c2^{-t- 1}}$)
        \EndFor
    \EndProcedure
    
    \end{algorithmic}
    \Comment{Total D$_{\mathrm{exact}}$: $\log_2(c)$}

    \Comment{Total D$_{\mathrm{approx}}$: $\log_2(c)$}
    
    \Comment{Total SA$_{\mathrm{exact}}$: $2c-2$}

    \Comment{Total SA$_{\mathrm{approx}}$: $2c-2$}
\end{algorithm}

\subsection{Parallel CSWAP}
With the help of the State-Copy subroutine, we can now effectively perform the parallel CSWAP operations using single-qubit gates and two-qubit CNOT gates, without using the FANOUT-CNOT gate that was required in Ref.~\cite{clader2023quantum} to perform the parallel CSWAP when sharing the same control bits.

We denote a layer of $2^t$ parallel CSWAPs by $\mathrm{CS}_t$, as depicted in Fig.~\ref{fig:CS} and Algorithm \ref{algo:CS}. 

\begin{figure}[h!]
\centering
\begin{quantikz}[row sep={1.0em,between origins}, column sep=1em, align equals at=2]
    \qw & \gate{C_0} \vqw{1} & \qw  \\ [4em]
    \qw & \mltg{2}{S_0}      & \qw  \\
    \qw &                    & \qw 
\end{quantikz} \;\;= 
\begin{quantikz}[row sep={1em,between origins}, column sep=1em, align equals at=2]
\qw &\ctrl{2} & \qw \\[4em]
\qw & \targX{} & \qw \\
\qw & \swap{-1} & \qw
\end{quantikz}\qquad \;\;
\begin{quantikz}[row sep={1em,between origins}, column sep=1em, align equals at=3]
    \qw & \mltg{2}{C_1} \vqw{2} & \qw  \\ [-0.3em]
    \qw &                       & \qw  \\ [3.3em]
    \qw & \mltg{4}{S_1}      & \qw  \\
    \qw &                    & \qw  \\
    \qw &                    & \qw  \\
    \qw &                    & \qw  
\end{quantikz} \;\;= 
\begin{quantikz}[row sep={1em,between origins}, column sep=1em, align equals at=3]
\qw &\ctrl{4} & \qw & \qw \\ [-0.3em]
\qw &\qw & \ctrl{4} & \qw \\[3.3em]
\qw & \targX{} & \qw & \qw \\
\qw & \qw       & \targX{} & \qw \\
\qw & \swap{-2} & \qw & \qw \\
\qw & \qw       & \swap{-2} & \qw 
\end{quantikz} \qquad \;\;
\begin{quantikz}[row sep={1em,between origins}, column sep=1em, align equals at=5]
    \qw & \mltg{4}{C_2} \vqw{4} & \qw  \\ [-0.3em]
    \qw &                       & \qw  \\[-0.3em]
    \qw &                       & \qw  \\ [-0.3em]
    \qw &                       & \qw  \\ [1.9em]
    \qw & \mltg{8}{S_2}      & \qw  \\
    \qw &                    & \qw  \\
    \qw &                    & \qw  \\
    \qw &                    & \qw  \\
    \qw &                    & \qw  \\
    \qw &                    & \qw  \\
    \qw &                    & \qw  \\
    \qw &                    & \qw  
\end{quantikz} \;\;= 
\begin{quantikz}[row sep={1em,between origins}, column sep=1em, align equals at=5]
\qw &\ctrl{8} & \qw      & \qw      & \qw        & \qw \\[-0.3em]
\qw &\qw      & \ctrl{8} & \qw      & \qw        & \qw \\[-0.3em]
\qw &\qw      & \qw      & \ctrl{8} & \qw        & \qw \\[-0.3em]
\qw &\qw      & \qw      & \qw      & \ctrl{8}  & \qw \\[1.9em]
\qw & \targX{}& \qw      & \qw      & \qw        & \qw \\
\qw & \qw     & \targX{} & \qw      & \qw        & \qw  \\
\qw & \qw     & \qw      & \targX{} & \qw        & \qw  \\
\qw & \qw     & \qw      & \qw      & \targX{}   & \qw  \\
\qw &\swap{-4}& \qw      & \qw      & \qw        & \qw  \\
\qw & \qw     &\swap{-4} & \qw      & \qw        & \qw  \\
\qw & \qw     & \qw      & \swap{-4}& \qw        & \qw  \\
\qw & \qw     & \qw      & \qw      & \swap{-4}  & \qw   
\end{quantikz}
\caption{\label{fig:CS} Implementation of $\mathrm{CS}_t$ for $t=0,1,2$, which can be accomplished in one layer of parallel CSWAP gates.}
\end{figure}
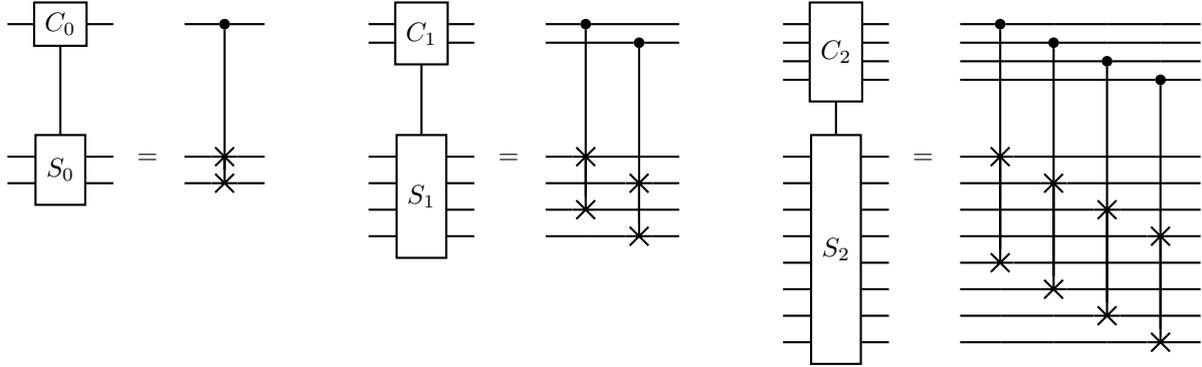

\begin{algorithm}[H]
    \caption{Parallel CSWAP}\label{algo:CS}
    \begin{algorithmic}[1]
    \Procedure{$\mathrm{CS}_t$}{$R, S$}
    
    \Comment{$t$: log number of parallel control-swaps, $R$: control bit data register with at least $2^t$ qubits, $S$: target bit angle register with at least $2^{t+1}$ qubits (note that the subscript here labels the qubit indices of every \textit{single} register)}
    \For{$i$ in range($2^t$)} \Comment{All values of $i$ performed in parallel}
        \State CSWAP$(R_{i}, S_{i}, S_{i + 2^t})$
    \EndFor
    \EndProcedure
    \end{algorithmic}

    

\end{algorithm}

\subsection{COPY Layer Application Example}
In Fig.~\ref{fig:COPY_eq}, we illustrate how we can effectively parallelize many CSWAP gates with the same control bits and different target bits with \textit{constant} ancilla and total depth overhead. The same logic can also apply to Toffoli gates.
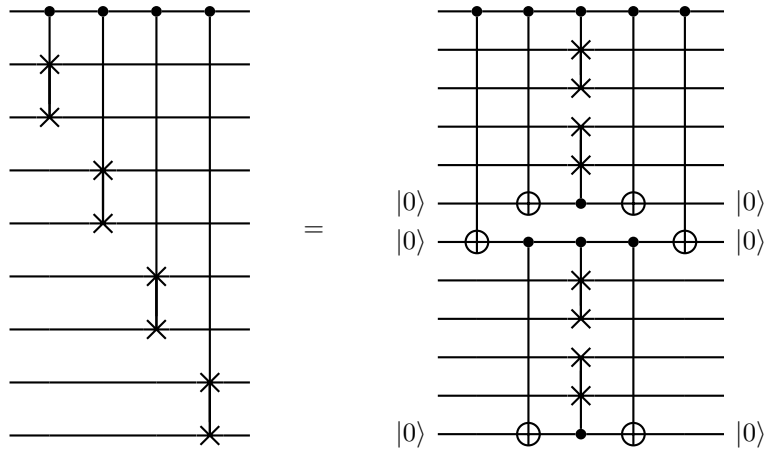
\begin{figure}[h!]
\centering
\begin{quantikz}[row sep={2em,between origins}, column sep=1em, align equals at=1]
    \qw & \ctrl{2} & \ctrl{4} & \ctrl{6} & \ctrl{8} &\qw \\
    \qw & \swap{1} &\qw & \qw & \qw &\qw\\
    \qw & \swap{-1} & \qw & \qw & \qw &\qw\\
    \qw & \qw & \swap{1} &\qw & \qw &\qw\\
    \qw & \qw & \swap{-1} &\qw & \qw &\qw \\
    \qw & \qw & \qw &\swap{1} & \qw &\qw &&\rb{1.2cm}{=}&&\\
    \qw & \qw & \qw & \swap{-1} & \qw&\qw\\
    \qw & \qw &\qw & \qw & \swap{1}&\qw\\
    \qw & \qw &\qw & \qw & \swap{-1} &\qw\\
\end{quantikz}
\begin{quantikz}[row sep={1.45em,between origins}, column sep=1em, align equals at=1]
&&\ctrl{6} &\ctrl{5} & \ctrl{2} & \ctrl{5} & \ctrl{6} & \qw\\
&& \qw& \qw & \swap{1} & \qw & \qw & \qw\\
&& \qw& \qw & \swap{-1} & \qw & \qw & \qw\\
&& \qw& \qw & \swap{1} & \qw & \qw & \qw\\
&& \qw& \qw & \swap{-1} & \qw & \qw & \qw\\
&\lstick{$\ket{0}$} &\qw& \targ{} & \ctrl{-2} & \targ{} & \qw & \qw \rstick{$\ket{0}$}\\
&\lstick{$\ket{0}$} &\targ{} &\ctrl{5} & \ctrl{2} & \ctrl{5} & \targ{} &\qw \rstick{$\ket{0}$}\\
&& \qw& \qw & \swap{1} & \qw & \qw & \qw\\
&& \qw& \qw & \swap{-1} & \qw & \qw & \qw\\
&& \qw& \qw & \swap{1} & \qw & \qw & \qw\\
&& \qw& \qw & \swap{-1} & \qw & \qw & \qw\\
&\lstick{$\ket{0}$} &\qw& \targ{} & \ctrl{-2} & \targ{} & \qw & \qw \rstick{$\ket{0}$}\\
\end{quantikz}
\caption{\label{fig:COPY_eq} Circuit equivalence with and without COPY layers: note that we are only introducing an additional $O(\log(N))$ CNOT layers and $O(N)$ ancilla qubits in the $\ket{0}$ state, same scale as the total circuit depth and total number of ancilla qubits.}
\end{figure}

\newpage

\subsection{SPF}\label{sec:SPF_circ}

Now we describe the implementation of the SPF circuit. The circuit acts on $m + M-1$ qubits, and our implementation uses additional $M/2-m$ ancillae that begin in and end in $\ket{0}$. The circuit is similar to that described in Ref.~\cite{clader2023quantum}, but in our implementation, all gates are two-qubit gates. The idea is to enact $\mathrm{R}_y(\theta_{s,p})$ rotations onto the $m$ data qubit registers in sequence, but where which pair $(s,p)$ is used for the $s$th rotation depends on the value of the first $s-1$ registers. We assume that each of the $\ket{\theta_{s,p}}$ states have already been prepared. The rotations are enacted by, for each $s=0,\ldots,m-1$, swapping (``injecting'') the correct $\ket{\theta_{s,p}}$ state with the $s$th data qubit. Exactly one $\theta_{s,p}$ will be injected for each $s$. Since $(0,0)$ is the only $(s,p)$ with $s=0$, the first step is to swap $\ket{\theta_{0,0}}$ with the first data qubit. The second step is to, conditioned on the first data register being $\ket{1}$, swap the registers holding $\ket{\theta_{1,0}}$ and $\ket{\theta_{1,1}}$, and then inject the register originally holding $\ket{\theta_{1,0}}$ into the second data register (using a swap gate). In general, the $s$th rotation is enacted by first swapping the correct $\ket{\theta_{s,p}}$ to the top of the size-$2^s$ register initially holding the states $\ket{\theta_{s,p}}$, and then swapping the top register into the $s$th data qubit. Naively, this would seem to require at least $O(m^2)$ depth, as there are $m$ rotations, and deciding which qubit to inject for each requires $O(m)$ rounds of controlled-swapping. Ref.~\cite{clader2023quantum} observed that this can be reduced to $O(m)$ depth, assuming access to FANOUT-CNOT with $O(M)$ targets that act in unit time. To decompose such a FANOUT-CNOT gate into two-qubit gates, at least $O(m)$ depth would typically be required; thus, the depth of the Clader et al. \cite{clader2023quantum} SPF implementation in the $\{U(2), \mathrm{CNOT}\}$ gate set would be $O(m^2)$, not $O(m)$. 

We give a circuit compilation method for reducing the depth back to $O(m)$ using a few extra ancillae. We do so by interspersing the CSWAPs with copying layers. In particular, for each $u = s+1,s+2,\ldots,m-1$, data qubit $s$ is needed to control $2^{u-1-s}$ CSWAPs on the angle register that starts in state $\ket{\Theta_{u}}$. Thus, once we have applied $u-1-s$ copying layers to it, we are free to perform the CSWAPs in parallel; we denote the operation that performs $2^t$ CSWAPs in parallel by $\mathrm{CS}_t$, following Fig.~\ref{fig:CS}. While we apply CSWAPs controlled on data qubit $s$, we are free to copy other data qubits in parallel. Overall, we manage to perform the operation in only $O(m)$ depth. 

We describe the SPF circuit in several ways. In Fig.~\ref{fig:SPF}, we give a complete example of SPF for $m=4$, illustrating how ancilla qubits are used to host copies of the data qubits for the purpose of controlling swaps. However, it is hard to fully see the pattern for small values of $m$. In Fig.~\ref{fig:first_part_SPF}, we show the first half of the SPF circuit for a larger $m=7$ example and label the various registers. In this figure, one can easily verify that the depth is $O(m)$. Finally, in Algorithm \ref{algo:SPF}, we give pseudocode for the implementation of SPF, consistent with the labels appearing in Fig.~\ref{fig:first_part_SPF}. To understand the idea behind the circuits, we now describe several rules that must be obeyed (and the reasoning why):
\begin{enumerate}
    \item $\bigoplus_j(D_q)$ needs to happen after $\bigoplus_i(D_q)$ if $j < i$, $\forall q$: 
    \begin{itemize}
        \item This is true by construction, shown in Fig.~\ref{fig:State_Copying_Subroutine}.
    \end{itemize}
    \item $\bigoplus_j(D_q)$ needs to happen after SWAP($D_{q,0}, A_{q,0}$), $\forall j$, $\forall q$:
    \begin{itemize}
        \item This is because the injection of the angle state into data qubit $q$ must occur before we can begin to copy data qubit $q$.
    \end{itemize}
    \item $\mathrm{CS}_k(D_q, \cdot)$ needs to happen after $\bigoplus_{k-1}(D_q)$, $\forall q$: 
    \begin{itemize}
        \item This makes sure we can have enough qubits for the CSWAP sequence to control on in order to guarantee maximal parallelization.
    \end{itemize}
    \item $\mathrm{CS}_j(A_q)$ needs to iterate through all $j$ values in the order of $q-1$ to $0$ before SWAP($D_{q,0}, A_{q,0}$), $\forall q$: 
    \begin{itemize}
        \item This makes sure all the values are swapped into the correct place for the current qubit based on the previous qubits' amplitudes before the current qubit is pumped into the data register.
    \end{itemize}
\end{enumerate}

Because of the existence of criteria $\#2$ and $\#4$, we cannot simply execute the STATE-COPY subroutine at one time from each $D_q$ as we do in the FLAG circuit compilation presented in the next subsection (algorithm~\ref{algo:FLAG}). Doing that would result in $O(m^2)$ depth. Therefore, we have to intersperse the CSWAP sequences with the copying layers.

The total number of copying layers of data qubits $j$ that we need is $m-2-j$ (for $j=0,\ldots,m-3$). Data qubits $m-2$ and $m-1$ need not be copied. Thus, the total number of ancillae needed is $\sum_{j=0}^{m-3}(2^{m-2-j}-1) = 2^{m-1}-m$.

\begin{figure}[h!]
\makebox[\textwidth][c]{
\scalebox{0.67}{
\begin{quantikz}[row sep={2.5em,between origins}, column sep=1em, align equals at=1.5]
\lstick{$\ket{0}$} & \qw & \qw & \gate[13,nwires={2,4}]{\text{SPF}} & \qw && & \qw & \qw & \swap{6} & \gate{C_0}  \vqw{7} & \mltg{2}{\bigoplus_0} & \mltg{2}{C_1}  & \mltg{2}{\bigoplus_1} & \mltg{2}{C_2}  &\qw & \qw & \qw & \qw & \qw & \qw & \qw & \mltg{2}{C_2} & \mltg{2}{\bigoplus_1} & \mltg{2}{C_1} & \mltg{2}{\bigoplus_0} & \gate{C_0}  \vqw{7} & \qw& \qw& \qw\\
                 & & & \linethrough  & &               && &\lstick{$\ket{0^3}$}&\qw            & \qw           & \qw & \qw  \vqw{8} & \qw            & \qw   \vqw{10} & \qw & \qw & \qw           & \qw            & \qw      & \qw & \qw & \qw  \vqw{10} & \qw & \qw  \vqw{8} & \qw & \qw & \rstick{$\ket{0^3}$} \qw\\
\lstick{$\ket{0}$} & \qw                                              & \qw &       & \qw &               & & \qw & \qw & \qw            & \qw           & \swap{5}            & \qw           & \gate{C_0}  \vqw{7}] & \qw &
\mltg{2}{\bigoplus_0} &\mltg{2}{C_1}& \qw           & \qw            & \qw       & \mltg{2}{C_1} & \mltg{2}{\bigoplus_0}  & \qw & \gate{C_0}  \vqw{7} & \qw & \qw & \qw & \qw & \qw& \qw\\
&&&&&   &&&\lstick{$\ket{0}$}& \qw            & \qw           & \qw            & \qw           & \qw            & \qw &
\qw &\qw  \vqw{8} & \qw & \qw            & \qw       & \qw  \vqw{8} & \qw & \qw & \qw & \qw & \qw & \qw & \rstick{$\ket{0}$} \qw \\
\lstick{$\ket{0}$} & \qw                                              & \qw &    & \qw &              & & \qw & \qw &\qw            & \qw           & \qw            & \qw           & \qw            & \qw & \swap{5} & \qw & \gate{C_0} \vqw{7}           & \qw & \gate{C_0}  \vqw{7}                & \qw & \qw  & \qw & \qw & \qw & \qw & \qw & \qw & \qw& \qw\\
\lstick{$\ket{0}$} & \qw                                              & \qw &     & \qw & \rb{-1.5cm}{=}              & & \qw & \qw & \qw            & \qw           & \qw            & \qw           & \qw            & \qw &
\qw & \qw & \qw           & \swap{6} & \qw                 & \qw & \qw  & \qw & \qw & \qw & \qw & \qw & \qw& \qw& \qw\\
\lstick{$\ket{\Theta_0}$} & \qw                                              & \qw & \qw   & \qw &               & & \qw & \qw & \swap{}         & \qw & \qw        & \qw & \qw         & \qw &
\qw & \qw & \qw & \qw        & \qw & \qw & \qw  & \qw & \qw & \qw & \qw & \qw & \qw & \qw& \qw\\
 \lstick[wires=2]{$\ket{\Theta_1}$}              & \qw                                              & \qw & \qw   & \qw &               & & \qw & \qw &\qw         & \mltg{2}{S_0} & \swap{}        & \qw  & \qw         & \qw &
\qw & \qw & \qw & \qw         & \qw & \qw & \qw  & \qw & \qw & \qw & \qw & \mltg{2}{S_0} & \qw & \qw& \qw\\
                 & \qw & \qw & \qw   & \qw &               & & \qw & \qw & \qw            & \qw   & \qw            & \qw   & \qw            & \qw &
\qw & \qw & \qw   & \qw            & \qw   & \qw & \qw  & \qw & \qw & \qw & \qw & \qw & \qw & \qw& \qw\\
  \lstick[wires=2]{$\ket{\Theta_2}$}               & \qw                                              & \qw & \qw   & \qw &               &&\qw  & \qw & \qw         & \qw & \qw         & \mltg{2}{S_1} & \mltg{2}{S_0}        & \qw &
\swap{} & \qw & \qw & \qw         & \qw & \qw & \qw  & \qw & \mltg{2}{S_0} & \mltg{2}{S_1} & \qw & \qw & \qw & \qw& \qw\\
               & {\rb{0.1cm}{\shiftright{-0.5cm}{$/^{3}$}}} \qw & \qw & \qw   & \qw &               & & \qw & \qw & \qw            & \qw   & \qw            & \qw   & \qw            & \qw &
\qw & \qw & \qw  & \qw            & \qw   & \qw & \qw  & \qw & \qw & \qw & \qw & \qw & \qw & \qw& \qw\\
     \lstick[wires=2]{$\ket{\Theta_3}$}              & \qw                                              & \qw & \qw  & \qw &               &&\qw & \qw & \qw & \qw & \qw          & \qw & \qw         & \mltg{2}{S_2} &
\qw & \mltg{2}{S_1}& \mltg{2}{S_0} & \swap{}         & \mltg{2}{S_0} & \mltg{2}{S_1} & \qw  & \mltg{2}{S_2} & \qw & \qw & \qw & \qw & \qw & \qw& \qw\\
                 & {\rb{0.13cm}{\shiftright{-0.5cm}{$/^{7}$}}} \qw & \qw & \qw   & \qw &               && \qw & \qw &\qw            & \qw   & \qw            & \qw   & \qw            & \qw &
\qw &\qw& \qw   & \qw            & \qw  & \qw & \qw  & \qw & \qw & \qw & \qw & \qw & \qw & \qw& \qw\\
\end{quantikz}
}
}
\caption{SPF circuit for preparing a state with $m=4$ qubits, which are left entangled with $2^q-1$ qubits (garbage). The circuit has depth $O(m)$ and uses an additional $O(2^m)$ ancillae that begin and end in $\ket{0}$. Thus, the total spacetime allocation is upper bounded by $O(m2^m)$. The second half of the SPF circuit is very similar to the reverse of the first half, except that SWAP gates are not present.}
\label{fig:SPF}
\end{figure}
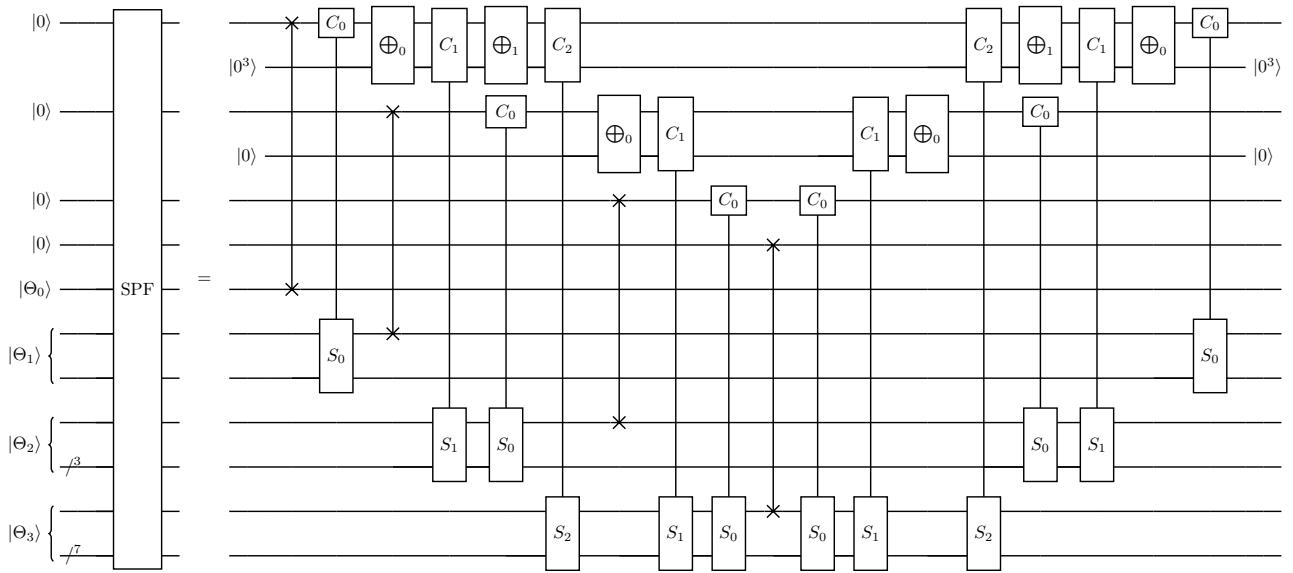

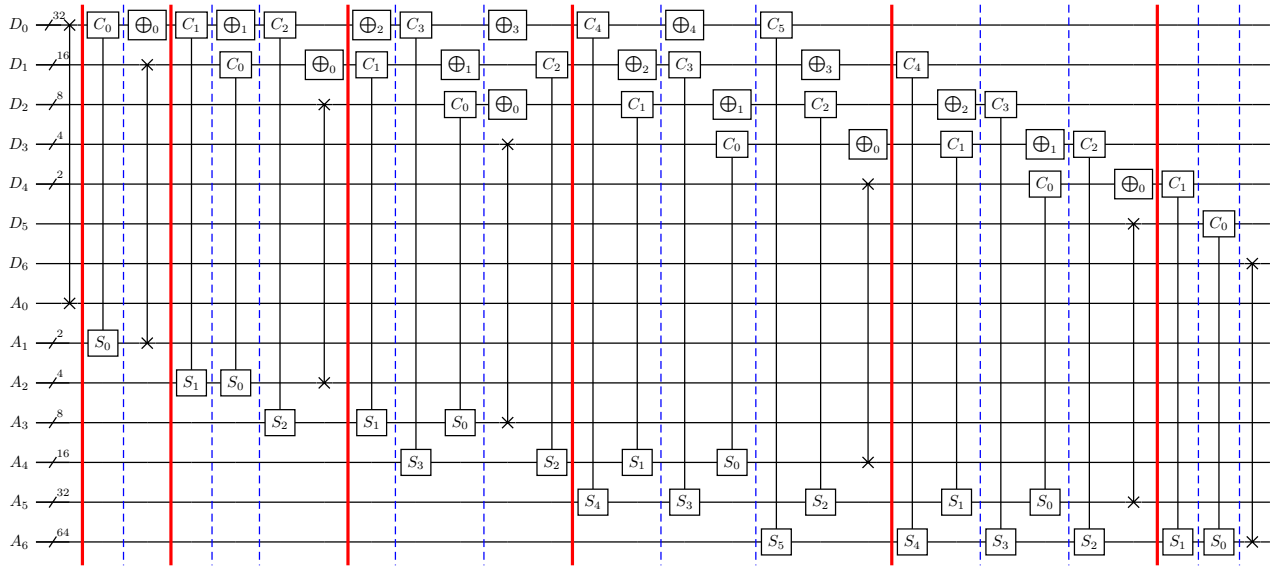
\begin{figure}[h!]
\makebox[\textwidth][c]{
\scalebox{0.6}{
\begin{quantikz}[row sep={2.5em,between origins}, column sep=0.6em, align equals at=1.5]
\lstick{$D_0$} & [1em] \swap{7} \slice[style={solid,line width=0.2em}]{} \qwbundle{32} & \gate{C_0} \vqw{8} \slice[style=blue]{} & \gate{\bigoplus_0} \slice[style={solid,line width=0.2em}]{} & \gate{C_1} \vqw{9} \slice[style=blue]{} & \gate{\bigoplus_1} \slice[style=blue]{} & \gate{C_2} \vqw{10} & \qw \slice[style={solid,line width=0.2em}]{} & \gate{\bigoplus_2} \slice[style=blue]{} & \gate{C_3} \vqw{11} & \qw \slice[style=blue]{} & \gate{\bigoplus_3} & \qw \slice[style={solid,line width=0.2em}]{} & \gate{C_4} \vqw{12} & \qw \slice[style=blue]{} & \gate{\bigoplus_4} & \qw \slice[style=blue]{} & \gate{C_5} \vqw{13} & \qw & \qw \slice[style={solid,line width=0.2em}]{} & \qw & \qw \slice[style=blue]{} & \qw & \qw \slice[style=blue]{} & \qw & \qw \slice[style={solid,line width=0.2em}]{} & \qw \slice[style=blue]{} & \qw \slice[style=blue]{} & \qw & \qw\\
\lstick{$D_1$} & \qw \qwbundle{16} & \qw & \swap{7} & \qw & \gate{C_0} \vqw{8} & \qw & \gate{\bigoplus_0} & \gate{C_1} \vqw{9} & \qw & \gate{\bigoplus_1} & \qw & \gate{C_2} \vqw{10} & \qw & \gate{\bigoplus_2} & \gate{C_3} \vqw{11} & \qw & \qw & \gate{\bigoplus_3} & \qw & \gate{C_4} \vqw{12} & \qw & \qw & \qw & \qw & \qw & \qw  & \qw & \qw & \qw\\
\lstick{$D_2$} & \qw \qwbundle{8}& \qw & \qw & \qw & \qw & \qw & \swap{7} & \qw & \qw & \gate{C_0} \vqw{8} & \gate{\bigoplus_0} & \qw & \qw & \gate{C_1} \vqw{9} & \qw & \gate{\bigoplus_1} & \qw & \gate{C_2} \vqw{10} & \qw & \qw & \gate{\bigoplus_2} & \gate{C_3} \vqw{11} & \qw & \qw & \qw & \qw & \qw & \qw & \qw \\
\lstick{$D_3$} & \qw \qwbundle{4}& \qw & \qw & \qw & \qw & \qw & \qw & \qw & \qw & \qw &  \swap{7} & \qw & \qw & \qw & \qw & \gate{C_0} \vqw{8} & \qw & \qw & \gate{\bigoplus_0} & \qw & \gate{C_1} \vqw{9} & \qw & \gate{\bigoplus_1} & \gate{C_2} \vqw{10} & \qw & \qw & \qw & \qw & \qw\\
\lstick{$D_4$} & \qw \qwbundle{2} & \qw & \qw & \qw & \qw & \qw & \qw & \qw & \qw & \qw & \qw & \qw & \qw & \qw & \qw & \qw & \qw & \qw & \swap{7} & \qw & \qw & \qw & \gate{C_0} \vqw{8} & \qw & \gate{\bigoplus_0} & \gate{C_1} \vqw{9} & \qw & \qw & \qw\\
\lstick{$D_5$} & \qw & \qw & \qw & \qw & \qw & \qw & \qw & \qw & \qw & \qw & \qw & \qw & \qw & \qw & \qw & \qw & \qw & \qw & \qw & \qw & \qw & \qw & \qw & \qw & \swap{7} & \qw & \gate{C_0} \vqw{8} &\qw & \qw\\
\lstick{$D_6$} & \qw & \qw & \qw & \qw &\qw & \qw & \qw & \qw & \qw & \qw & \qw & \qw & \qw & \qw & \qw & \qw & \qw & \qw & \qw & \qw & \qw & \qw & \qw & \qw & \qw & \qw & \qw &\swap{7} & \qw\\
\lstick{$A_0$} & \targX{} & \qw & \qw & \qw & \qw & \qw & \qw & \qw & \qw & \qw & \qw & \qw & \qw & \qw & \qw & \qw & \qw & \qw & \qw & \qw & \qw & \qw & \qw & \qw & \qw & \qw & \qw & \qw & \qw\\
\lstick{$A_1$} & \qw \qwbundle{2} & \gate{S_0} & \swap{} & \qw & \qw & \qw & \qw & \qw & \qw & \qw & \qw & \qw & \qw & \qw & \qw & \qw & \qw & \qw & \qw & \qw & \qw & \qw & \qw & \qw & \qw & \qw & \qw & \qw & \qw\\
\lstick{$A_2$} & \qw \qwbundle{4} & \qw & \qw & \gate{S_1} & \gate{S_0} & \qw & \targX{} & \qw & \qw & \qw & \qw & \qw & \qw & \qw & \qw & \qw & \qw & \qw & \qw & \qw & \qw & \qw & \qw & \qw & \qw & \qw & \qw & \qw & \qw\\
\lstick{$A_3$} & \qw \qwbundle{8} & \qw & \qw & \qw & \qw & \gate{S_2} & \qw & \gate{S_1} & \qw & \gate{S_0} & \targX{} & \qw & \qw & \qw & \qw & \qw & \qw & \qw & \qw & \qw & \qw & \qw & \qw & \qw & \qw & \qw & \qw & \qw & \qw\\
\lstick{$A_4$} & \qw \qwbundle{16} & \qw & \qw & \qw & \qw & \qw & \qw & \qw & \gate{S_3} & \qw & \qw & \gate{S_2} & \qw & \gate{S_1} & \qw & \gate{S_0} & \qw & \qw & \targX{} & \qw & \qw & \qw & \qw & \qw & \qw & \qw & \qw & \qw & \qw\\
\lstick{$A_5$} & \qw \qwbundle{32} & \qw & \qw & \qw & \qw & \qw & \qw & \qw & \qw & \qw & \qw & \qw & \gate{S_4} & \qw & \gate{S_3} & \qw & \qw & \gate{S_2} & \qw & \qw & \gate{S_1} & \qw & \gate{S_0} & \qw & \targX{} & \qw & \qw & \qw & \qw\\
\lstick{$A_6$} & \qw \qwbundle{64} & \qw & \qw & \qw & \qw & \qw & \qw & \qw & \qw & \qw & \qw & \qw & \qw & \qw & \qw & \qw & \gate{S_5} & \qw & \qw & \gate{S_4} & \qw & \gate{S_3} & \qw & \gate{S_2} & \qw & \gate{S_1} & \gate{S_0} &\targX{} & \qw
\end{quantikz}
}
}
\caption{First part of SPF for $m = 7$. The circuit consists of 7 data registers ($D_0$ - $D_6$) and 7 ancilla registers ($A_0$ - $A_6$). Each space between the red line in the circuit represents 1 iteration of the $i$ values in algorithm~\ref{algo:SPF}. There are 3 parallelized gate sequences in each red line space, separated by the blue dashed line. The first parallelized gate sequence corresponds to the first for loop in line 3 - line 9. The second parallelized gate sequence corresponds to the second for loop in line 10 - line 16. The third parallelized gate sequence corresponds to the third for loop in line 17 - line 25. (Notice that we might not have all 3 sequences in all of the boxes based on the condition, for instance, at the beginning of the circuit)}
\label{fig:first_part_SPF}
\end{figure}

\begin{algorithm}[H]
    \caption{SPF Subroutine}\label{algo:SPF}
    \begin{algorithmic}[1]
    \Procedure{SPF}{$D, A, m$}
        
        \Comment{$D$ is the data register of size $m$}

        \Comment{$A = A_0,\ldots,A_{m-1}$ is the angle register where $A_j$ is size $2^j$}
        \For{$i$ \textbf{in} range($m$)} \Comment{Each value of $i$ occupies $O(1)$ depth, shown as one box in Fig.~\ref{fig:first_part_SPF}}
            \For{$q$ \textbf{in} range($i-1$)} \Comment{$1^{\mathrm{st}}$ parallelized sequence in the space between red lines in Fig.~\ref{fig:first_part_SPF}.}
                \If{$(i-q)$ is odd \textbf{and} $\frac{3(i-q-1)}{2}-1 \leq m-q-3$}
                    \State $\bigoplus_{\frac{3(i-q-1)}{2}-1}(D_q)$
                \ElsIf{$(i-q)$ is even \textbf{and} $\frac{3(i-q)}{2}-2 \leq m-q-2$}
                    \State CS$_{\frac{3(i-q)}{2}-2}(D_q, A_{i-1+\frac{i-q}{2}})$
                \EndIf
            \EndFor
            \For{$q$ \textbf{in} range($i$)} \Comment{$2^{\mathrm{nd}}$ parallelized sequence in the space between red lines in Fig.~\ref{fig:first_part_SPF}.}
                \If{$(i-q)$ is odd \textbf{and} $\frac{3(i-q-1)}{2} \leq m-q-2$}
                    \State CS$_{\frac{3(i-q-1)}{2}}(D_q, A_{i+\frac{i-q-1}{2}})$
                \ElsIf{$(i-q)$ is even \textbf{and} $\frac{3(i-q)}{2}-2 \leq m-q-3$}
                    \State $\bigoplus(\frac{3(i-q)}{2}-2, D_q)$
                \EndIf
            \EndFor
            \For{$q$ \textbf{in} range($i+1$)} \Comment{$3^{\mathrm{rd}}$ parallelized sequence in the space between red lines in Fig.~\ref{fig:first_part_SPF}.}
                \If{$i==q$}
                    \State $\mathrm{SWAP}(D_{q,0},A_{i,0})$
                \ElsIf{$(i-q)$ is odd \textbf{and} $\frac{3(i-q-1)}{2} \leq m-q-3$}
                    \State $\bigoplus_{\frac{3(i-q-1)}{2}}(D_q)$
                \ElsIf{$(i-q)$ is even \textbf{and} $\frac{3(i-q)}{2}-1 \leq m-q-2$}
                    \State CS$_{\frac{3(i-q)}{2}-1}(D_q, A_{i+\frac{i-q}{2}})$
                \EndIf
            \EndFor
        \EndFor
        \For{$i$ \textbf{in} reversed(range($1, m$))} \Comment{Each value of $i$ occupies $O(1)$ depth, run in reversed order}
            \For{$q$ \textbf{in} range($i + 1$)} \Comment{All values of $q$ performed in parallel}
                \If{$i == q$}
                    \State continue
                \ElsIf{$(i-q)$ is odd \textbf{and} $\frac{3(i - q - 1)}{2} \leq m - q - 3$}
                    \State $\bigoplus_{\frac{3(i - q - 1)}{2}}(D_q)$
                \ElsIf{$(i-q)$ is even \textbf{and} $\frac{3(i-q)}{2}\leq m-q-2$}
                    \State CS$_{\frac{3(i-q)}{2} - 1}(D_q, A_{i + \frac{i-q}{2}})$
                \EndIf
            \EndFor
            \For{$q$ \textbf{in} range($i$)} \Comment{All values of $q$ performed in parallel}
                \If{$(i-q)$ is odd \textbf{and} $\frac{3(i-q-1)}{2}\leq m-q-2$}
                    \State CS$_{\frac{3(i-q-1)}{2}}(D_q, A_{i + \frac{i-q-1}{2}})$
                \ElsIf{$(i-q)$ is even \textbf{and} $\frac{3(i-q)}{2}-2 \leq m-q-3$}
                    \State $\bigoplus_{\frac{3(i-q)}{2}-2}(D_q)$
                \EndIf
            \EndFor
            \For{$q$ \textbf{in} range($i-1$)} \Comment{All values of $q$ performed in parallel}
                \If{$(i-q)$ is odd \textbf{and} $\frac{3(i-q-1)}{2}-1 \leq m-q-3$}
                    \State $\bigoplus_{\frac{3(i-q-1)}{2}-1}(D_q)$
                \ElsIf{$(i-q)$ is even \textbf{and} $\frac{3(i-q)}{2}-2\leq m-q-2$}
                    \State CS$_{\frac{3(i-q)}{2}-2}(D_q, A_{i-1+\frac{i-q}{2}})$
                \EndIf \Comment{Total D$_{\mathrm{exact}}$: $O(m)$}
            \EndFor \Comment{Total D$_{\mathrm{approx}}$: $O(m)$}
        \EndFor \Comment{Total SA$_{\mathrm{exact}}$: $O(m2^m)$}
    \EndProcedure \Comment{Total SA$_{\mathrm{approx}}$: $O(m2^m)$}
    \end{algorithmic}
\end{algorithm}

\subsection{FLAG}\label{sec:FLAG_circ}

The implementation of the FLAG circuit is similar to that of the SPF circuit, with a few simplifications. Here, the data qubits are only acting as controls, and there is no injection of angles into the data qubits. Thus, we do not need to alternate between copying layers and CSWAP layers; we can simply perform all the copying at the beginning, then all the CSWAPs, and then all the uncopying.  The conceptual idea behind the FLAG implementation is that a flag is set in the $p^{\mathrm{th}}$ flag qubit by first flagging qubit 0 and then swapping qubit 0 into the $p^{\mathrm{th}}$ position using a sequence of CSWAP layers with different data qubits acting as the control. We give an example of FLAG for $m=4$ in Fig.~\ref{fig:FLAG} and pseudocode for FLAG in Algorithm \ref{algo:FLAG}.  

\begin{figure*}[h!]
\centering
\scalebox{0.75}{
\begin{quantikz}[row sep={2.5em,between origins}, column sep=1em, align equals at=1.5]
 & \qw & \ctrlslash{2} & \qw && & \qw & \qw & \mltg{2}{\bigoplus_0} \slice[style=blue]{} & \mltg{2}{\bigoplus_1} \slice[style=blue]{} & \qw & \qw & \gate{C_0}\vqw{7} \slice[style=blue]{} &\qw & \mltg{2}{C_1} \slice[style=blue]{} & \mltg{2}{C_2} \slice[style=blue]{} &\mltg{2}{\bigoplus_1}\slice[style=blue]{} &\mltg{2}{\bigoplus_0} & \qw& \qw& \qw\\
                  & &   & &               && &\lstick{$\ket{0^3}$}&\qw        & \qw  & \qw & \qw & \qw & \qw & \qw \vqw{8} & \qw \vqw{10}&\qw &\qw & \rstick{$\ket{0^3}$} \qw\\
                  & \qw & \ctrlslash{2}  & \qw &               & & \qw & \qw          & \mltg{2}{\bigoplus_0}       & \qw  & \qw  & \gate{C_0} \vqw{7} & \qw & \mltg{2}{C_1} & \qw &\qw & \qw& \mltg{2}{\bigoplus_0} & \qw & \qw& \qw\\
&&&&   &&&\lstick{$\ket{0}$}& \qw           & \qw         & \qw & \qw & \qw & \qw \vqw{8} & \qw &\qw &  \qw  & \qw & \rstick{$\ket{0}$} \qw \\
    & \qw & \ctrlslash{1}   & \qw &  \rb{-3cm}{=}             & & \qw &\qw            &\qw & \qw & \gate{C_0}\vqw{7}                 & \qw & \qw & \qw & \qw & \qw & \qw & \qw& \qw & \qw& \qw\\
       & \qw & \ctrlslash{1}   & \qw &               & & \qw & \qw            & \qw                 & \qw & \qw  & \qw & \qw & \qw & \qw & \qw & \qw& \qw& \qw & \qw& \qw\\
         & \qw & \mltg{7}{\text{FLAG}}   & \qw &               & & \qw & \qw & \gate{X}          & \qw & \qw  & \qw & \qw & \qw & \qw & \qw & \qw & \qw& \qw & \qw& \qw\\
    & \qw & \qw   & \qw &               & & \qw & \qw &\gate{X}         & \qw & \qw  & \qw & \mltg{2}{S_0}  & \qw & \qw & \qw & \qw & \qw& \qw & \qw& \qw\\
  & \qw & \qw   & \qw &               & & \qw & \qw & \qw                & \qw & \qw & \qw & \qw &\qw & \qw & \qw & \qw & \qw& \qw & \qw& \qw\\
   & \qw & \qw   & \qw &               &&\qw  & \qw & \gate{X}         & \qw & \qw  & \mltg{2}{S_0} & \qw & \qw & \mltg{2}{S_1} & \qw & \qw & \qw& \qw & \qw& \qw\\
                 & {\rb{0.1cm}{\shiftright{-0.5cm}{$/^{3}$}}} \qw & \qw   & \qw &               & & \qw & \qw & \qw            & \qw   & \qw  & \qw & \qw & \qw & \qw & \qw & \qw & \qw& \qw & \qw& \qw\\
      & \qw & \qw   & \qw &               &&\qw & \qw & \gate{X}         & \qw & \mltg{2}{S_0}  & \qw & \qw & \mltg{2}{S_1} & \qw & \mltg{2}{S_2} & \qw & \qw& \qw & \qw& \qw\\
                 & {\rb{0.13cm}{\shiftright{-0.5cm}{$/^{7}$}}} \qw & \qw   & \qw &               && \qw & \qw &\qw             & \qw & \qw  & \qw & \qw & \qw & \qw & \qw & \qw & \qw& \qw & \qw& \qw
\end{quantikz}
}
\caption{FLAG circuit for $m=4$. All the gate sequences within each of the two blue dashed lines can be executed in parallel.}
\label{fig:FLAG}
\end{figure*}
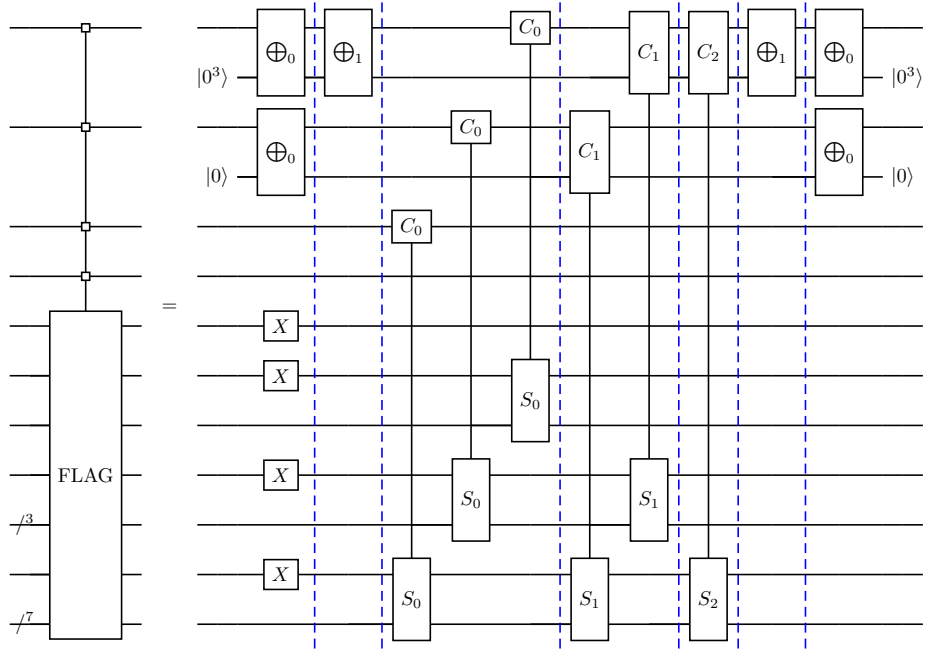

\begin{algorithm}[H]
    \caption{FLAG Operation}\label{algo:FLAG}
    \begin{algorithmic}[1]
    \Procedure{FLAG Subroutine}{$D,F,m$}
    
    \Comment{$D = D_0,\ldots,D_{m-1}$ is the data register, where $D_j$ has size $\max(1,2^{m-j-2})$}
    
    \Comment{$F = F_0,\ldots,F_{m-1}$ is the flag register where $F_j$ has size $2^j$}
    \For{$j$ \textbf{in} range($m$)} \Comment{All values of $j$ performed in parallel}
        \State $\mathrm{X}(F_{j,0})$
    \EndFor
    \For{$i$ \textbf{in} range ($m-1$)} \Comment{All values of $i$ performed in parallel}
        \State \textsc{COPY}($D_i$) \Comment{Occupies maximum of $O(m)$ depth}
    \EndFor
    \For{$i$ \textbf{in} range($m-1$)} \Comment{Each value of $i$ occupies $O(1)$ depth}
        \For{$q$ \textbf{in} range($m-i-1$)}
            \State $\mathrm{CS}_i(D_q, F_{q+1+i})$
        \EndFor
    \EndFor
    \For{$i$ \textbf{in} range ($m-1$)} \Comment{All values of $i$ performed in parallel}
        \State \textsc{COPY}${}^\dagger$($D_i$) \Comment{Occupies maximum of $O(m)$ depth}
    \EndFor
    \EndProcedure
    \end{algorithmic}
    \Comment{Total D$_{\mathrm{exact}}$: $O(m)$}

    \Comment{Total D$_{\mathrm{approx}}$: $O(m)$}
    
    \Comment{Total SA$_{\mathrm{exact}}$: $O(m2^m)$}

    \Comment{Total SA$_{\mathrm{approx}}$: $O(m2^m)$}
\end{algorithm}

\newpage



\subsection{Copy Swap Operation}\label{sec:CopySwap}
To simplify the circuit logic of the LOADF subroutines, we define another subroutine operation $\overline{\mathrm{CopySwap}}$. To define $\overline{\mathrm{CopySwap}}$, let $\ket{k}$ be an $m$-qubit computational basis state, written in binary as $k_{m-1}k_{m-2}\ldots k_0$, so that $k_0$ represents the least significant bit. Let $\ket{\xi}$ be an arbitrary single-qubit state. The operation $\overline{\mathrm{CopySwap}}$ enacts the isometry from an $m+1$ qubit state to a $2M-1$ qubit space
\begin{equation}
    \overline{\mathrm{CopySwap}}\left(\ket{k}\ket{\xi}\right) = \ket{k_{m-1}}^{\otimes2^{m-1}}\ket{k_{m-2}}^{\otimes 2^{m-2}} \ldots \ket{k_1}^{\otimes 2} \ket{k_0}\ket{0^k}\ket{\xi}\ket{0^{M-k-1}}
\end{equation}
for any $m$-qubit computational basis state $\ket{k}$ and any arbitrary single-qubit state $\ket{\xi}$. Implementing $\overline{\mathrm{CopySwap}}$ efficiently will involve a combination of copying layers and layers of parallel CSWAPs, depicted in Fig.~\ref{fig:CopySwap} using the $\bigoplus_t$ and CS$_t$ subroutines defined in App.~\ref{sec:data_copy_circ}. Similar to previous sections, the goal of performing many $\bigoplus_t$ operations before the corresponding CS$_t$ sequences at the earliest time is to ensure the CS$_t$ sequences can be maximally parallelized under 1 step.

\begin{figure*}[h!]
\centering
\begin{minipage}{.5\textwidth}
  \centering
  \begin{quantikz}[row sep={1.5em,between origins}, column sep=1em, align equals at=2.5]
        \lstick{$\ket{k}$} &  \qwbundle{m} &\gate[2,nwires={2}]{\overline{\mathrm{Copy}}}\vqw{2} & [0.5em] \qwbundle{m} & [1.5em]\rstick[2]{$ \ket{k_{m-1}}^{\otimes 2^{m-1}}\ldots \ket{k_0}$} \qw\\
        & & &\qwbundle{M-m-1} & \qw \\[1.5em]
        \lstick{$\ket{\xi}$} & \qw &\gate[2,nwires={2}]{\overline{\mathrm{Swap}}} & \qw  & \rstick[2]{$\ket{0^k}\ket{\xi}\ket{0^{M-k-1}}$} \qw \\
        &&& \qwbundle{M-1} & \qw 
    \end{quantikz}
  \label{fig:test1}
\end{minipage}%
\begin{minipage}{.5\textwidth}
  \centering
  \scalebox{0.8}{
  \begin{quantikz}[row sep={1.8em,between origins}, column sep=2.2em, align equals at=1]
    \lstick{$\overline{R_4}$} & \qw&\gate{\bigoplus_0} & \gate{\bigoplus_1} \qwbundle{2} & \gate{\bigoplus_3}\qwbundle{4} & \gate{\bigoplus_4} \qwbundle{8}  & \gate{C_4} \vqw{5} \qwbundle{16} & \qw \\[0.5em]
     \lstick{$\overline{R_3}$} & \qw&\gate{\bigoplus_0} & \gate{\bigoplus_1} \qwbundle{2} & \gate{\bigoplus_3}\qwbundle{4} & \gate{C_3} \vqw{4} \qwbundle{8}  & \qw & \qw \\[0.5em]
    \lstick{$\overline{R_2}$} & \qw&\gate{\bigoplus_0} & \gate{\bigoplus_1} \qwbundle{2} & \gate{C_2} \vqw{3} \qwbundle{4} & \qw  & \qw & \qw \\[0.5em]
    \lstick{$\overline{R_1}$} & \qw&\gate{\bigoplus_0} & \gate{C_1}\vqw{2}\qwbundle{2} & \qw & \qw  & \qw & \qw \\[0.5em]
    \lstick{$\overline{R_0}$} & \qw&\gate{C_0} \vqw{1}& \qw & \qw & \qw  & \qw & \qw \\[1.5em]
    \qw & \qw & \gate[2]{S_0}& \gate[3]{S_1}  & \gate[4]{S_2} & \gate[5]{S_3} & \gate[6]{S_4} & \qw \\
        & \lstick{$\ket{0}$} & & \qw & \qw & \qw & \qw & \qw \\
        &  & \lstick{$\ket{0^2}$}& \qwbundle{2} & \qwbundle{2} & \qwbundle{2} & \qwbundle{2} & \qwbundle{2}\\
        \lstick{$\overline{S}$}  & & & \lstick{$\ket{0^4}$} & \qwbundle{4} & \qwbundle{4} & \qwbundle{4} & \qwbundle{4} \\
        &  & &  & \lstick{$\ket{0^8}$} & \qwbundle{8} & \qwbundle{8} & \qwbundle{8} \\
        &  & &  & &  \lstick{$\ket{0^{16}}$} & \qwbundle{16} & \qwbundle{16} 
    \end{quantikz}
    }
\end{minipage}
\caption{\label{fig:CopySwap}Left: action of 
$\overline{\mathrm{CopySwap}}$ operation, which simultaneously copies $m$ control bits set to $\ket{k}$ and moves a target register into the $k^{\mathrm{th}}$ position of a target register. The input is $m+1$ qubits and the output is $2M-1$ qubits, as several fresh ancillae are introduced during the protocol. Right: implementation of $\overline{\mathrm{CopySwap}}$ operation using copying layers and swap operations for $m=5$. The total depth is $m$. At layer $t=0,\ldots,m-1$, there are $(m+(2^t-1)(m-t)+2^{t+1})$ active qubits so the total spacetime allocation is $\sum_{t=0}^{m-1}(m+(2^t-1)(m-t)+2^{t+1}) = O(2^m)$.
}
\end{figure*}
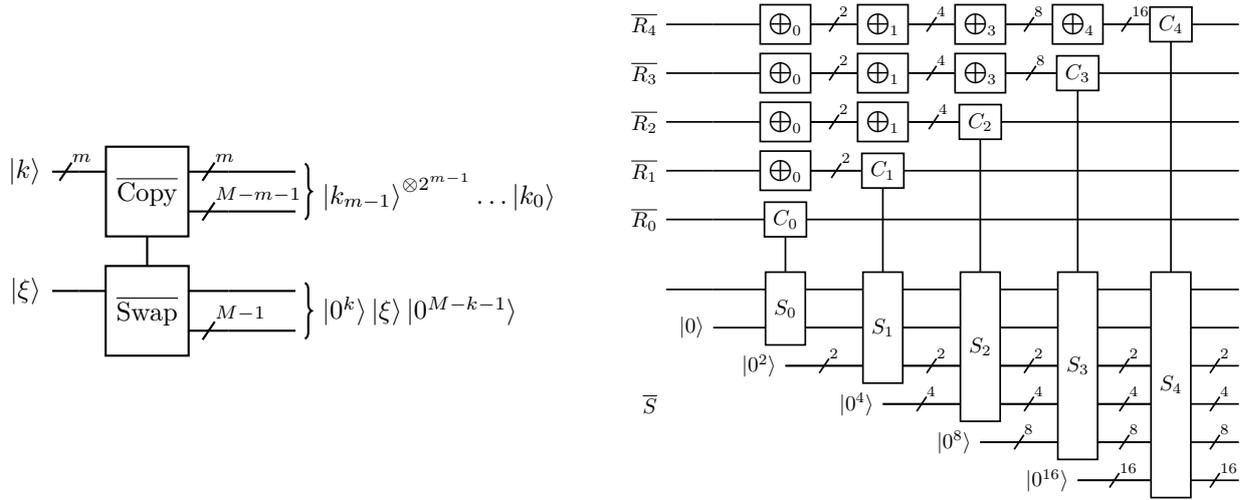

\begin{algorithm}[H]
    \caption{Copy Swap Operation}\label{algo:copy_swap}
    \begin{algorithmic}[1]
    \Procedure{$\overline{\mathrm{CopySwap}}$ Subroutine}{$\overline{R},\overline{S},m$}

    \Comment{$m$ is the number of control bits}
    
    \Comment{$\overline{R}$ = $\overline{R_0}$$,\ldots,$$\overline{R_{m-1}}$ is the control register, where $\overline{R_j}$ has size $\max(1,2^{m-j-2})$}
    
    \Comment{$\overline{S}$ is the target register of size $2^m$}
    \For{$i$ \textbf{in} range($m$)} \Comment{$O(m)$}
        \For{j in range(1, $m$ - $i$)}\Comment{All values of $j$ performed in parallel with the CS operation}
            \State$\bigoplus_i(\overline{R_j})$
        \EndFor
        \State CS$_i(\overline{R_i}, \overline{S})$
    \EndFor
    \EndProcedure
    \end{algorithmic}
    \Comment{Total D$_{\mathrm{exact}}$: $O(m)$}

    \Comment{Total D$_{\mathrm{approx}}$: $O(m)$}
    
    \Comment{Total SA$_{\mathrm{exact}}$: $O(m2^m)$}

    \Comment{Total SA$_{\mathrm{approx}}$: $O(m2^m)$}
\end{algorithm}

\newpage

\subsection{LOADF}\label{sec:app_LOADF}

The LOADF circuit is perhaps the key to making the whole protocol work. It has two control registers, a data register of size $m$ and a flag register of size $\tilde{B}:= N/M-1$. If the data register is set to $\ket{k}$, LOADF reads in the $\tilde{B}$ angle states needed to create the associated $(n-m)$-qubit state, but the angle associated with indices $(s,p)$ is only read in if the flag in position $(s,p)$ is set to 1. Mathematically, LOADF achieves the following operation:
\begin{equation}\label{eq:LOADF2}
    \mathrm{LOADF}\left(\ket{k}\ket{0^{N/M-1}}\left[\bigotimes_{s=0}^{m-n-1}\bigotimes_{p=0}^{2^s-1} \ket{f_{s,p}}\right]\right)
    = \ket{k}\left[\bigotimes_{s=0}^{m-n-1}\bigotimes_{p=0}^{2^s-1} \ket{f_{s,p}\theta^{(k)}_{s,p}}\right]\left[\bigotimes_{s=0}^{m-n-1}\bigotimes_{p=0}^{2^s-1} \ket{f_{s,p}}\right]
\end{equation}

One key feature of LOADF is that all of the single-qubit rotation gates occur in the same layer of the circuit. This allows it to achieve $O(\log(N) + \log(1/\epsilon))$ depth in the approximate setting. Another feature is that the $O(N)$ ancilla qubits are needed only very briefly to act as copied controls for doubly controlled rotation gates. As soon as the rotation gates are finished, the ancilla qubits can be rapidly deallocated, allowing the protocol to achieve $O(N)$ spacetime allocation. The protocol for LOADF is depicted in Fig.~\ref{fig:LOADF}, with the corresponding pseudocode written in Algorithm \ref{algo:LOADF}.

\begin{figure}[h]
\begin{subfigure}[b]{\textwidth}
\centering
\scalebox{0.7}{
\begin{quantikz}[row sep={2.5em,between origins}, column sep=0.5em, align equals at=2]
    \lstick{$D_0:\ket{k}$} & [1em]\qwbundle{m} & [1em] \ctrlslash{1} & \rstick{$\ket{k}$}\qw \\
    \lstick{$B_0:\ket{0^{\tilde{B}}}$} & \qwbundle{\tilde{B}} & \gate{\substack{\text{LOADF} \\ \{\theta^{(k)}\}}} & \rstick{$\bigotimes_{s,p} \mathrm{R}_y(f_{s,p}\theta_{s,p}^{(k)})\ket{0}$} \qw \\
    \lstick{$F_0:\bigotimes_{s,p}\ket{f_{s,p}}$} & \qwbundle{\tilde{B}} & \ctrlslash{-1} & \rstick{$\bigotimes_{s,p}\ket{f_{s,p}}$}\qw
\end{quantikz}
\qquad\;=
\begin{quantikz}[row sep={2.5em,between origins}, column sep=0.5em, align equals at=2]
    \lstick{$D:\ket{k}\otimes\ket{0^{O(N)}}$} & \qw & \gate[4]{\mathrm{Setup}} &\qw & \gate[4]{\mathrm{Setup^{\dagger}}} &\qw &\rstick{$\ket{k}\otimes\ket{0^{O(N)}}$} \\
    \lstick{$B:\ket{0^{\tilde{B}}}\otimes \ket{\varphi^{O(N)}}$} & \qw & \qw & \gate[3]{\mathrm{CCRy}} & \qw &\qw \rstick{$\bigotimes_{s,p} \mathrm{R}_y(f_{s,p}\theta_{s,p}^{(k)})\ket{0} \otimes \ket{\varphi^{O(N)}}$} \\
    \lstick{$A:\ket{0^{O(N)}}$} & \qw & \qw & \qw &\qw &\rstick{$\ket{0^{O(N)}}$}\qw \\
    \lstick{$F:\bigotimes_{s,p}\ket{f_{s,p}}\otimes\ket{0^{O(N)}}$} & \qw & \qw & \qw &\qw & \rstick{$\bigotimes_{s,p}\ket{f_{s,p}}\otimes\ket{0^{O(N)}}$}\qw
\end{quantikz}
}
\end{subfigure}

\vspace{2.5em}
\begin{subfigure}[b]{\textwidth}
\makebox[\textwidth][c]{
\centering
\scalebox{0.63}{
\begin{quantikz}[row sep={1.6em,between origins}, column sep=0.5em, align equals at=1.5]
\lstick{$D_0: \ket{k}$} &\qwbundle{m}&\qw&\qw&\gate[3,nwires={2}]{\mathrm{COPY}_{\tilde{B}+1}^{\otimes m}} &[2em] \qwbundle{m}&\gate[2]{\overline{\mathrm{Copy}}} \vqw{6}              &[4em]\qw       &  \qw   &[2em]\qwbundle{m}  &[3em] \qw   
&\qw          &[3em]\qw          & \gate[2]{\overline{\mathrm{Copy}}^\dagger} \vqw{6}    & \gate[3,nwires={2}]{\mathrm{COPY}_{\tilde{B}+1}^{\dagger \otimes m}}  & \qw  & \qwbundle{m}                 & \qw  &  \rstick{ $\ket{k}$}               \\
%
   &&&&& \lstick{$D_1$}  &             
&\qw & \qw  & \qwbundle{M-m-1} &\qw & \qw & \qw & \qw & &                &            &      \\[1em]
%
   &&&\lstick{$D_2$}&&\qwbundle{\tilde{B}m}&  \qw               
& \qw &\gate[2]{\overline{\mathrm{Copy}}^{\otimes \tilde{B}}} \vqw{2}  & \qwbundle{\tilde{B}m} &\qw & \gate[2]{\overline{\mathrm{Copy}}^{\dagger\otimes \tilde{B}}} \vqw{2} & \qwbundle{\tilde{B}m} &\qw & &&&      \\
%
   &&&&&&                 
&  \lstick{$D_3$}& & \qwbundle{\tilde{B}(M-m-1)} &\qw &  & & & &                &             &      \\[1em]
%
    \lstick{$B_0: \ket{0^{\tilde{B}}}$}\qw&\qwbundle{\tilde{B}}&\qw&\qw&\qw                   & \qw    & \qw               &\qw      & \mltg{2}{\overline{\mathrm{Swap}}^{\otimes \tilde{B}}}            &\qwbundle{\tilde{B}}  & \mltg{2}{\bigotimes_{s,p}\bigotimes_{j=0}^{M-1}\mathrm{R}_y(\theta^{(j)}_{s,p})}       & \mltg{2}{\overline{\mathrm{Swap}}^{\dagger\otimes \tilde{B}}}  &\qwbundle{\tilde{B}}               &\qw         & \qw \rstick{$\bigotimes_{s,p} \mathrm{R}_y(f_{s,p}\theta_{s,p}^{(k)})\ket{0}$}   \\ [1em]
%
     && && &   & & \lstick{$B_1$} & & \qwbundle{(M-1)\tilde{B}} &      
&                                 &&        &&&& \\ [1em]
%
                  &&&&\lstick{$A_0: \ket{0}$}                   &\gate{X}& \gate[2]{\overline{\mathrm{Swap}}}  &\qw& \gate[3] {\mathrm{COPY}_{\tilde{B}}^{\otimes M}} & \qw    & \ctrlslash{-1}
&\mltg{3}{\mathrm{COPY}_{\tilde{B}}^{\dagger\otimes M}}  & \qw& \gate[2]{\overline{\mathrm{Swap}}^\dagger}      & \gate{X}                                 & \rstick{$\ket{0}$}\qw    &&               \\
%
                  &&&&                   &   \lstick{$A_1$} &  &\qwbundle{M-1}&   & \qwbundle{M-1} &\ctrlslash{-1} 
&     & \qwbundle{M-1} &  \\ [1em]
%
                   &&&&                   &&   &\lstick{$A_2$}&   & \qwbundle{M(\tilde{B}-1)} &\ctrlslash{-1} 
&     & &                                &     &&&               \\ [1em]
%
%
%
\lstick{$F_0: \bigotimes_{s,p}\ket{f_{s,p}}$} \qw&\qwbundle{\tilde{B}}&\qw&\qw&\qw                   & \qw    & \qw      &\qw &\gate[2]{\mathrm{COPY}_M^{\otimes \tilde{B}}}                           & \qwbundle{\tilde{B}} & \ctrlslash{-1} 
& \gate[2]{\mathrm{COPY}_M^{\dagger\otimes \tilde{B}}} &\qwbundle{\tilde{B}}                                 & \qw    &\qw \rstick{$\bigotimes_{s,p}\ket{f_{s,p}}$}                     \\
%
     &&&&                  & &                &\lstick{$F_1$}& &   \qwbundle{(M-1)\tilde{B}}            & \ctrlslash{-1} 
&         & &   & & &                             &   
\end{quantikz}
}
}
\end{subfigure}
\caption{Top: Modular abstraction of LOADF. Setup is used to prepare the expanded address based on the address $\ket{k}$. Then CCRy gates pump in the angles using the expanded address in parallel. Lastly, Setup$^{\dagger}$ uncompute all the expanded address. Bottom: Circuit implementing LOADF. The numerical value $\tilde{B}$ is defined as $\tilde{B}:= N/M-1$. The layer of rotation gates represents $M\tilde{B} = N-M$ parallel $\mathrm{CCR}_y$ gates by different angles $\theta_{s,p}^{(k)}$, with $k=0,\ldots,M-1$, $s=0,\ldots,n-m-1$, and $p=0,\ldots,2^s-1$. For each $\mathrm{CCR}_y$ gate, one qubit from the final two registers (corresponding to the choice of $(s,p)$) and one qubit from the three registers above the final two registers (corresponding to the choice of $k$) acts as a control. Copying layers before the $\mathrm{CCR}_y$ sequence is necessary such that these can all be performed in parallel. After the $\mathrm{CCR}_y$ sequence, we perform $\tilde{B}$ $\overline{\mathrm{CopySwap}}^{\dagger}$ operations in parallel in order to load the $N - M$ angles into the buffer ancilla of size $\tilde{B}$. It can be verified that LOADF achieves $O(N)$ spacetime allocation: the only steps where $O(N)$ ancillae are active are the rotation gates (depth $O(1)$) and the uncopying step that follows. However, the uncopying step for $O(N)$ qubits was seen in Figs.~\ref{fig:CopySwap} and \ref{fig:State_Copying_Subroutine} to occupy only $O(N)$ spacetime allocation. Unless labeled otherwise, all ancilla qubits begin and end with the $\ket{0}$ states. We also note that register B can be mostly allocated with dirty qubits $\ket{\varphi^{O(N)}}$ \cite{low2018trading} instead of qubits in the $\ket{0}$ states.}
\label{fig:LOADF}
\end{figure}
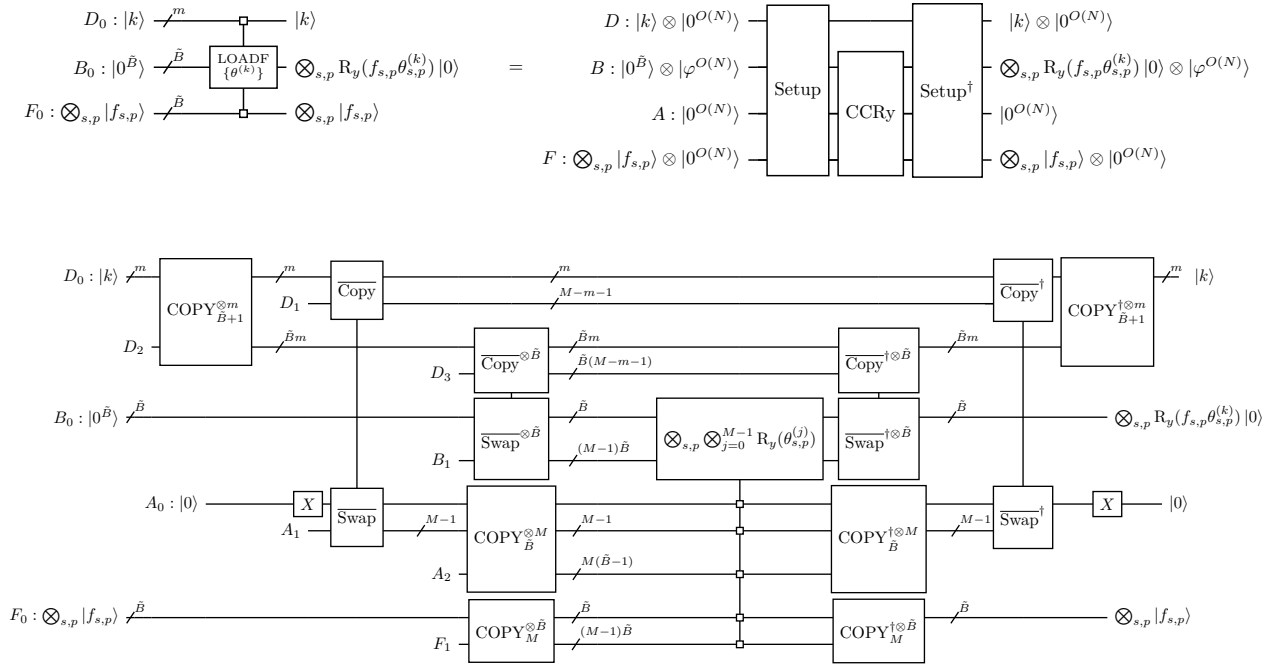

\begin{algorithm}[H]
    \caption{LOADF Subroutine}\label{algo:LOADF}
    \begin{algorithmic}[1]
    \Procedure{LOADF Subroutine}{$D,B,F$}

    \Comment{This subroutine will also involve additional O(N) ancilla qubits. They do not need to be allocated in each ancilla group until individually called on. Once uncomputed, they can be deallocated (i.e., able to participate in a different computing task).}
        
    \Comment{$D_0$ is the address register of size $m$. $D_1$, $D_2$, $D_3$ contain $O(N)$ additional ancillae needed to complete all the parallelized CSWAP and CCRy gates.}
        
    \Comment{$B_0$ is the ``buffer" ancilla register of size $\tilde{B}$ to hold the angles for SPF to pump into the data registers. $B_1$ contains $O(N)$ additional ancillae needed to complete all the parallelized CSWAP and CCRy gates.}
        
    \Comment{$F_0$ is the FLAG register of size $2^{n-m}-1$ (only required for LOADF$^{\dagger}$, optional for LOADF. Since the FLAG bits are all in the $\ket{1}$ states and thus guaranteed the control requirement for the following CCRy gates, we can effectively treat them as CRy gates only controlled on the A register.) $F_1$ contains $O(N)$ additional ancillae needed to complete all the parallelized CSWAP and CCRy gates.}

    \Comment{$A$ is the register of size $M - N$ to hold the $M - N$ expanded angle address (Not shown in Fig.~\ref{fig:CSP}).}

    \Comment{For the purpose of illustration, we ignore the particular indices in the subroutine parameters. We provide exact circuit-level implementations in \ref{sec:code} with all the indices specified.} 

    \State \textbf{Setup} Subroutine
    
    \For{$i$ in range($N - M$)} \Comment{All $i$ in parallel, to load $N - M$ angles into $N - M$ qubits}
        \State CCRy($F, A, B, \mathrm{\theta}$) \Comment{$O(1) (O(\log(n/\epsilon)))$ depth}
    \EndFor
    \State \textbf{Setup$^{\dagger}$} Subroutine
    \EndProcedure

    \Procedure{$\mathrm{Setup}$ Subroutine}{D, B, A, F}
    \State  $\mathrm{X(A)}$ \Comment{$O(1)$ depth}
        
    \For{$i$ in range(m)} \Comment{all $i$ in parallel}
        \State COPY$^{\dagger}_{\tilde{B} + 1}(D)$ \Comment{$O(n - m)$ depth}
    \EndFor
        
    \State $\overline{\mathrm{CopySwap}}(D, A, m)$ \Comment{$O(m)$ depth}
        
    \Comment{The following 3 for loops can be done in parallel \hspace{9.5cm}}
        
    \For{$i$ in range($\tilde{B}$)} \Comment{all $i$ in parallel}
        \State $\overline{\mathrm{CopySwap}}(D, B, n - m)$ \Comment{$O(n - m)$ depth}
    \EndFor
    \For{$i$ in range($M$)} \Comment{all $i$ in parallel, to prepare control bits in the CCRy layer's control on the A registers}
        \State COPY$_{\tilde{B}}$($A$) \Comment{$O(n - m)$ depth}
    \EndFor
    \For{$i$ in range($\tilde{B}$)} \Comment{all $i$ in parallel, to prepare control bits in the CCRy layer's control on the F register}
        \State COPY$_M$($F$) \Comment{$O(m)$ depth}
    \EndFor 
    \EndProcedure \Comment{Total D$_{\mathrm{exact}}$: $O(n)$}
    
    \end{algorithmic}

    \Comment{Total D$_{\mathrm{approx}}$: $O(n + \log(1/\epsilon))$}
    
    \Comment{Total SA$_{\mathrm{exact}}$: $O(2^n)$}

    \Comment{Total SA$_{\mathrm{approx}}$: $O(2^n \log(n/\epsilon))$}
\end{algorithm}

When compiling the CSP circuit into the $\{\mathrm{H,S,T,CNOT}\}$ gate set, a significant quantum resource consumption ($O(\log(n/\epsilon))$ depth and $O(2^n\log(n/\epsilon))$ spacetime allocation) is contributed by the parallelized CCRy gates. As illustrated in Fig.~\ref{fig:LOADF}, a majority of the qubits in B that performs the parallelized Ry gates can be dirty (See proofing \href{https://algassert.com/quirk#circuit={%22cols%22:[[1,1,1,{%22id%22:%22Ryft%22,%22arg%22:%221%22}],[%22Swap%22,1,1,%22Swap%22],[%22%E2%80%A6%22,%22%E2%80%A6%22,%22%E2%80%A6%22,%22%E2%80%A6%22,{%22id%22:%22Ryft%22,%22arg%22:%221.5pi%20t%22},1,1,{%22id%22:%22Ryft%22,%22arg%22:%222.5pi%20t%22},{%22id%22:%22Ryft%22,%22arg%22:%223.5pi%20t%22},%22X%22,1,%22X%22,%22X%22,%22X%22],[],[%22%E2%80%A2%22,1,1,1,1,1,1,1,1,%22Swap%22,%22Swap%22],[%22%E2%80%A2%22,1,1,%22Swap%22,%22Swap%22],[%22%E2%80%A2%22,1,1,1,1,%22Swap%22,1,%22Swap%22],[%22%E2%80%A2%22,1,1,1,1,1,%22Swap%22,1,%22Swap%22],[1,1,1,{%22id%22:%22Ryft%22,%22arg%22:%223%22},1,1,1,1,1,%22%E2%80%A2%22,1,%22%E2%80%A2%22],[1,1,1,1,{%22id%22:%22Ryft%22,%22arg%22:%222%22},1,1,1,1,1,%22%E2%80%A2%22,%22%E2%80%A2%22],[1,1,1,1,1,{%22id%22:%22Ryft%22,%22arg%22:%227%22},1,1,1,%22%E2%80%A2%22,1,1,%22%E2%80%A2%22],[1,1,1,1,1,1,{%22id%22:%22Ryft%22,%22arg%22:%226%22},1,1,%22%E2%80%A2%22,1,1,1,%22%E2%80%A2%22],[1,1,1,1,1,1,1,{%22id%22:%22Ryft%22,%22arg%22:%225%22},1,1,%22%E2%80%A2%22,1,%22%E2%80%A2%22],[1,1,1,1,1,1,1,1,{%22id%22:%22Ryft%22,%22arg%22:%224%22},1,%22%E2%80%A2%22,1,1,%22%E2%80%A2%22],[%22%E2%80%A2%22,1,1,%22Swap%22,%22Swap%22],[%22%E2%80%A2%22,1,1,1,1,%22Swap%22,1,%22Swap%22],[%22%E2%80%A2%22,1,1,1,1,1,%22Swap%22,1,%22Swap%22],[%22%E2%80%A2%22,1,1,1,1,1,1,1,1,%22Swap%22,%22Swap%22],[1,1,1,1,{%22id%22:%22Ryft%22,%22arg%22:%22-1.5pi%20t%22},1,1,{%22id%22:%22Ryft%22,%22arg%22:%22-2.5pi%20t%22},{%22id%22:%22Ryft%22,%22arg%22:%22-3.5pi%20t%22}],[1,1,1,1,1,1,1,1,1,%22X%22,1,%22X%22,%22X%22,%22X%22],[1,%22Swap%22,1,%22Swap%22],[1,%22%E2%80%A2%22,1,1,1,%22Swap%22,%22Swap%22],[1,1,%22Swap%22,1,1,%22Swap%22],[1,%22%E2%80%A2%22,1,1,1,%22Swap%22,%22Swap%22],[1,1,1,1,1,1,1,1,1,1,1,%22X%22,%22X%22,%22X%22],[1,1,1,1,{%22id%22:%22Ryft%22,%22arg%22:%22pi%20t%22},1,1,{%22id%22:%22Ryft%22,%22arg%22:%222pi%20t%22},{%22id%22:%22Ryft%22,%22arg%22:%223pi%20t%22}],[1,1,1,1,1,1,1,1,1,1,1,%22X%22,%22X%22],[1,%22%E2%80%A2%22,1,1,1,1,1,1,1,1,1,1,%22Swap%22,%22Swap%22],[1,1,1,1,1,1,1,1,1,%22X%22],[%22%E2%80%A2%22,1,1,1,1,1,1,1,1,%22Swap%22,%22Swap%22],[%22%E2%80%A2%22,1,1,1,1,1,%22Swap%22,1,%22Swap%22],[%22%E2%80%A2%22,1,1,1,1,%22Swap%22,1,%22Swap%22],[%22%E2%80%A2%22,1,1,%22Swap%22,%22Swap%22],[1,1,1,{%22id%22:%22Ryft%22,%22arg%22:%22-3%22},1,1,1,1,1,%22%E2%80%A2%22,1,%22%E2%80%A2%22],[1,1,1,1,{%22id%22:%22Ryft%22,%22arg%22:%22-2%22},1,1,1,1,1,%22%E2%80%A2%22,%22%E2%80%A2%22],[1,1,1,1,1,{%22id%22:%22Ryft%22,%22arg%22:%22-7%22},1,1,1,%22%E2%80%A2%22,1,1,%22%E2%80%A2%22],[1,1,1,1,1,1,{%22id%22:%22Ryft%22,%22arg%22:%22-6%22},1,1,%22%E2%80%A2%22,1,1,1,%22%E2%80%A2%22],[1,1,1,1,1,1,1,{%22id%22:%22Ryft%22,%22arg%22:%22-5%22},1,1,%22%E2%80%A2%22,1,%22%E2%80%A2%22],[1,1,1,1,1,1,1,1,{%22id%22:%22Ryft%22,%22arg%22:%22-4%22},1,%22%E2%80%A2%22,1,1,%22%E2%80%A2%22],[%22%E2%80%A2%22,1,1,1,1,1,%22Swap%22,1,%22Swap%22],[%22%E2%80%A2%22,1,1,1,1,%22Swap%22,1,%22Swap%22],[%22%E2%80%A2%22,1,1,%22Swap%22,%22Swap%22],[%22%E2%80%A2%22,1,1,1,1,1,1,1,1,%22Swap%22,%22Swap%22],[1,1,1,1,1,1,1,1,1,%22X%22],[1,%22%E2%80%A2%22,1,1,1,1,1,1,1,1,1,1,%22Swap%22,%22Swap%22],[1,1,1,1,{%22id%22:%22Ryft%22,%22arg%22:%22-pi%20t%22},1,1,{%22id%22:%22Ryft%22,%22arg%22:%22-2pi%20t%22},{%22id%22:%22Ryft%22,%22arg%22:%22-3pi%20t%22},1,1,%22X%22,%22X%22],[1,1,1,1,1,1,1,1,1,1,1,%22X%22,%22X%22,%22X%22]]}}{code} 
for a state preparation task ($m = 1$, $n = 3$) where we use time-dependent gates before and after each segment to emulate the dirty qubits in the $B_1$ register), which would be of abundant supply in fault-tolerant algorithms such as LCU \cite{berry2015simulating}. The ancilla qubits required to perform the Toffoli operations (i.e., $A$ and $F$ registers) can be immediately uncomputed using COPY$^{\dagger}$ once the Toffoli gates are done if the ratio between $\log(1/\epsilon)$ and $n$ is large, thus making the required spacetime allocation of clean qubits $O(N/M\log(\log(N)/\epsilon) + N)$ rather than $O(N\log(\log(N)/\epsilon))$. Future work can also aim to modify the circuit structure to allow, e.g., $D$, $A$, and/or $F$ registers to be dirty.

When considering the number of clean qubits required to perform Ry gates in the context of the whole circuit (both SP and CSP circuits require parallelized Ry gates), we can see that we essentially need $O(M)$ clean qubits for the SP circuit and $O(N/M)$ clean qubits for the CSP circuit. Given enough supply of dirty qubits in the case when we need to reach high quantum state precisions using the $\{\mathrm{H,S,T,CNOT}\}$ gate set, we can let $M = \sqrt{N}$ (in other words, $m = n/2$, which also satisfies Eq.~\eqref{eq:mn_relation}), so the total clean qubit count will now be upper bounded to $O(\sqrt{N})$, thus resulting in $O(\sqrt{N}\log(\log(N)/\epsilon) + N)$ spacetime allocation for the clean qubits.


We illustrate the early ancilla free-up advantage over previous work in Fig.~\ref{fig:clader_imp} and Fig.~\ref{fig:Usp+LOADF} using Quirk \cite{quirk}. As mentioned at the end of Sec.~\ref{sec:intro}, previous methods (e.g., Clader et al., shown in Fig.~\ref{fig:clader_imp}, see \href{https://algassert.com/quirk#circuit={%22cols%22:[[1,1,1,{%22id%22:%22Ryft%22,%22arg%22:%221%22},{%22id%22:%22Ryft%22,%22arg%22:%223%22},{%22id%22:%22Ryft%22,%22arg%22:%222%22},{%22id%22:%22Ryft%22,%22arg%22:%227%22},{%22id%22:%22Ryft%22,%22arg%22:%226%22},{%22id%22:%22Ryft%22,%22arg%22:%225%22},{%22id%22:%22Ryft%22,%22arg%22:%224%22}],[%22Swap%22,1,1,%22Swap%22],[%22%E2%80%A2%22,1,1,1,%22Swap%22,%22Swap%22],[%22%E2%80%A2%22,1,1,1,1,1,%22Swap%22,1,%22Swap%22],[%22%E2%80%A2%22,1,1,1,1,1,1,%22Swap%22,1,%22Swap%22],[1,%22Swap%22,1,1,%22Swap%22],[1,%22%E2%80%A2%22,1,1,1,1,%22Swap%22,%22Swap%22],[1,1,%22Swap%22,1,1,1,%22Swap%22],[1,%22%E2%80%A2%22,1,1,1,1,%22Swap%22,%22Swap%22],[%22%E2%80%A2%22,1,1,1,1,1,1,%22Swap%22,1,%22Swap%22],[%22%E2%80%A2%22,1,1,1,1,1,%22Swap%22,1,%22Swap%22],[%22%E2%80%A2%22,1,1,1,%22Swap%22,%22Swap%22]]}}{code}) 
require $O(N)$ ancilla qubits to be entangled with the $O(n)$ data qubits for $O(n)$ depth, leading to $O(n2^n)$ spacetime allocation. This work (illustrated in Fig.~\ref{fig:Usp+LOADF}, see \href{https://algassert.com/quirk#circuit={%22cols%22:[[1,1,1,{%22id%22:%22Ryft%22,%22arg%22:%221%22}],[%22Swap%22,1,1,%22Swap%22],[%22%E2%80%A2%22,%22X%22],[%22%E2%80%A2%22,1,%22X%22],[1,1,1,1,1,1,1,1,1,1,%22X%22],[%22%E2%80%A2%22,1,1,1,1,1,1,1,1,1,%22Swap%22,%22Swap%22],[%22%E2%80%A2%22,1,1,1,%22Swap%22,%22Swap%22],[1,%22%E2%80%A2%22,1,1,1,1,%22Swap%22,1,%22Swap%22],[1,1,%22%E2%80%A2%22,1,1,1,1,%22Swap%22,1,%22Swap%22],[1,1,1,1,1,1,1,1,1,1,%22%E2%80%A2%22,1,%22X%22],[1,1,1,1,1,1,1,1,1,1,%22%E2%80%A2%22,1,1,%22X%22],[1,1,1,1,1,1,1,1,1,1,1,%22%E2%80%A2%22,1,1,%22X%22],[1,1,1,1,1,1,1,1,1,1,1,%22%E2%80%A2%22,1,1,1,%22X%22],[1,1,1,1,{%22id%22:%22Ryft%22,%22arg%22:%223%22},1,1,1,1,1,%22%E2%80%A2%22],[1,1,1,1,1,{%22id%22:%22Ryft%22,%22arg%22:%222%22},1,1,1,1,1,%22%E2%80%A2%22],[1,1,1,1,1,1,{%22id%22:%22Ryft%22,%22arg%22:%227%22},1,1,1,1,1,%22%E2%80%A2%22],[1,1,1,1,1,1,1,{%22id%22:%22Ryft%22,%22arg%22:%226%22},1,1,1,1,1,%22%E2%80%A2%22],[1,1,1,1,1,1,1,1,{%22id%22:%22Ryft%22,%22arg%22:%225%22},1,1,1,1,1,%22%E2%80%A2%22],[1,1,1,1,1,1,1,1,1,{%22id%22:%22Ryft%22,%22arg%22:%224%22},1,1,1,1,1,%22%E2%80%A2%22],[1,1,1,1,1,1,1,1,1,1,1,%22%E2%80%A2%22,1,1,1,%22X%22],[1,1,1,1,1,1,1,1,1,1,1,%22%E2%80%A2%22,1,1,%22X%22],[1,1,1,1,1,1,1,1,1,1,%22%E2%80%A2%22,1,1,%22X%22],[1,1,1,1,1,1,1,1,1,1,%22%E2%80%A2%22,1,%22X%22],[1,1,%22%E2%80%A2%22,1,1,1,1,%22Swap%22,1,%22Swap%22],[1,%22%E2%80%A2%22,1,1,1,1,%22Swap%22,1,%22Swap%22],[%22%E2%80%A2%22,1,1,1,%22Swap%22,%22Swap%22],[%22%E2%80%A2%22,1,1,1,1,1,1,1,1,1,%22Swap%22,%22Swap%22],[1,1,1,1,1,1,1,1,1,1,%22X%22],[%22%E2%80%A2%22,1,%22X%22],[%22%E2%80%A2%22,%22X%22]]}}{code}, 
and also on the right part of Fig.~\ref{fig:SA_comps}), only requires most of the $O(N)$ qubits to be allocated briefly.
\begin{figure}[h!]
    \centering
    \includegraphics[width=0.65\textwidth]{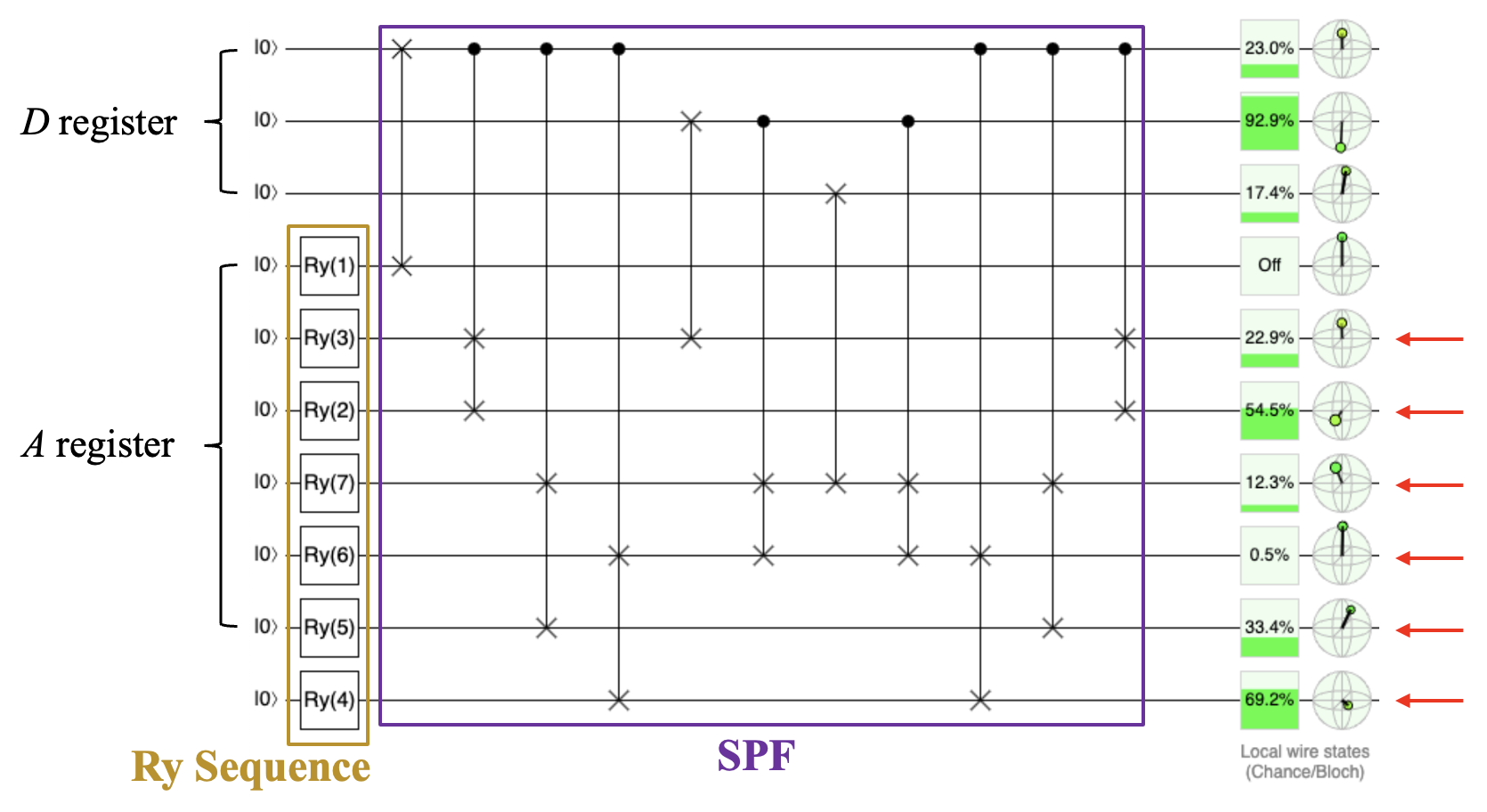}
    \caption{First part of Clader et al.~\cite{clader2023quantum} (aka $U_{\mathrm{SP}}$ without COPY). Note that all the ancilla qubits in the $A$ register are left \textbf{entangled} with the data qubit in the $D$ register (pointed by the \textcolor{red}{red} arrows). These $O(N)$ ancilla qubits are not returned to $\ket{0}$ until after the execution of the depth-$O(n)$ FLAG subroutine.}
\label{fig:clader_imp}
\end{figure}

\begin{figure}[h!]
\centering
\makebox[\textwidth][c]{
\includegraphics[width=1.0\textwidth]{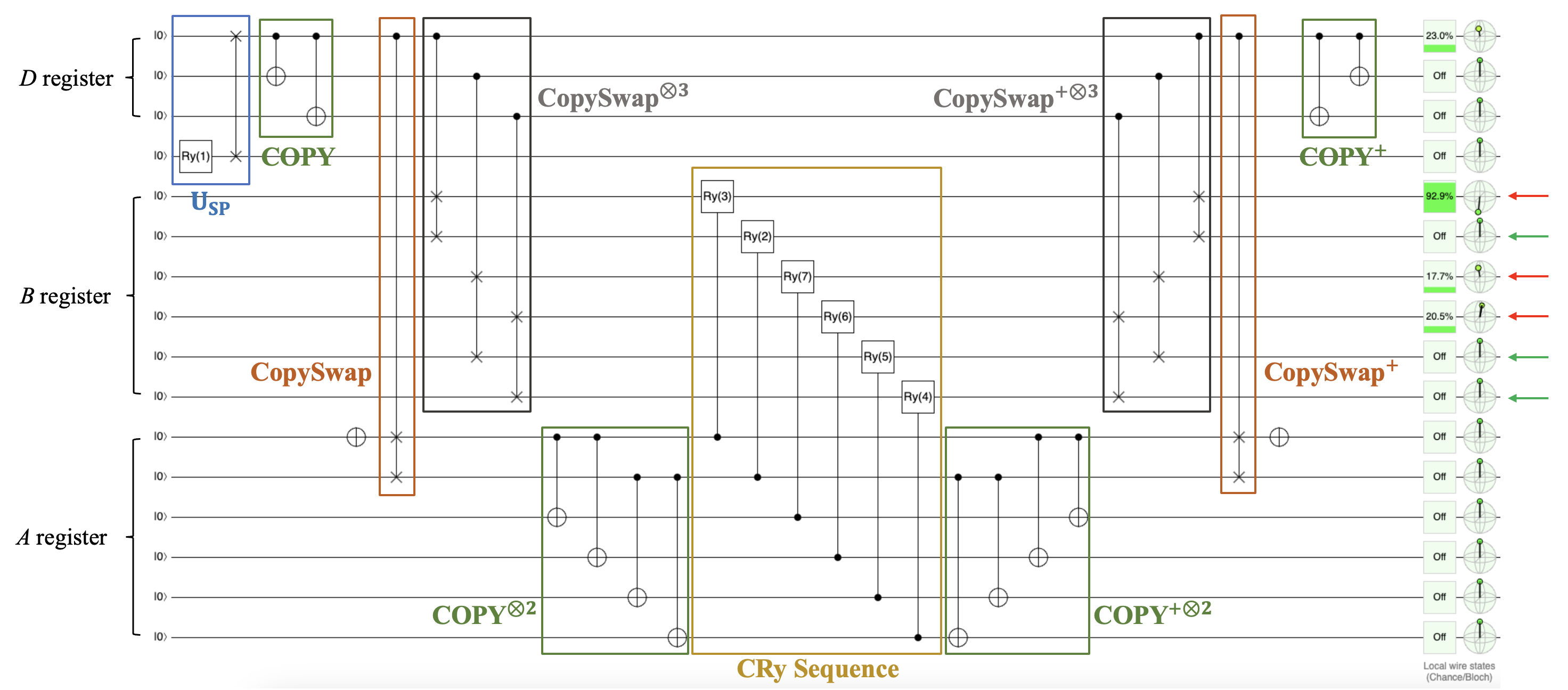}
}
\caption{$U_{\mathrm{SP}}$ + LOADF part of the circuit developed in this work, on an example with $m=1$ and $n=3$. Note that most of the ancilla qubits in the $B$ register are freed up (pointed by the \textcolor{green}{green} arrows). The remaining $O(\frac{M}{N})$ ancilla qubits (pointed by the \textcolor{red}{red} arrows) are freed up after the following SPF circuit that only takes $O(m)$ depth rather than $O(n)$ depth in the previous case. All other ancilla qubits in $D$, $B$, and $A$ registers are freed up almost immediately after initialization. Note that as $N$ grows larger and if the relation in Eq.~\eqref{eq:mn_relation} is satisfied, we can see a much larger ratio between the qubits labeled with the \textcolor{red}{red} arrows and qubits labeled with the \textcolor{green}{green} arrows, which shows the full advantage of this work's approach.}
\label{fig:Usp+LOADF}
\end{figure}

\newpage

\section{Action of our protocol when the input state is not \texorpdfstring{$\ket{0^n}$}{TEXT}}\label{sec:input_not_0}

Our state preparation procedure acts on $n$ data qubits and uses a number $\ell = O(N)$ of ancilla qubits. It implements a unitary operation $U$ on the $n+\ell$ qubits that sends $\ket{0^n}\ket{0^\ell} \mapsto \ket{\psi}\ket{0^\ell}$, that is, all the ancillae begin and end in $\ket{0}$. However, if the same unitary is performed on a state $\ket{z}\ket{0^\ell}$ for some $z \neq 0^n$ (or more generally, a state which is a superposition of all $z \neq 0^n$), it will no longer be the case that all ancillas are reset to $\ket{0}$. This results from the SPF circuit structure, described in Sec.~\ref{sec:SPF_circ}, which enacts rotations by swapping the $n$ data qubits into the ancilla register. It exploits the fact that the data qubits are known to start in $\ket{0}$ in order to guarantee that the ancillae are left in the state $\ket{0}$.

This feature could be problematic in applications where the state preparation unitary is a part of a larger algorithm, and does not always act on the state $\ket{0^n}$. Is the procedure still useful in these other cases, and does it still achieve efficient spacetime allocation? 
Here we argue that the favorable properties of our procedure, in particular its optimal depth and spacetime allocation, extend to other common situations where state preparation appears, for example, implementing reflections and projections. 

\subsection{Implementing reflections about arbitrary states}
A common use of state preparation within larger algorithms is to perform a reflection about a particular state $\ket{\psi}$, that is, the operation 
\begin{equation}
    R = I_n - 2 \ket{\psi}\bra{\psi}\,,
\end{equation}
where $I_n$ denotes the identity operator on $n$ qubits.  

We now discuss how to implement this operator using $U$. First, we make a distinction between different ancilla registers in our state-preparation protocol. For the SP portion of the protocol, we have ancilla registers $A$ and $F$, each of size $M-1$, depicted in Fig.~\ref{fig:SP}. For the CSP portion, we have an additional register $B$ of size $N/M-1$ (register $F$ is the same as the one from the SP portion). The figures depict the registers $A$ and $B$ beginning and ending in $\ket{0}$, but this is only the case because some of the data registers have input $\ket{0}$. On the other hand, the $F$ register has the property that it begins in $\ket{0}$ if and only if it ends in $\ket{0}$, regardless of the state of the other registers. Additionally, the implementation of SPF, FLAG, and LOADF introduce additional ancilla registers, as depicted in Figs.~\ref{fig:SPF}, \ref{fig:FLAG}, \ref{fig:LOADF}, but these ancilla registers are similar to the F register: they begin in $\ket{0}$ if and only if they end in $\ket{0}$, regardless of the state elsewhere in the circuit. Thus, we separate the $\ell$ ancillae into two groups, the group of $\ell' = M+N/M-2$ ancillae in registers $A$ and $B$, and the other $\ell'' = \ell - \ell'$ ancillae. 

With this in mind, we can express
\begin{equation}
    R \otimes \ket{0^{\ell'}} \otimes \ket{0^{\ell''}} = U \Big[I_{n}\otimes I_{\ell'} \otimes I_{\ell''} - 2\ket{0^n}\bra{0^n}\otimes \ket{0^{\ell'}}\bra{0^{\ell'}}\otimes I_{\ell''}\Big] U^\dagger\Big[I_n \otimes \ket{0^{\ell'}}\otimes \ket{0^{\ell''}}\Big]\,.
\end{equation}
Let us verify the above formula. When the input is $\ket{\psi}$, the action of $U^\dagger$ yields $\ket{0^n}\ket{0^\ell}$, a sign is applied, and application of $U$ outputs the state $-\ket{\psi}\ket{0^\ell}$, as expected. When the input is $\ket{\perp}$ orthogonal to $\ket{\psi}$, application of $U^\dagger$ yields a state $\ket{\perp'}\ket{0^{\ell''}}$, where all we can guarantee is that $\ket{\perp'}$ is orthogonal to $\ket{0^n}\ket{0^{\ell'}}$. The reflection operation does not apply a sign, and subsequent application of $U$ outputs $\ket{\perp}\ket{0^\ell}$, as expected.  Crucially, this requires that we perform a reflection about both the $n$ data qubits \emph{and} the $\ell'$ ancillae that make up registers $A$ and $B$ all simultaneously being in $\ket{0}$ (but not the other $\ell''$ ancillae as they are already guaranteed to be in $\ket{0}$). A reflection about $t$ qubits being in the state $\ket{0}$ can be implemented in depth $O(\log(t))$ using $O(t)$ ancillae and $O(t)$ Toffoli gates by computing whether all qubits are set to $\ket{0}$ in a tree-like fashion, and applying a phase if the result is 1. Here $t = n + N/M + M -2$, so the depth is $O(n)$. The spacetime allocation is upper bounded by the depth times the number of active qubits, i.e.~$O(n \cdot \max(N/M,M))$. Since we choose $M$ such that $\max(N/M,M) = O(N/n)$, this spacetime allocation is at most $O(N)$. In conclusion, our state preparation method can be used to perform reflections about arbitrary states on $n$ qubits in depth $O(n)$ and spacetime allocation $O(N)$. 

\newpage

\subsection{Implementing projections and block-encodings involving projections}

Another operator where state-preparation is relevant is the projection onto the complement of an arbitrary state $\ket{\psi}$, that is 
\begin{equation}
    P = I - \ket{\psi}\bra{\psi}\,.
\end{equation}
This operator appears, for example, in the matrices that form the adiabatic path used in query-optimal quantum linear systems solvers \cite{costa2022optimal}. In that context, Appendix F of Ref.~\cite{costa2022optimal} explains how a block-encoding of the relevant matrix can be formed given a state preparation unitary that maps $\ket{0} \mapsto \ket{\psi}$. This larger block-encoding involves constructing a block-encoding of the projector $P$, which is reduced to implementing a reflection about $\ket{\psi}$ (see Figs.~1--6 of Ref.~\cite{dalzell2022socp}, especially Fig.~4, for a quantum circuit interpretation of the discussion in Appendix F of Ref.~\cite{costa2022optimal}). As illustrated above, our method can perform reflections with optimal depth and spacetime allocation, and thus can also be applied in this application.

\section{Complex amplitudes}\label{app:complex-amplitudes}

The constructions presented in the main text can prepare arbitrary states with real non-negative coefficients. Extending the construction to work for arbitrary coefficients is straightforward. The SP portion of the SP+CSP protocol is unchanged, as the state $\ket{\phi}$, as defined in Eq.~\eqref{eq:phi_def} is insensitive to any phases in the vector $\mathbf{x}$ (all $y_i$ are positive). On the other hand, the CSP portion of the protocol must be slightly modified. For each $k =0,\ldots,M-1$, $j = 0,\ldots, N/M-1$, define the phase  \begin{equation}
    e^{i\varphi^{(k)}_j} = 
    \frac{x_{j+kN/M}}{\lvert x_{j+kN/M} \rvert}
\end{equation}
(if $x_{j+kN/M}=0$, then the phase  can be defined arbitrarily). 

Next, we redefine the state $\ket{\theta^{(k)}_{s,p}}$:
\begin{equation}
    \ket{\theta^{(k)}_{s,p}} = \begin{cases}
    \cos(\theta^{(k)}_{s,p}/2)\ket{0} + \sin(\theta^{(k)}_{s,p}/2)\ket{1}& \text{if } s < n-m-1\\
    e^{i\varphi^{(k)}_{2p}}\cos(\theta^{(k)}_{s,p}/2)\ket{0} + e^{i\varphi^{(k)}_{2p+1}}\sin(\theta^{(k)}_{s,p}/2)\ket{1}& \text{if } s = n-m-1
    \end{cases}
\end{equation}

To prepare $\ket{\theta^{(k)}_{s,p}}$ we need to apply a single-qubit rotation about the $y$-axis, and then,  if $s=n-m-1$, we must apply a single qubit rotation about the $z$-axis, as well as a global phase (the global phase will be important when we add controls). To prepare a state with complex amplitudes, the only aspect of the protocol that has to change is the LOADF operation, which appears twice in the circuit for $U_{\mathrm{CSP}}$ in Fig.~\ref{fig:CSP}. As seen in Fig.~\ref{fig:LOADF}, implementing LOADF involves a doubly controlled $R_y$ gate. To account for complex amplitudes, we must augment this step with a doubly controlled $R_z$ gate, as well as a doubly controlled global phase (which is equivalent to a singly-controlled $R_z$ gate). These additions lead to at most a constant factor increase in the depth and spacetime allocation. In the approximate $\{\mathrm{H,S,T,CNOT}\}$ gate set, all rotations need only be synthesized to error $O(\epsilon/n)$ to achieve overall error $\epsilon$ in the state preparation protocol, requiring only $O(\log(n/\epsilon))$ gates per rotation. None of the complexity statements in the paper are impacted. 

\end{document}